\newtheorem{definition}{\textbf{Definition}}
\newtheorem{example}{\textbf{Example}}
\newtheorem{theorem}{\textbf{Theorem}}
\newtheorem{lemma}{\textbf{Lemma}}
\newtheorem{proposition}{\textbf{Proposition}}
\newtheorem{fact}{\textbf{Fact}}
\newtheorem{rule0}{\textbf{Rule}}
 \newenvironment{proof}
         { \noindent {\bf Proof.} }
         { \begin{flushright}$\Box$\end{flushright} }
\newcommand{\secref}[1]{Section~\ref{#1}}
\newcommand{\figref}[1]{Fig.~\ref{#1}}
\newcommand{\tabref}[1]{Tab.~\ref{#1}}
\newcommand{\eqnref}[1]{Eqn.~(\ref{#1})}
\newcommand{\eqnsref}[1]{Eqns.~(\ref{#1})}
\newcommand{\defref}[1]{Def.~\ref{#1}}
\newcommand{\thmref}[1]{Thm.~\ref{#1}}
\newcommand{\propref}[1]{Prop.~\ref{#1}}
\newcommand{\lemref}[1]{Lem.~\ref{#1}}
\newcommand{\proporef}[1]{Propo.~\ref{#1}}
\newcommand{\ruleref}[1]{Rule~\ref{#1}}
\newcommand{\factref}[1]{Fact~\ref{#1}}
\newcommand{\floor}[1]{\lfloor{#1}\rfloor}
\newcommand{\V}{\ensuremath{\mathcal{V}}\xspace}
\newcommand{\GA}{\gamma}
\newcommand{\GAH}{\hat{\gamma}}
\newcommand{\KA}{\kappa}
\newcommand{\PII}{\pi}
\newcommand{\ALP}{\alpha}
\newcommand{\ALPB}{\bar{\alpha}}
\newcommand{\ALPBH}{\hat{\bar{\alpha}}}
\newcommand{\BE}{\beta}
\newcommand{\BEB}{\bar{\beta}}
\newcommand{\SIG}{\sigma}
\newcommand{\SIB}{\bar{\sigma}}
\newcommand{\inv}{^{-1}}
\newcommand{\VCC}[1][noarg]{\ifthenelse{\equal{#1}{noarg}}{\ensuremath{\V}}{\ensuremath{\V_{#1}}}\xspace}
\newcommand{\VCCi}[1][noarg]{\ifthenelse{\equal{#1}{noarg}}{\ensuremath{\V\inv}}{\ensuremath{\V_{#1}\inv}}\xspace}
\newcommand{\uVCC}[1][noarg]{\ifthenelse{\equal{#1}{noarg}}{\ensuremath{\V^u}}{\ensuremath{\V_{#1}^u}}\xspace}
\newcommand{\lVCC}[1][noarg]{\ifthenelse{\equal{#1}{noarg}}{\ensuremath{\V^l}}{\ensuremath{\V_{#1}^l}}\xspace}
\newcommand{\uVCCi}[1][noarg]{\ifthenelse{\equal{#1}{noarg}}{\ensuremath{\V^{u\inv}}}{\ensuremath{\V_{#1}^{u\inv}}}\xspace}
\newcommand{\lVCCi}[1][noarg]{\ifthenelse{\equal{#1}{noarg}}{\ensuremath{\V^{l\inv}}}{\ensuremath{\V_{#1}^{l\inv}}}\xspace}
\newcommand{\G}[1][noarg]{\ifthenelse{\equal{#1}{noarg}}{\ensuremath{\GA}}{\ensuremath{\GA_{#1}}}\xspace}
\newcommand{\uG}[1][noarg]{\ifthenelse{\equal{#1}{noarg}}{\ensuremath{\GA^u}}{\ensuremath{\GA_{#1}^u}}\xspace}
\newcommand{\lG}[1][noarg]{\ifthenelse{\equal{#1}{noarg}}{\ensuremath{\GA^l}}{\ensuremath{\GA_{#1}^l}}\xspace}
\newcommand{\Gi}[1][noarg]{\ifthenelse{\equal{#1}{noarg}}{\ensuremath{\GA\inv}}{\ensuremath{\GA_{#1}\inv}}\xspace}
\newcommand{\uGi}[1][noarg]{\ifthenelse{\equal{#1}{noarg}}{\ensuremath{\GA^{u\inv}}}{\ensuremath{\GA_{#1}^{u\inv}}}\xspace}
\newcommand{\lGi}[1][noarg]{\ifthenelse{\equal{#1}{noarg}}{\ensuremath{\GA^{l\inv}}}{\ensuremath{\GA_{#1}^{l\inv}}}\xspace}
\newcommand{\uGh}[1][noarg]{\ifthenelse{\equal{#1}{noarg}}{\ensuremath{\GAH^u}}{\ensuremath{\GAH_{#1}^u}}\xspace}
\newcommand{\K}[1][noarg]{\ifthenelse{\equal{#1}{noarg}}{\ensuremath{\KA}}{\ensuremath{\KA_{#1}}}\xspace}
\newcommand{\uK}[1][noarg]{\ifthenelse{\equal{#1}{noarg}}{\ensuremath{\KA^u}}{\ensuremath{\KA_{#1}^u}}\xspace}
\newcommand{\lK}[1][noarg]{\ifthenelse{\equal{#1}{noarg}}{\ensuremath{\KA^l}}{\ensuremath{\KA_{#1}^l}}\xspace}
\newcommand{\uKi}[1][noarg]{\ifthenelse{\equal{#1}{noarg}}{\ensuremath{\KA^{u\inv}}}{\ensuremath{\KA_{#1}^{u\inv}}}\xspace}
\newcommand{\lKi}[1][noarg]{\ifthenelse{\equal{#1}{noarg}}{\ensuremath{\KA^{l\inv}}}{\ensuremath{\KA_{#1}^{l\inv}}}\xspace}
\newcommand{\PI}[1][noarg]{\ifthenelse{\equal{#1}{noarg}}{\ensuremath{\PII}}{\ensuremath{\PII_{#1}}}\xspace}
\newcommand{\uP}[1][noarg]{\ifthenelse{\equal{#1}{noarg}}{\ensuremath{\PII^u}}{\ensuremath{\PII_{#1}^u}}\xspace}
\newcommand{\lP}[1][noarg]{\ifthenelse{\equal{#1}{noarg}}{\ensuremath{\PII^l}}{\ensuremath{\PII_{#1}^l}}\xspace}
\newcommand{\uPi}[1][noarg]{\ifthenelse{\equal{#1}{noarg}}{\ensuremath{\PII^{u\inv}}}{\ensuremath{\PII_{#1}^{u\inv}}}\xspace}
\newcommand{\lPi}[1][noarg]{\ifthenelse{\equal{#1}{noarg}}{\ensuremath{\PII^{l\inv}}}{\ensuremath{\PII_{#1}^{l\inv}}}\xspace}
\newcommand{\AL}[1][noarg]{\ifthenelse{\equal{#1}{noarg}}{\ensuremath{\ALP}}{\ensuremath{\ALP_{#1}}}\xspace}
\newcommand{\Ab}[1][noarg]{\ifthenelse{\equal{#1}{noarg}}{\ensuremath{\ALPB}}{\ensuremath{\ALPB_{#1}}}\xspace}
\newcommand{\uA}[1][noarg]{\ifthenelse{\equal{#1}{noarg}}{\ensuremath{\ALP^u}}{\ensuremath{\ALP_{#1}^u}}\xspace}
\newcommand{\lA}[1][noarg]{\ifthenelse{\equal{#1}{noarg}}{\ensuremath{\ALP^l}}{\ensuremath{\ALP_{#1}^l}}\xspace}
\newcommand{\uAi}[1][noarg]{\ifthenelse{\equal{#1}{noarg}}{\ensuremath{\ALP^{u\inv}}}{\ensuremath{\ALP_{#1}^{u\inv}}}\xspace}
\newcommand{\lAi}[1][noarg]{\ifthenelse{\equal{#1}{noarg}}{\ensuremath{\ALP^{l\inv}}}{\ensuremath{\ALP_{#1}^{l\inv}}}\xspace}
\newcommand{\uAb}[1][noarg]{\ifthenelse{\equal{#1}{noarg}}{\ensuremath{\ALPB^u}}{\ensuremath{\ALPB_{#1}^u}}\xspace}
\newcommand{\lAb}[1][noarg]{\ifthenelse{\equal{#1}{noarg}}{\ensuremath{\ALPB^l}}{\ensuremath{\ALPB_{#1}^l}}\xspace}
\newcommand{\uAbi}[1][noarg]{\ifthenelse{\equal{#1}{noarg}}{\ensuremath{\ALPB^{u\inv}}}{\ensuremath{\ALPB_{#1}^{u\inv}}}\xspace}
\newcommand{\lAbi}[1][noarg]{\ifthenelse{\equal{#1}{noarg}}{\ensuremath{\ALPB^{l\inv}}}{\ensuremath{\ALPB_{#1}^{l\inv}}}\xspace}
\newcommand{\uAbh}[1][noarg]{\ifthenelse{\equal{#1}{noarg}}{\ensuremath{\ALPBH^u}}{\ensuremath{\ALPBH_{#1}^u}}\xspace}
\newcommand{\B}[1][noarg]{\ifthenelse{\equal{#1}{noarg}}{\ensuremath{\BE}}{\ensuremath{\BE_{#1}}}\xspace}
\newcommand{\Bb}[1][noarg]{\ifthenelse{\equal{#1}{noarg}}{\ensuremath{\BEB}}{\ensuremath{\BEB_{#1}}}\xspace}
\newcommand{\uB}[1][noarg]{\ifthenelse{\equal{#1}{noarg}}{\ensuremath{\BE^u}}{\ensuremath{\BE_{#1}^u}}\xspace}
\newcommand{\lB}[1][noarg]{\ifthenelse{\equal{#1}{noarg}}{\ensuremath{\BE^l}}{\ensuremath{\BE_{#1}^l}}\xspace}
\newcommand{\uBi}[1][noarg]{\ifthenelse{\equal{#1}{noarg}}{\ensuremath{\BE^{u\inv}}}{\ensuremath{\BE_{#1}^{u\inv}}}\xspace}
\newcommand{\lBi}[1][noarg]{\ifthenelse{\equal{#1}{noarg}}{\ensuremath{\BE^{l\inv}}}{\ensuremath{\BE_{#1}^{l\inv}}}\xspace}
\newcommand{\uBb}[1][noarg]{\ifthenelse{\equal{#1}{noarg}}{\ensuremath{\BEB^u}}{\ensuremath{\BEB_{#1}^u}}\xspace}
\newcommand{\lBb}[1][noarg]{\ifthenelse{\equal{#1}{noarg}}{\ensuremath{\BEB^l}}{\ensuremath{\BEB_{#1}^l}}\xspace}
\newcommand{\uBbi}[1][noarg]{\ifthenelse{\equal{#1}{noarg}}{\ensuremath{\BEB^{u\inv}}}{\ensuremath{\BEB_{#1}^{u\inv}}}\xspace}
\newcommand{\lBbi}[1][noarg]{\ifthenelse{\equal{#1}{noarg}}{\ensuremath{\BEB^{l\inv}}}{\ensuremath{\BEB_{#1}^{l\inv}}}\xspace}
\newcommand{\SI}[1][noarg]{\ifthenelse{\equal{#1}{noarg}}{\ensuremath{\SIG}}{\ensuremath{\SIG_{#1}}}\xspace}
\newcommand{\Sb}[1][noarg]{\ifthenelse{\equal{#1}{noarg}}{\ensuremath{\SIB}}{\ensuremath{\SIB_{#1}}}\xspace}
\newcommand{\uS}[1][noarg]{\ifthenelse{\equal{#1}{noarg}}{\ensuremath{\SIG^u}}{\ensuremath{\SIG_{#1}^u}}\xspace}
\newcommand{\lS}[1][noarg]{\ifthenelse{\equal{#1}{noarg}}{\ensuremath{\SIG^l}}{\ensuremath{\SIG_{#1}^l}}\xspace}
\newcommand{\uSb}[1][noarg]{\ifthenelse{\equal{#1}{noarg}}{\ensuremath{\SIB^u}}{\ensuremath{\SIB_{#1}^u}}\xspace}
\newcommand{\lSb}[1][noarg]{\ifthenelse{\equal{#1}{noarg}}{\ensuremath{\SIB^l}}{\ensuremath{\SIB_{#1}^l}}\xspace}
\newcommand{\task}[1][noarg]{\ifthenelse{\equal{#1}{noarg}}{\ensuremath{\mathrm{T}}}{\ensuremath{\mathrm{T{#1}}}}\xspace}
\newcommand{\PE}[1][noarg]{\ifthenelse{\equal{#1}{noarg}}{\ensuremath{\mathrm{PE}}}{\ensuremath{\mathrm{PE{#1}}}}\xspace}
\newcommand{\todo}[1]{%
    \ifthenelse{\boolean{clean}}{}{
        \noindent\color{blue}[{\small\textsc{$\bigstar$~ToDo(#1):}\/}]%
        \color{black}
    }
}%
\newcommand{\storyline}[1]{%
    \ifthenelse{\boolean{clean}}{}{
        \noindent\color{blue}{$\bigstar$ #1\/}%
        \color{black}
    }
}
\newcommand{\p}[1]{%
    \ifthenelse{\boolean{clean}}{}{
        \storyline{#1}
    }
}
\newcommand{\ask}[1]{%
    \ifthenelse{\boolean{clean}}{}{
        \color{blue}[{\small\it\textsc{$\bigstar$ (?):} #1 \/}]%
        \color{black}
    }
}%
\newcommand{\trans}[1]{%
    \ifthenelse{\boolean{clean}}{}{
        \color{green}#1%
        \color{black}
    }
}%
\newcommand{\pic}[1]{%
    \ifthenelse{\boolean{clean}}{}{
        \begin{center}%
        \color{red}[{\small\it\textsc{ Picture:} #1\/}]%
        \color{black}
        \end{center}
    }
}%
\newcommand{\ideas}{%
    \ifthenelse{\boolean{clean}}{}{
        \begin{center}
            \vspace{5mm}%
            \color{red}\texttt{\footnotesize --------------------} {\it\small ideas} \texttt{\footnotesize --------------------}
            \vspace{2mm}%
        \end{center}%
    }
}%
\newcommand{\testtext}{%
    \ifthenelse{\boolean{clean}}{}{
    \color{magenta}
    The brown {\bf fox} jumps over the lazy \emph{dog}. The quick
    brown fox jumps over the lazy dog. The quick brown fox jumps over
    the lazy dog. The quick brown fox jumps over the lazy dog. The
    quick brown fox jumps over the lazy dog.
    \color{black}
    }
}
\newcommand{\w}[1]{
    \ifthenelse{\boolean{clean}}{}{
        \underline{#1}%
    }
}%
\begin{document}

\title{Utilization-Based Scheduling of Flexible Mixed-Criticality Real-Time Tasks\thanks{This paper has been submitted to IEEE Transaction on Computers (TC) on Sept-09th-2016, and revised for two times on Jan-19th-2017 and Aug-28th-2017. The submission number on TC is TC-2016-09-0607. This paper is still under review by TC with minor revision. The screen-shot of submission history is also attached in appendix D. Email: chengang@cse.neu.edu.cn;csguannan@comp.polyu.edu.hk;yi@it.uu.se}}

\author{Gang~Chen,
        Nan~Guan,
        Di~Liu, 
				Qingqiang~He,
				Kai~Huang,
				Todor~Stefanov,
				Wang~Yi
}

\def \lo {low-criticality }
\def \hi {high-criticality }
\def \FMC {FMC }
\def \prealg {FG}
\def \micr {mission-critical }
\def \noncr {non-critical }
\newcommand{\sups}[1]{\sup\{#1\}}
\newcommand{\supst}[1]{\overline{\sup}\{#1\}}

\IEEEtitleabstractindextext{%
\begin{abstract}
\justifying{Mixed-criticality models are an emerging paradigm for the design of real-time systems because of their significantly improved resource efficiency. However, formal mixed-criticality models have traditionally been characterized by two impractical assumptions: once \textit{any} high-criticality task overruns, \textit{all} low-criticality tasks are suspended and \textit{all other} high-criticality tasks are assumed to exhibit high-criticality behaviors at the same time. In this paper, we propose a more realistic mixed-criticality model, called the flexible mixed-criticality (FMC) model, in which these two issues are addressed in a combined manner. In this new model, only the overrun task itself is assumed to exhibit high-criticality behavior, while other high-criticality tasks remain in the same mode as before. The guaranteed service levels of low-criticality tasks are gracefully degraded with the overruns of high-criticality tasks. We derive a utilization-based technique to analyze the schedulability of this new mixed-criticality model under EDF-VD scheduling. During run time, the proposed test condition serves an important criterion for dynamic service level tuning, by means of which the maximum available execution budget for low-criticality tasks can be directly determined with minimal overhead while guaranteeing mixed-criticality schedulability. Experiments demonstrate the effectiveness of the FMC scheme compared with state-of-the-art techniques.
\vskip -1.2em
}
\end{abstract}

\begin{IEEEkeywords}
EDF-VD Scheduling, Flexible Mixed-Criticality System, Utilization-Based Analysis 
\vskip -0.8em
\end{IEEEkeywords}}
\vskip -1em

\maketitle
\vskip -2em
\IEEEdisplaynontitleabstractindextext
\vskip -2em
\IEEEpeerreviewmaketitle
\vskip -4em

\section{Introduction}
A mixed-criticality (MC) system is a system in which tasks with different criticality levels share a computing platform. In MC systems, different degrees of assurance must be provided for tasks with different criticality levels. To improve resource efficiency, MC systems \cite{Vestal2007} often specify different WCETs for each task at all existing criticality levels, with those at higher criticality levels being more pessimistic. Normally, tasks are scheduled with less pessimistic WCETs for resource efficiency. Only when the less pessimistic WCET is violated, the system switches to the \hi mode and \textit{only} tasks with higher criticality levels are guaranteed to be scheduled with pessimistic WCETs thereafter.

There is a large body of research work on specifying and scheduling mixed-criticality systems (see \cite{burns2015mixed} for a comprehensive review). However, to ensure the safety of \hi tasks, the classic MC model~\cite{Baruah2012,sanjoyACM,Baruah2010,Baruah2011RTSS,Baruah2011} applies conservative restrictions to the mode-switching scheme. In the classic MC model, whenever \textit{any} \hi task overruns, \textit{all} \lo tasks are immediately abandoned and \textit{all other} \hi tasks are assumed to exhibit \hi behaviors. This mode-switching scheme is not realistic in the following two important respects. 
\begin{itemize}
\item First, it is pessimistic to immediately abandon all \lo tasks, because \lo tasks require a certain timing performance as well~\cite{ISO262262,Su2014}. 
\item Second, it is pessimistic to bind the mode switches of all \hi tasks together for the scenarios where the mode switches of \hi tasks are naturally independent~\cite{gu,ren}.
\end{itemize}
Although there has been some research on solving the first problem, i.e., \textit{statically} reserving a certain degraded level of service for \lo execution~\cite{burns2013towards,Su2014,Su2013,HGST14a}, to our knowledge, little work has been done to date to address the second problem.           

In this paper, we propose a flexible MC model (denoted as \textbf{FMC}) on a uni-processor platform, in which the two aforementioned issues are addressed in a combined manner. In FMC, the mode switches of all \hi tasks are independent. A single \hi task that violates its \lo WCET triggers only itself into \hi mode, rather than triggering \textit{all} \hi tasks. All other \hi tasks remain at their previous criticality levels and thus do not require to book additional resources at mode-switching points. In this manner, significant resources can be saved compared with the classic MC model~\cite{Baruah2010,Baruah2011,Baruah2011RTSS}. On the other hand, these saved resources can be used by \lo tasks to improve their service quality. More importantly, the proposed FMC model adaptively tunes the service level for \lo tasks to compensate for the overrun of \hi tasks, thereby allowing the system workload to be balanced with minimal service degradation for \lo tasks. At each independent mode-switching point, the service level for \lo tasks is dynamically updated based on the overruns of \hi tasks. By doing so, the quality of service (QoS) for \lo tasks can be significantly improved. 

Since the service level for \lo tasks is dynamically determined during run time, the decision-making procedure should be light-weighted. For this purpose, utilization-based scheduling is more desirable for run-time decision-making because of its simplicity. However, using utilization-based scheduling for our FMC model brings new challenges due to the intrinsic dynamics of this model, such as the service level tuning strategy. In particular, utilization-based schedulability analysis relies on whether the cumulative execution time of \lo tasks can be \textit{effectively} upper bounded. In FMC, the service levels for \lo tasks are dynamically tuned at each mode switching point. Therefore, the cumulative execution time of \lo tasks strongly depends on when mode switches occur. In general, such information is difficult to explicitly represent prior to real execution, because the independence of the mode switches in FMC results in a large analysis space. It is computationally infeasible to analyze all possibilities. To resolve this challenge, we propose a novel approach based on mathematical induction, which allows the cumulative execution time of \lo tasks to be \textit{effectively} upper bounded.   

In this work, we study the schedulability of the proposed \FMC model under EDF-VD scheduling. A utilization-based schedulability test condition is derived by integrating the independent triggering scheme and the adaptive service level tuning scheme. A formal proof of the correctness of this new schedulability test condition is presented. Based on this test condition, an EDF-VD-based MC scheduling algorithm, called FMC-EDF-VD, is proposed for the scheduling of an FMC task system. During run time, the optimal service level for \lo tasks can be directly determined via this condition with minimum overhead, and mixed-criticality schedulability can be simultaneously guaranteed. In addition, we explore the feasible region of the virtual deadline factor for FMC model. Simulation results show that FMC-EDF-VD provides benefits in supporting \lo execution compared with state-of-the-art algorithms.  
\vskip -2em

\section{Related Work}
\label{sec:rw}
Mixed-criticality scheduling is a research field that has received considerable attention in recent years. As stated in \cite{burns2013towards}, much existing research work~\cite{Baruah2010,Baruah2011,Baruah2011RTSS} on MC scheduling makes the pessimistic assumption that all \lo tasks are immediately abandoned once the system enters \hi mode. Instead of abandoning all \lo tasks, some efforts~\cite{burns2013towards,Su2014,Su2013,HGST14a,DiLiu} have been made to provide solutions for offering \lo tasks a certain degraded service quality when the system is in \hi mode. Nevertheless, these studies still use a pessimistic mode-switch triggering scheme in which, whenever one \hi task overruns, all other \hi tasks are triggered to exhibit \hi behavior and book unnecessary resources.

Recent work presented in \cite{gu,ren,huang} offers solutions for improving performance for \lo tasks by using different mode-switch triggering strategies. Huang et al. \cite{huang} proposed an interference constraint graph to specify the execution dependencies between \hi and \lo tasks. However, this approach still uses high-confidence WCET estimates for all \hi tasks when determining system schedulability, and therefore does not address the second problem discussed above. Gu et al. \cite{gu} presented a component-based strategy in which the component boundaries offer the isolation necessary to support the execution of \lo tasks. Minor overruns can be handled with an internal mode switch by dropping off all \lo jobs within a component. More extensive overruns will result in a system-wide external mode switch and the dropping off of all \lo jobs. Therefore, the mode switches at the internal and external levels still use pessimistic strategy in which all \lo tasks are abandoned once a mode switch occurs at the corresponding level. The two problems mentioned above still exist at both levels. In addition, the system schedulability is tested using a demand bound function (DBF) based approach. The complexity of the schedulability test is exponential in the size of the input~\cite{gu}, resulting in costly computations.    

Ren and Phan \cite{ren} proposed a partitioned scheduling algorithm based on group-based Pfair-like scheduling~\cite{Pfairs} for mixed-criticality systems. Within a task group, a single \hi task is encapsulated with several \lo tasks. The tasks are scheduled via Pfair-like scheduling~\cite{Pfairs} by breaking them into quantum-length sub-tasks. Sub-tasks that belong to different groups are scheduled on an earliest-pseudo-deadline-first (EPDF) basis. Pfair scheduling is a well-known optimal scheduling method for scheduling periodic real-time tasks on a multiple-resource system. However, Pfair scheduling poses many practical problems~\cite{Pfairs}. First,  the Pfair algorithm incurs very high scheduling overhead because of frequent preemptions caused by the small quantum lengths. Second, the task groups are explicitly required to be well synchronized and to make progress at a steady rate~\cite{ZhuParis}. Therefore, the work presented in \cite{ren} strongly relies on the periodic task models. In addition, the system schedulability in \cite{ren} is determined by solving a MINLP problem, which in general has NP-hard complexity\cite{MINLP}. Because of  this complexity, the scalability problem needs to be carefully considered.

Compared with the existing work \cite{gu,ren}, the proposed FMC model and its scheduling techniques offer both \textbf{simplicity and flexibility}. In particular, our work differs from these approaches in the following respects. Compared with the Pfair-based scheduling method~\cite{ren} which relies on periodic task models, our paper derives an EDF-VD-based scheduling scheme for sporadic mixed-criticality task systems, that incorporates an independent mode-switch triggering scheme and an adaptive service level tuning scheme. EDF-VD has shown strong competence in both theoretical and empirical evaluations~\cite{Baruah2012}. Compared with the work presented in \cite{gu}, our approach uses a more flexible strategy that allows a component/system to abandon  \lo tasks in accordance with run-time demands. Therefore, both of the problems stated above are addressed in our approach. In contrast to the work of \cite{gu,ren}, our approach is based on a utilization-based schedulability analysis. The system schedulability can be effectively determined. From the designer's perspective, our utilization-based approach requires simpler specifications and reasoning compared with the work of \cite{ren,gu}. In terms of flexibility, our approach can efficiently allocate execution budgets for \lo tasks during runtime in accordance with demands, whereas the approaches presented in \cite{gu,ren} require that \lo tasks should be executed in accordance with the dependencies between \lo and \hi tasks that have been determined in off-line. 
\vskip -2em

\section{System Models and Background}
\label{sec:model}
\subsection{FMC implicit-deadline sporadic task model}
\noindent\textbf{Task model}: We consider an MC system with two different criticality levels, \textit{HI} and \textit{LO}. The task set $\gamma$ contains $n$ MC implicit-deadline sporadic tasks which are scheduled on a uni-processor platform. Each task $\tau_i$ in $\gamma$ generates an infinite sequence of jobs and can be specified by a tuple $\{T_i,L_i,\mathcal{C}_i\}$. Here, $T_i$ denotes the minimum job-arrival intervals. $L_i\in\{LO,HI\}$ denotes the criticality level of a task. Each task is either a \lo task or \hi task. $\gamma_{LO}$ and $\gamma_{HI}$ (where $\gamma = \gamma_{LO} \cup \gamma_{HI}$) denote \lo task set and \hi task set, respectively. $\mathcal{C}_i \in \{C_i^{LO},C_i^{HI}\}$ is the list of WCETs, where $C_i^{LO}$ and $C_i^{HI}$ denote the \lo and \hi WCETs, respectively. 

For \hi tasks, the WCETs satisfy $C_i^{LO}<C_i^{HI}$. For \lo tasks, their execution budget is dynamically determined in FMC based on the overruns of \hi tasks. To characterize the execution behavior of \lo tasks in \hi mode, we now introduce the concept of the service level on each mode-switching point, which specifies the guaranteed service quality after the mode switch. 

\noindent\textbf{Service level}: Instead of completely discarding all \lo tasks, Burns and Baruah in \cite{burns2013towards} proposed a more practical MC task model in which \lo tasks are allowed to statically reserve resources for their execution
at a degraded service level in \hi mode (i.e., a reduced execution budget). By contrast, in FMC, the execution budget is dynamically determined based on the run-time overruns rather than statically reserved as in \cite{burns2013towards}. To model this dynamic behavior, the service level concept defined in \cite{burns2013towards} should be extended to apply to  independent mode switches. Therefore, we define the service level $z_i^k$ when the system has undergone \textit{$k$} mode switches. 
\begin{definition}
\label{def:6}
(Service level $z_i^k$ when \textit{$k$} mode switches have occurred).
If \lo task $\tau_i$ is executed at service level $z_i^k$ when the system has undergone \textit{$k$} mode switches, up to $z_i^k\cdot{}C_i^{LO}$ time units can be used for the execution of $\tau_i$ in one period $T_i$. When $\tau_i$ runs in \lo mode, we say $\tau_i$ is executed at service level $z_i^0$, where $z_i^0=1$.   
\end{definition} 
The service level definition given above is compliant with the concept of the \textit{imprecise computation model} developed by Lin et al.\cite{LinIMC} to deal with time-constrained iterative calculations. 
\textit{Imprecise computation model} is partly motivated by the observation that many real-time computations are iterative in nature, solving a numeric problem by successive approximations. Terminating an iteration early can return useful imprecise results. With this motivation in mind, the imprecise computation model can be used in a natural way to enhance graceful degradation~\cite{LinIMC1}.
The practicality of \textit{imprecise computation model} has been deeply investigated and verified in \cite{IMCR3}. Imprecise computation model provides an approximate but timely result, which may be acceptable in many application areas. Examples of such applications are optimal control~\cite{RateIMC}, multimedia applications~\cite{Rajkumar1997A}, image and speech processing~\cite{Feng}, and fault-tolerant scheduling problems~\cite{Han2003A}.
In FMC, when an overrun occurs, \lo tasks will be terminated before completion and sacrifice the quality of the produced results to ensure their timing correctness.

\noindent\textbf{Assumptions}: For the remainder of the manuscript, we make the following assumptions: (1) Regarding the compensation for the $k^{\textit{th}}$ overrun of a \hi task, we assume that $z_i^k\le{z_i^{k-1}}$. After the $k^{\textit{th}}$ mode-switching point, the allowed execution time budget in one period should thus be reduced from $z_i^{k-1}\cdot{c_i^{LO}}$ to $z_i^{k}\cdot{c_i^{LO}}$. 
(2) According to~\cite{Baruah2012}, if $u_{LO}^{LO}+u_{HI}^{HI}\le{1}$, then all tasks can be perfectly scheduled by regular EDF under the worst-case reservation strategy. Therefore, we here consider meaningful cases in which $u_{LO}^{LO}+u_{HI}^{HI}>{1}$. 

\noindent\textbf{Utilization}: Low and high utilization for a task $\tau_i$ are defined as $u_i^{LO}=\frac{c_i^{LO}}{T_i}$ and $u_i^{HI}=\frac{c_i^{HI}}{T_i}$, respectively. The system-level utilization for task set $\gamma$ are defined as $u_{LO}^{LO}=\sum_{\tau_i\in{\gamma_{LO}}}u_i^{LO}$, $u_{HI}^{LO}=\sum_{\tau_i\in{\gamma_{HI}}}u_i^{LO}$, and $u_{HI}^{HI}=\sum_{\tau_i\in{\gamma_{HI}}}u_i^{HI}$. The system utilization of \lo tasks after $k^{\textit{th}}$ mode-switching point can be defined as $u_{LO}^{k}=\sum_{\tau_i\in{\gamma_{LO}}}z_i^k\cdot{}u_i^{LO}$. To guarantee the execution of the mandatory portions of \lo tasks, the mandatory utilization can be defined as $u_{LO}^{man}=\sum_{\tau_i\in{\gamma_{LO}}}z_{i}^{man}\cdot{}u_i^{LO}$, where $z_{i}^{man}$ is the mandatory service level for task $\tau_i$ as specified by the users.

\subsection{Execution semantics of the FMC model}
\label{sec:sem}
The main \textbf{differences} between our FMC execution model and the classic MC execution model lie in the independent mode-switch triggering scheme for \hi tasks and the dynamic service tuning of \lo tasks. In contrast to the classic MC model, the FMC model allows an independent triggering scheme in which the overrun of one \hi task triggers only itself into \hi mode. Consequently, the \hi mode of the system in FMC depends on the number of \hi tasks that have overrun. Therefore, we introduce the following definition:
\begin{definition}
\label{def:61}
($k$-level \hi mode). At a given instant of time, if $k$ \hi tasks have entered \hi mode, then the system is in \textbf{$k$-level \hi mode}. For \lo mode, we say that the system is in \textbf{$0$-level \hi mode}.  
\end{definition}    
Based on \defref{def:61}, the execution semantics of the FMC model is illustrated in \figref{fig:sem}. Initially, the system is in \lo mode (i.e., \textbf{$0$-level \hi mode}). Then, the overruns of \hi tasks trigger the system to proceed through the \hi modes one by one until the condition for transitioning back is satisfied. According to \figref{fig:sem}, the execution semantics can be summarized as follows:
\begin{figure}
\centering
\label{fig:sec1} 
\includegraphics[width=0.95\columnwidth,height=0.37\columnwidth]{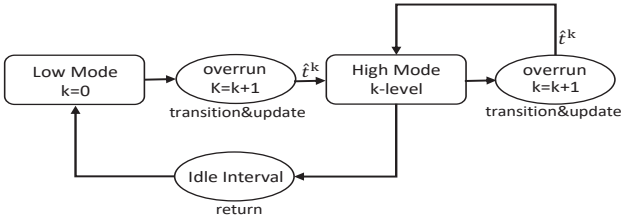}
\vskip -0.5em
 \caption{Execution semantics of the FMC model.}
  \label{fig:sem}
	\vskip -1em
\end{figure}

\begin{itemize}
\item \textbf{Low-criticality mode}: All tasks in $\gamma$ start in $0$-level \hi mode (i.e., \lo mode). As long as no \hi task violates its $C_i^{LO}$, the system remains in $0$-level \hi mode. In this mode, all tasks are scheduled with $C_i^{LO}$. 
\item \textbf{Transition}: When one job of a \hi task \textit{that is being executed in \lo mode} overruns its $C_i^{LO}$, this \hi task immediately switches into \hi mode. However, the overrun of this task does not trigger other \hi tasks to enter \hi mode. All other \hi tasks still remain in the \textbf{same} mode as before. Correspondingly, the system transitions to a higher-level \hi mode\footnote{Without loss of generality, we assume that the system is in $k$-level \hi mode.}.
\item \textbf{Updates}: At the $k^{\textit{th}}$ transition point (corresponding to time instant $\hat{t}^{k}$ in \figref{fig:sem}), a new service level $z_i^{k}$ is determined and updated to provide degraded service for \lo tasks $\tau_i$ to balance the resource demand caused by the overrun of the \hi task. At this time, if any \lo jobs have completed more than $z_i^{k}\cdot{c_i^{LO}}$ time units of execution (i.e., have used up the updated execution budget for the current period), those jobs will be suspended immediately and wait for the budget to be renewed in the next period. Otherwise, \lo jobs can continue to use the remaining time budget for their execution.
\item \textbf{Return to \lo mode}: When the system detects an idle interval~\cite{burns2013towards,Back}, the system will transition back into \lo mode.  
\end{itemize}

\subsection{EDF-VD scheduling}
EDF-VD~\cite{Baruah2012,sanjoyACM} is a scheduling algorithm for implementing classic preemptive EDF scheduling in MC systems. The main concept of EDF-VD is to artificially reduce the (virtual) deadlines of \hi tasks when the system is in \lo mode. These virtual deadlines can be used to cause \hi tasks to finish earlier to ensure that the system can reserve a sufficient execution budget for the \hi tasks to meet their actual deadlines after the system switches into \hi mode. In this paper, we study the schedulability under EDF-VD scheduling for the proposed \FMC model.

\section{FMC-EDF-VD scheduling algorithm}
\label{sec:overview}
In this section, we provide an overview of the proposed EDF-VD-based scheduling algorithm for our \FMC model, called FMC-EDF-VD. The proposed scheduling algorithm consists of an \textit{off-line step} and a \textit{run-time step}. We implement the \textit{off-line step} prior to \textit{run time} to select a feasible virtual deadline factor $x$ for tightening the deadlines of \hi tasks. During run time, the service levels $z_i^k$ for \lo tasks are dynamically tuned based on the overrun of \hi tasks. Here, we present the operation flow of FMC-EDF-VD.
    
\noindent\textbf{Off-line step}: In accordance with \thmref{theorem:lowcrit}, we first determine $x$ as $\frac{u_{HI}^{LO}}{1-u_{LO}^{LO}}$. Then, to guarantee the schedulability of FMC-EDF-VD, the determined $x$ value should be validated by testing condition \eqnref{eq:thm5:m0} in \thmref{thm:5}. Note that if condition \eqnref{eq:thm5:m0} is not satisfied, then it is reported that the specified task set cannot be scheduled using FMC-EDF-VD.  
         
\noindent\textbf{Run-time step}: The run-time behavior follows the execution semantics presented in \secref{sec:sem}. In \lo mode, all \hi tasks are scheduled with their virtual deadlines. At each mode-switching point, the following two procedures are triggered:  
\begin{itemize}
\item Reset the deadline of overrun \hi task from its virtual deadline to the actual deadline. The deadline settings of other \hi tasks remain the same as before. 
\item Update the service levels for \lo tasks in accordance with \thmref{thm:3}.       
\end{itemize}

Note that various run-time tuning strategies can be specified by the user as long as the condition in \thmref{thm:3} is satisfied. For the purpose of demonstration, a uniform tuning strategy and a dropping-off strategy are discussed in this paper. Complete descriptions of these strategies are provided in \secref{sec:slts}.
\vskip -1em
\begin{table}[h]
\centering
\caption{Example task set}
\vskip -0.5em
\label{ex:e3}
\begin{tabular}{|c|c|c|c|c|}
\hline
                                 & $L_i$ & $T_i$    & $C_i^{LO} $    & $C_i^{HI}$   \\ \hline
 $\tau_1, \tau_2, \tau_3, \tau_4$&  HI   &  $40$     & $3$        & $8$              \\  \hline
 $\tau_5$                        &  LO   &  $200$    & $30$      &           \\   \hline
 $\tau_6$                        &  LO   &  $300$    & $75$      &           \\   \hline
\end{tabular}
\vskip -1.5em
\end{table}

\subsection{Motivational example}
In this section, we present a motivation example to show how the global triggering scheme in FMC-EDF-VD can efficiently support \lo task execution. The uniform tuning strategy (see \thmref{thm:aa}), in which all \lo tasks share the same service level setting $z^k$ during run time (i.e., $\forall\tau_i\in{}\gamma_{LO}$, $z_i^k=z^k$), is adopted for this demonstration.   
\begin{example}
\label{example1}
For clarity of presentation, we consider a task set that contains four identical \hi tasks and two \lo tasks, as listed in \tabref{ex:e3}. We specify $u_{LO}^{man}=0$ for demonstration. From \tabref{ex:e3}, one can derive $u_{LO}^{LO}=\frac{2}{5}$, $u_{HI}^{LO}=\frac{3}{10}$, and $u_{HI}^{HI}=\frac{4}{5}$. 
\end{example}

According to \thmref{thm:aa}, we can compute the uniform service levels $z^k$ for all possible mode-switching scenarios. The results are listed in \tabref{ex:e2}.

\begin{table}[h]
\centering
\caption{Low-criticality service levels}
\label{ex:e2}
\begin{tabular}{|c|c|c|c|c|}
\hline
Number of Overrun $k$    & $1$   & $2$    & $3$    & $4$  \\ \hline
Utilization $u_{LO}^{k}$ & 0.3   & 0.2    & 0.1    & 0  \\ \hline
Service Level $z^k$      & 0.75  & 0.5    & 0.25   & 0  \\ \hline
Execution Budget of $\tau_5$      & 22.5   & 15      & 7.5      & 0 \\ \hline
Execution Budget of $\tau_6$      & 56.25   & 37.5      & 18.75      & 0 \\ \hline
\end{tabular}
\end{table}

As shown in \tabref{ex:e2}, FMC-EDF-VD can efficiently support \lo task execution by dynamically tuning the \lo execution budget based on overrun demand. When only one \hi task overruns, \lo task $\tau_5$ and $\tau_6$ can use up to 22.5 and 56.25 time units per period for execution. In this case, \lo tasks can maintain 75\% execution. Only when all \hi tasks overrun their $C_i^{L}$, \lo tasks are all dropped. For comparison, the global triggering strategy used in~\cite{burns2013towards,DiLiu} are always required to drop all \lo tasks regardless of how many overruns occur during run time because of the overapproximation of the overrun workload. From a probabilistic perspective, the likelihood that all \hi tasks will exhibit \hi behavior is very low in practice. Therefore, in a typical case, only a few \hi tasks will overrun their $C_i^L$ during a busy interval. In most cases, FMC-EDF-VD will only need to schedule resources for a portion of \hi tasks based on their overrun demands and can maintain the service level for \lo task execution to the greatest possible extent. In this sense, FMC-EDF-VD can provide better and more graceful service degradation.
\vskip -3em 

\section{Schedulability Test Condition}
\label{sec:proof1}
\label{sec:correct}
In this section, we present a utilization-based schedulability test condition for the FMC-EDF-VD scheduling algorithm. We start by ensuring the schedulability of the system when it is operating in \lo mode (\thmref{theorem:lowcrit}). Then, we discuss how to derive a sufficient condition to ensure the schedulability of the algorithm after $k$ mode switches (\thmref{thm:3}). Based on several sophisticated new techniques, the correctness of this new schedulability test condition can be proven and the formal proof is provided in \secref{sec:corr}. Finally, we derive the region of $x$ that can guarantee the feasibility of the proposed scheduling algorithm.  
\subsection{Low-criticality mode}
In \lo mode, the system behaviors in FMC are exactly the same as in EDF-VD~\cite{Baruah2012}. Therefore, we can use the following theorem presented in \cite{Baruah2012} to ensure the schedulability of tasks in \lo mode. 

\begin{theorem}
\label{theorem:lowcrit}
 The following condition is sufficient to ensure that EDF-VD can successfully schedule all tasks in \lo mode:
 {\small
 \begin{equation}
  \label{equation:edvVD_low_bound}
u_{LO}^{LO}+\frac{u_{HI}^{LO}}{x}\le{1}
 \end{equation}
 }
\end{theorem}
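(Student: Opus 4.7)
The plan is to reduce the statement to a standard density test for EDF on constrained-deadline sporadic task systems. In LO mode the FMC execution semantics coincides exactly with the classical EDF-VD semantics: every job is released with WCET $C_i^{LO}$, LO tasks keep their native deadlines $T_i$, and every HI task $\tau_i$ has its deadline tightened to a virtual deadline $\hat{D}_i = x \cdot T_i$ with $x \in (0,1]$. Hence the LO-mode schedulability question is nothing more than: is the resulting constrained-deadline sporadic task system schedulable by preemptive EDF on a uniprocessor?

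First I would invoke the classical density-based sufficient condition for EDF: a constrained-deadline sporadic system is EDF-schedulable whenever $\sum_i C_i / \min(D_i,T_i) \le 1$. Applying this to the LO-mode instance, the contribution of a LO task $\tau_i \in \gamma_{LO}$ is $C_i^{LO}/T_i = u_i^{LO}$ (since the deadline is not shortened), while the contribution of a HI task $\tau_i \in \gamma_{HI}$ is $C_i^{LO}/(x T_i) = u_i^{LO}/x$ (since $x \le 1$ forces the virtual deadline to be the binding one). Summing over all tasks yields
\begin{equation*}
\sum_{\tau_i \in \gamma_{LO}} u_i^{LO} \;+\; \sum_{\tau_i \in \gamma_{HI}} \frac{u_i^{LO}}{x} \;=\; u_{LO}^{LO} + \frac{u_{HI}^{LO}}{x},
\end{equation*}
so the hypothesis of the theorem is precisely the density condition for the virtual-deadline instance, and the conclusion follows immediately.

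Since this argument is essentially a restatement of the LO-mode half of the analysis in \cite{Baruah2012,sanjoyACM}, I would present it as a short derivation that cites the original EDF density bound rather than re-proving it. There is no real obstacle here; the only subtlety worth flagging is that the condition is stated over the WCETs $C_i^{LO}$ (not $C_i^{HI}$), which is legitimate in LO mode because by the mode definition no HI job has yet exceeded $C_i^{LO}$, and therefore the EDF density test applied to the virtual-deadline instance genuinely bounds the LO-mode workload. Nothing in this step depends on the FMC-specific independent triggering or the dynamic service-level tuning, which is why the LO-mode bound is inherited verbatim from the classical EDF-VD result.
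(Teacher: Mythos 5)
Your proposal is correct and is essentially the paper's own approach: the paper proves \thmref{theorem:lowcrit} simply by observing that LO-mode behavior under FMC is identical to classical EDF-VD and citing the LO-mode result of Baruah et al.~\cite{Baruah2012}, and the density argument you spell out (LO tasks contributing $u_i^{LO}$, HI tasks with virtual deadlines $xT_i$ contributing $u_i^{LO}/x$, summed against the EDF density bound) is exactly the reasoning underlying that cited result. No gap; your version just makes the inherited argument explicit.
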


\subsection{High-criticality mode after $k$ mode switches}
In this section, we analyze the schedulability of the FMC-EDF-VD algorithm during the transition phase. With this analysis, we provide the answer to the question of how much execution budget can be reserved for \lo tasks while ensuring a schedulable system for mode transitions. Without loss of generality, we consider a general transition case in which the system transitions from $(k-1)$-level \hi mode to $k$-level \hi mode. Here, we first introduce the derived schedulability test condition in \thmref{thm:3}. Then, the formal proof of the correctness of this schedulability test condition is provided in \secref{sec:corr}. Recall that $u_{LO}^{k}$ denotes the utilization of \lo tasks for the $k^{\textit{th}}$ mode-switching point and is defined as $u_{LO}^{k}=\sum_{\tau_i\in{\gamma_{LO}}}z_i^k\cdot{}u_i^{LO}$.   
 
\begin{theorem}
\label{thm:3}
The system is in $(k-1)$-level \hi mode. For the $k^{\textit{th}}$ mode-switching point $\hat{t}^k$, when \hi task $\tau_{\hat{t}^k}$ overruns, the system is schedulable at $\hat{t}^k$ if the following conditions are satisfied:
\begin{footnotesize}  
\begin{align}
\label{eq:thm4:m0}
u_{LO}^{k}&\le{}u_{LO}^{k-1}+\frac{\frac{u_{\hat{t}^k}^{LO}}{u_{HI}^{LO}}(1-u_{LO}^{LO})-u_{\hat{t}^k}^{HI}}{(1-x)} \\
\label{eq:thm4:m01}
z_i^k&\le{}z_i^{k-1}\ \ (\forall \tau_i \in \gamma_{LO})
\end{align}
\end{footnotesize}
where $u_{\hat{t}^k}^{LO}$ and $u_{\hat{t}^k}^{HI}$ denote low and high utilization, respectively, for the \hi task $\tau_{\hat{t}^k}$ that undergoes a mode switch at $\hat{t}^k$.    
\end{theorem}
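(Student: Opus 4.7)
The plan is a proof by contradiction along the lines of the canonical EDF-VD analysis, extended to handle the independent mode-switching triggering and dynamically tuned low-criticality service levels of FMC. Assume the claim fails: some job of a task $\tau_j$ misses its deadline at some time $t_d \ge \hat{t}^k$. Let $[t_0,t_d]$ be the maximal processor-busy interval ending at the miss. If $\hat{t}^k \notin [t_0,t_d]$, the claim for $(k-1)$-level \hi mode already precludes the miss, so we may assume $\hat{t}^k \in [t_0,t_d]$. The goal is to upper-bound the total demand in $[t_0,t_d]$ by $t_d-t_0$ and derive a contradiction.

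Next I would partition $[t_0,t_d]$ by the mode-switching instants $\hat{t}^1,\ldots,\hat{t}^k$ that fall inside it and decompose the demand by source on each resulting epoch. Three classes of demand must be bounded: (i) \hi tasks that have \emph{already} switched, which contribute jobs with actual deadlines and WCETs $C_i^{HI}$; (ii) \hi tasks not yet switched, which execute under shrunk virtual deadlines with WCETs $C_i^{LO}$ and whose demand is governed by the $1/x$ inflation standard to EDF-VD; (iii) \lo tasks, where task $\tau_i$ contributes at a rate of at most $z_i^{j}\,u_i^{LO}$ on the epoch $[\hat{t}^{j-1},\hat{t}^{j}]$. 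Each sub-interval's demand is linearized using the standard carry-in/carry-out bookkeeping from sporadic task analysis, so the global demand becomes an affine function of the epoch lengths.

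The main obstacle, flagged explicitly in the introduction, is bounding the cumulative \lo demand across all epochs without enumerating every possible mode-switching history. I would handle this by strong induction on $k$. The base case is exactly \thmref{theorem:lowcrit}. For the inductive step, the hypothesis applied at $\hat{t}^{k-1}$ rules out any miss before $\hat{t}^k$ and caps the \lo demand up to $\hat{t}^k$ at $u_{LO}^{k-1}(\hat{t}^k-t_0)$. The only new effects on $[\hat{t}^k,t_d]$ are the replacement of the virtual contribution $u_{\hat{t}^k}^{LO}/x$ of the switching task by its actual contribution $u_{\hat{t}^k}^{HI}$, and the reduction of the aggregate \lo rate from $u_{LO}^{k-1}$ to $u_{LO}^{k}$. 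Condition \eqnref{eq:thm4:m01} is used precisely to ensure that any in-flight \lo job can safely truncate its remaining budget at $\hat{t}^k$, so that the epoch-wise linearization remains valid.

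Finally, the hardest step is the algebraic reduction that turns the epoch-wise demand inequality into exactly the form of \eqnref{eq:thm4:m0}. Exploiting the identity $x(1-u_{LO}^{LO})=u_{HI}^{LO}$ from \thmref{theorem:lowcrit}, the change in $\tau_{\hat{t}^k}$'s contribution at the switch equals $u_{\hat{t}^k}^{HI}-u_{\hat{t}^k}^{LO}/x = u_{\hat{t}^k}^{HI}-\frac{u_{\hat{t}^k}^{LO}}{u_{HI}^{LO}}(1-u_{LO}^{LO})$, and this excess must be absorbed by the reduction in \lo utilization, scaled by a factor of $(1-x)$ that arises from the virtual-deadline tightening of the remaining non-switched \hi tasks over $[\hat{t}^k,t_d]$. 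Equating the resulting upper bound on total demand to the supply $t_d-t_0$ yields precisely the inequality \eqnref{eq:thm4:m0}; satisfaction of that condition contradicts the assumed deadline miss and completes the induction.
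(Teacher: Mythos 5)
Your overall architecture (deadline-miss interval, demand split into switched \hi tasks, non-switched \hi tasks under the $1/x$ virtual-deadline inflation, and \lo tasks, followed by an induction over the switch history and an algebraic reduction to \eqnref{eq:thm4:m0}) matches the paper's skeleton, but the step you wave through is exactly the technical heart of the paper's proof, and the assertions you substitute for it do not hold as stated. First, the per-epoch linearization ``$\tau_i$ contributes at rate at most $z_i^{j}u_i^{LO}$ on $[\hat{t}^{j-1},\hat{t}^{j}]$'' fails for carry-over jobs: a \lo job released before a switch may legitimately execute up to its \emph{older, larger} budget $z_i^{k-j}C_i^{LO}$ inside a scheduling window that straddles the switch, so its consumption in the later epoch is not governed by the later service level. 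The paper copes with this via the intermediate bounds of \ruleref{rule:3}/\ruleref{rule:4} together with \propref{prop:5} and \propref{prop:x1}, which locate the switch point and the carry-over deadline before $a_{\hat{t}^k}+x(t_f-a_{\hat{t}^k})$; this is also the true origin of the $(1-x)$ factor (the utilization reduction is only credited over a window of length $(1-x)(t_f-a_{\hat{t}^k})$), not the virtual-deadline tightening of the non-switched \hi tasks as you claim. Second, your induction hypothesis is the qualitative statement ``schedulable at level $k-1$,'' and you use it to ``cap the \lo demand up to $\hat{t}^k$ at $u_{LO}^{k-1}(\hat{t}^k-t_0)$.'' That cap is not implied by schedulability and is in fact an \emph{under}estimate, since earlier epochs run at the larger levels $z_i^{0}\ge\cdots\ge z_i^{k-1}$; used as an upper bound it is unsound. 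The paper instead performs the induction on the quantitative demand bound itself (\lemref{lem:LOTask3}): it shows $\sups{\eta_i^{k}(0,t_f)}\le\sups{\eta_i^{k-1}(0,t_f)}+\psi_i^{k}$ with $\psi_i^{k}=(t_f-a_{\hat{t}^{k}})(1-x)(z_i^{k}-z_i^{k-1})u_i^{LO}$, via a case analysis on whether and how the $k$-carry-over job executed before $\hat{t}^k$; your appeal to \eqnref{eq:thm4:m01} to ``safely truncate in-flight jobs'' is precisely the statement that needs this proof.

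Two smaller points. Your algebra invokes the identity $x(1-u_{LO}^{LO})=u_{HI}^{LO}$, but \thmref{theorem:lowcrit} only gives the inequality $u_{LO}^{LO}+u_{HI}^{LO}/x\le 1$; the paper's derivation (Appendix~\ref{appendix:I}, via the weighted release time $a_0$ and the bound $\sum_{j_0}u_{\hat{t}^{j_0}}^{LO}\le u_{HI}^{LO}$) works for any feasible $x$, whereas your equality pins $x$ to the specific off-line choice and the sign of the substitution $-u_{\hat{t}^k}^{LO}/x\mapsto\frac{u_{\hat{t}^k}^{LO}}{u_{HI}^{LO}}(u_{LO}^{LO}-1)$ must be checked when only the inequality holds. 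Finally, the paper needs the consistency argument of \thmref{thm:2} (the conditions the $k^{\textit{th}}$ switch imposes on $u_{LO}^{1},\dots,u_{LO}^{k-1}$ coincide with those already enforced earlier) to reduce the per-switch family of constraints in \thmref{thm:1} to the single condition \eqnref{eq:thm4:m0} checked at $\hat{t}^k$; your sketch silently assumes this reduction.
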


In \thmref{thm:3}, we present a general utilization-based schedulability test condition for the FMC model. Now, let us take a closer look at the conditions specified in \thmref{thm:3}. We observe the following interesting properties of FMC-EDF-VD:
\begin{itemize}
\item In \thmref{thm:3}, the desired utilization balance between \lo and \hi tasks is achieved. As constrained by \eqnref{eq:thm4:m01}, the utilization of \lo tasks should be further reduced when a new overrun occurs. As shown in \eqnref{eq:thm4:m0}, the utilization reduction $u_{LO}^{k}-u_{LO}^{k-1}$ is bounded by $\frac{\frac{u_{\hat{t}^k}^{LO}}{u_{HI}^{LO}}(1-u_{LO}^{LO})-u_{\hat{t}^k}^{HI}}{(1-x)}$ for utilization balance.  
\item Another important observation is that the bound on the utilization reduction is determined \textit{only} by the overrun of \hi task $\tau_{\hat{t}^k}$ (as shown in \eqnref{eq:thm4:m0}). This means that the effects of the overruns on utilization reduction are independent. 
Moreover, the occurrence sequence of \hi task overruns has no impact on the utilization reduction. 
\item \thmref{thm:3} also provides us with a generic metric for managing the resources of \lo tasks when each independent mode switch occurs. In general, various run-time tuning strategies can be applied during the transition phase, as long as the conditions in \thmref{eq:thm4:m0} are satisfied. 
\end{itemize}

\subsection{The proof of correctness}
\label{sec:corr}
We now prove the correctness of the schedulability test condition presented in \thmref{thm:3}. We start with the proof by introducing some important concepts. Then, we propose a key technique to obtain the bound of the cumulative execution time for \lo and \hi tasks (\lemref{lem:LOTask3}, \lemref{lem:hi0}, and \lemref{lem:hi1}). Based on these derived bounds, the utilization-based test condition can be derived.

\subsubsection{\textbf{Challenges}}
Incorporating the \FMC model into a utilization-based EDF-VD scheduling analysis introduces several new challenges. The independent triggering scheme and the adaptive service level tuning scheme in the \FMC model allow flexible system behaviors. However, this flexibility also makes the system behavior more complex and more difficult to analyze. In particular, it is difficult to \textit{effectively} determine an upper bound on the cumulative execution time for \lo tasks. In the \FMC model, the service levels for \lo tasks are dynamically tuned at each mode-switching point. Therefore, the cumulative execution time of \lo tasks strongly depends on when each mode switch occurs. However, this information is difficult to explicitly represent prior to real execution because the independence of the mode switches in the FMC model results in a large analysis space. This makes it computationally infeasible to analyze all possibilities. Moreover, apart from the timing information of multiple mode switches, the derivation of the cumulative execution time also depends on the service tuning decisions made at previous mode switches. Determining how to extract static information (i.e., utilization) to formulate a feasible sufficient condition from these variables is another challenging task.

\subsubsection{\textbf{Concepts and notation}}
Before diving into the detailed proofs, we introduce some commonly used concepts and notation that will be used throughout the proofs. To derive a sufficient test, suppose that there is a time interval $[0, t_f]$ such that the system undergoes the $k^{\textit{th}}$ mode switch and the first deadline miss occurs at $t_f$. Let $J$ be the minimal set of jobs released from the MC task set $\gamma$ for which a deadline is missed. This minimality means that if any job is removed from $J$, the remainder of $J$ will be schedulable. Here, we introduce some notation for later use. $\hat{t}^k$ denotes the time instant of the $k^{\textit{th}}$ mode switch caused by \hi task $\tau_{\hat{t}^k}$. The absolute release time and deadline of the job of $\tau_{\hat{t}^k}$ that overruns at $\hat{t}^k$ are denoted by $a_{\hat{t}^k}$ and $d_{\hat{t}^k}$, respectively. $\eta_i^k(t_1,t_2)$ denotes the cumulative execution time of task $\tau_i$ when the system is operating in $k$-level \hi mode during the interval $(t_1, t_2]$. Next, we define a special type of job for \lo tasks, called a \textit{carry-over job}, and introduce several important propositions that will be useful for our later proofs.
\begin{definition}
\label{definition:k_carry_over_job}
A job of \lo task $\tau_i$ is called a $k$-\textit{carry-over job} if the $k^{\text{th}}$ mode switch occurs in the interval $[a_i^k,d_i^k]$, where $a_i^k$ and $d_i^k$ are the absolute release time and deadline of this job, respectively.
\end{definition} 
\figref{fig:prop} shows how a $k$-\textit{carry-over job} is executed during the interval $[a_i^k,d_i^{k}]$. The black box represents the cumulative execution time $\eta_i^k(a_i^k,\hat{t}^{k})$ of the $k$-\textit{carry-over job} before the $k^{\text{th}}$ mode-switching point $\hat{t}^{k}$. 

\begin{figure}[t]
  \centering 
  \scalebox{1}{\begin{pspicture}(0,0)(8,1.9) 
  \psset{linewidth=.8pt}
  \scriptsize
  \def \timer{\psline[linestyle=dashed,dash=0.8pt]{->}(0,0)(0,1)}
  \def\event{\psline[linewidth=1.2pt]{->}(0,0)(0,0.7)}
  \def\turnon{\psline[linewidth=1.2pt]{->}(0,0)(0,-1)}
	\def\pline{\psline[linewidth=0.8pt]{-}(0,0)(0,1)}
	\def\plinem{\psline[linewidth=0.8pt]{-}(0,0)(0,0.85)}
	\def\deadline{\psline[linewidth=1.2pt]{<-}(0,0)(0,0.7)}
	\def\rect1{\psframe*[linewidth=1.2pt,fillcolor=white](0,0)(0.5,0.4)}
	
   \rput(0,0.2){\pline}  \rput(0,0){$0$}
	 \rput(1.2,0.2){\plinem}     \rput(1.2,0){$\hat{t}^{k-j}$} 
   \rput(1.6,0.2){\event}     \rput(1.6,0){$a_i^k$} 
   \rput(3.3,0.2){\plinem}     \rput(3.3,0){$\hat{t}^{k}$}   
	 \rput(5,0.2){\deadline}     \rput(5,0){$d_{i}^k$}
   \rput(7.5,0.2){\pline}     \rput(7.5,0){$t_f$}
   
	 \rput(2.3,0.2){\rect1}     \rput(2.5,0.73){\fontsize{0.5 pt}{0.5 pt}\selectfont$\eta_i^k(a_i^k,\hat{t}^{k})$}
  \psline[linewidth=1pt]{->}(0,0.2)(8,0.2) 
  \rput[rb](8,0.3){t}
  
  
  

\end{pspicture}

  \caption{The execution scenario for a $k$-\textit{carry-over job}.}
  \label{fig:prop}
\end{figure}
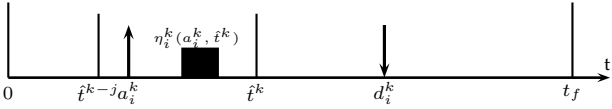 

\begin{proposition}
\label{prop:x}
(From \cite{Baruah2012,sanjoyACM}) 
All jobs executed in $[\hat{t}^k,t_f]$ have a deadline $\le{}t_f$.
\end{proposition}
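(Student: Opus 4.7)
The plan is to prove Proposition~\ref{prop:x} by contradiction, using the minimality of $J$ in the same spirit as the classical EDF argument in \cite{Baruah2012,sanjoyACM}.

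First I would suppose that there exists a job $\iota \in J$ executed at some instant $t \in [\hat{t}^k, t_f]$ with actual deadline $d_\iota > t_f$. The goal is to show that removing $\iota$ from $J$ yields a job set whose EDF-VD schedule still misses a deadline at $t_f$, contradicting the minimality of $J$. The key observation is that, under EDF-VD, the fact that the scheduler selects $\iota$ at time $t$ implies every job pending at $t$ has scheduling deadline no earlier than that of $\iota$. Since a virtual deadline is never later than the corresponding actual deadline, this means every such pending job also has actual deadline strictly greater than $t_f$.

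Next I would compare, slot by slot, the schedule produced by EDF-VD on $J$ with the schedule it produces on $J \setminus \{\iota\}$, using identical release times and the same sequence of mode-switch events. Any time slot that $\iota$ occupies in the original schedule is either left idle or consumed by a different job whose actual deadline also exceeds $t_f$, because the relative priority order among the surviving jobs is preserved when $\iota$ is removed. Consequently, no job with actual deadline $\leq t_f$ receives additional processor time in the modified schedule, so the job that misses its deadline at $t_f$ in the original schedule continues to miss it. This contradicts the minimality of $J$, completing the proof.

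The main obstacle is the EDF-VD subtlety introduced by the FMC model: HI tasks switch mode independently, so at time $t$ different HI jobs may be prioritized by different scheduling deadlines (virtual ones for those still in LO mode, actual ones for those that have already overrun). A short case split on the mode status of each potential replacement job confirms that the ``priority preservation'' step goes through — the invariant that virtual deadline $\leq$ actual deadline guarantees that any replacement job whose scheduling deadline is $\geq d_\iota > t_f$ also has actual deadline $> t_f$, so the set of jobs with actual deadline $\leq t_f$ is scheduled identically in the two comparisons.
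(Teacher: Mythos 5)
Your overall shape of argument (contradiction via the minimality of $J$, then a schedule-comparison showing the miss persists once the offending job is removed) is the classical EDF argument; note that the paper itself gives no proof of this proposition but imports it from \cite{Baruah2012,sanjoyACM}, and its closest in-paper analogue (Case~2 of \propref{prop:x1}) argues instead via the latest instant at which the offending job executes and the jobs released after that instant. The genuine gap in your version is in the key deduction. You assume $\iota$ has \emph{actual} deadline $d_\iota>t_f$ and conclude that every job pending at the instant $\iota$ executes has scheduling deadline at least $d_\iota>t_f$. That only follows if $\iota$ itself is being dispatched by a deadline exceeding $t_f$. In FMC, after the $k^{\textit{th}}$ switch the high-criticality tasks that have not yet overrun are still scheduled by their \emph{virtual} deadlines, so a job $\iota$ of such a task can have virtual deadline $\le t_f$, actual deadline $>t_f$, and execute inside $[\hat{t}^k,t_f]$. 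For this $\iota$, jobs with deadline $\le t_f$ may be pending while it runs, the slots freed by deleting $\iota$ can then be consumed by exactly those jobs, and the deadline miss need not persist, so no contradiction with minimality is obtained. Your closing ``case split on the mode status of each potential replacement job'' addresses the wrong side: what matters is the mode status, hence the scheduling deadline, of $\iota$ itself, not of the replacements. The statement your exchange argument actually supports, and the one the paper needs where \propref{prop:x} is invoked (e.g.\ in \ruleref{rule:2}, for low-criticality jobs whose scheduling deadline coincides with the actual deadline), is that no job whose \emph{currently used} scheduling deadline exceeds $t_f$ executes in $[\hat{t}^k,t_f]$; the contradiction hypothesis must be stated with respect to that deadline.

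A secondary weakness: you stipulate that $J\setminus\{\iota\}$ is scheduled ``with the same sequence of mode-switch events,'' but in FMC the mode-switch instants are produced by the schedule itself; removing $\iota$ can shift when overruns are detected, which changes which deadlines (virtual or actual) and which budgets $z_i^k$ are in force, so the slot-by-slot priority-preservation claim is not automatic. The cleaner repair is the style the paper uses in Case~2 of \propref{prop:x1}: let $t'$ be the latest instant in $[\hat{t}^k,t_f]$ at which a job with scheduling deadline $>t_f$ executes, observe that at $t'$ no job with scheduling deadline $\le t_f$ is pending, and conclude that the jobs released after $t'$ by themselves cause the miss at $t_f$, contradicting the minimality of $J$ without ever re-simulating a modified job set.
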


\begin{proposition}
\label{prop:5}
The $k^{\textit{th}}$ mode-switching point $\hat{t}^k$ satisfies $\hat{t}^k\le a_{\hat{t}^k}+x\cdot{(t_f-a_{\hat{t}^k})}$.   
\end{proposition}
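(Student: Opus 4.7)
The plan is to split the target inequality $\hat{t}^k \le a_{\hat{t}^k} + x\cdot(t_f - a_{\hat{t}^k})$ into two simpler ingredients and combine them. The first ingredient is that the overrun cannot occur later than the \emph{virtual deadline} of the overrunning job, i.e., $\hat{t}^k \le a_{\hat{t}^k} + x\cdot T_{\hat{t}^k}$ (under EDF-VD with implicit deadlines, this virtual deadline is precisely $a_{\hat{t}^k} + x\cdot T_{\hat{t}^k}$). The second ingredient is $T_{\hat{t}^k} \le t_f - a_{\hat{t}^k}$, which comes from Proposition~\ref{prop:x}. Multiplying the second by $x$ and adding $a_{\hat{t}^k}$ then yields the claim.

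For the second ingredient, I would argue that the job of $\tau_{\hat{t}^k}$ is itself being executed at the instant $\hat{t}^k \in [\hat{t}^k, t_f]$, because that is precisely how its overrun is detected. Proposition~\ref{prop:x} then forces $d_{\hat{t}^k} \le t_f$, and since deadlines are implicit, $T_{\hat{t}^k} = d_{\hat{t}^k} - a_{\hat{t}^k} \le t_f - a_{\hat{t}^k}$.

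For the first ingredient—the crux—I would invoke Theorem~\ref{theorem:lowcrit}. Up to the instant $\hat{t}^k$, the job of $\tau_{\hat{t}^k}$ has not yet triggered its own mode switch, so the EDF-VD scheduler treats it as a HI job with budget $C_{\hat{t}^k}^{LO}$ and virtual deadline $a_{\hat{t}^k} + x\cdot T_{\hat{t}^k}$. Under the LO-mode feasibility condition $u_{LO}^{LO} + u_{HI}^{LO}/x \le 1$, every such HI job is guaranteed to accumulate $C^{LO}$ units of execution by its virtual deadline. Since $\hat{t}^k$ is defined as the earliest instant at which this job has just consumed $C_{\hat{t}^k}^{LO}$ and still demands the CPU, the inequality $\hat{t}^k \le a_{\hat{t}^k} + x\cdot T_{\hat{t}^k}$ follows immediately.

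The main obstacle I anticipate is defending the previous paragraph in the presence of the earlier $k-1$ mode switches. Strictly, Theorem~\ref{theorem:lowcrit} gives the virtual-deadline completion guarantee for \emph{pure} LO-mode execution, whereas by time $\hat{t}^k$ the system has been in $(k{-}1)$-level HI mode, with previously overrun HI jobs scheduled under their \emph{actual} (larger) deadlines. I would resolve this by noting that replacing a job's deadline with a later one can only make its EDF priority lower relative to the virtual-deadline job of $\tau_{\hat{t}^k}$, so the interference experienced by the $\tau_{\hat{t}^k}$ job in $[a_{\hat{t}^k},\hat{t}^k]$ is no larger than it would be in pure LO mode; hence the LO-mode completion bound carries over. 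Once this monotonicity is made explicit, the three steps chain cleanly into $\hat{t}^k \le a_{\hat{t}^k} + x\cdot T_{\hat{t}^k} \le a_{\hat{t}^k} + x\cdot(t_f - a_{\hat{t}^k})$.
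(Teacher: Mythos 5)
Your skeleton is the same as the paper's. The paper's proof also rests on exactly your first ingredient -- the triggering job's virtual deadline $a_{\hat{t}^k}+x\,(d_{\hat{t}^k}-a_{\hat{t}^k})$ must exceed $\hat{t}^k$, ``otherwise the job would have completed (i.e., been caught overrunning) earlier'' -- and then implicitly replaces $d_{\hat{t}^k}$ by $t_f$, which is your second ingredient: the overrunning job is executing at $\hat{t}^k$, so Proposition~\ref{prop:x} gives $d_{\hat{t}^k}\le t_f$, and implicit deadlines give $d_{\hat{t}^k}-a_{\hat{t}^k}=T_{\hat{t}^k}\le t_f-a_{\hat{t}^k}$. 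Making that use of Proposition~\ref{prop:x} explicit is a small improvement over the paper, and the final chaining is fine.

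The gap is in your last paragraph, where you transfer the guarantee of Theorem~\ref{theorem:lowcrit} to a system that has already undergone $k-1$ switches. Your monotonicity claim -- that postponing the deadlines of previously overrun HI jobs only lowers their EDF priority relative to the virtual-deadline job of $\tau_{\hat{t}^k}$, so the interference in $[a_{\hat{t}^k},\hat{t}^k]$ is no larger than in pure LO mode -- ignores that those same jobs are simultaneously granted inflated budgets $C_j^{HI}>C_j^{LO}$. A job of an already-switched task $\tau_j$ whose \emph{actual} deadline still falls before $a_{\hat{t}^k}+x\,T_{\hat{t}^k}$ (e.g., when $T_j$ is short) keeps higher EDF priority and may now consume up to $C_j^{HI}$ in that window, strictly more than the $C_j^{LO}$ it could consume in pure LO mode; so the higher-priority demand is not pointwise dominated by the LO-mode demand, and the completion-by-virtual-deadline guarantee does not carry over by a per-job priority comparison. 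In FMC the extra demand of switched HI tasks is compensated by the service-level reductions imposed on the LO tasks at the earlier switching points (the conditions of Theorem~\ref{thm:3}), i.e., by a global utilization accounting of the kind performed for Theorem~\ref{thm:1} and Appendix~\ref{appendix:I}, not by the local argument you give. To be fair, the paper's own proof of this proposition asserts the same crux in one sentence without addressing the earlier switches either; but the justification you add is the step that fails as written, so as a self-contained proof the proposal has a genuine hole exactly where you anticipated the obstacle.
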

\begin{proof}
Since a \hi job of $\tau_{\hat{t}^{k}}$ triggers the $k^{\textit{th}}$ mode switch at $\hat{t}^{k}$, its virtual deadline ${a_{\hat{t}^{k}}+x\cdot{(d_{\hat{t}^{k}}-a_{\hat{t}^{k}})}}$ must be greater than $\hat{t}^{k}$. Otherwise, the \hi job would have completed its execution before the time instant of the switch.
\end{proof}

\begin{proposition}
\label{prop:x1}
For a $k$-\textit{carry-over job} of \lo task $\tau_i$, if $\eta_i^k(a_i^k,\hat{t}^{k})\neq{0}$, then the following holds: $d_i^k\le a_{\hat{t}^k}+x\cdot{(t_f-a_{\hat{t}^k})}$.  
\end{proposition}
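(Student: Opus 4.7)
My plan is to combine the EDF-VD priority rule with Proposition~\ref{prop:x} and Proposition~\ref{prop:5}. The bound to be proved is exactly the virtual deadline of $\tau_{\hat{t}^k}$, with $d_{\hat{t}^k}$ replaced by its upper bound $t_f$; this strongly suggests that the mechanism driving the inequality is a direct priority comparison under EDF-VD between the $k$-carry-over job and $\tau_{\hat{t}^k}$.

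First, since $\eta_i^k(a_i^k,\hat{t}^{k})\neq 0$, the $k$-carry-over job of $\tau_i$ is dispatched at some instant $t^{*}\in (a_i^k,\hat{t}^{k}]$. The crucial sub-step is to argue that some such $t^{*}$ lies strictly after $a_{\hat{t}^k}$, so that $\tau_{\hat{t}^k}$ has already been released and is actively competing with $\tau_i$ at $t^{*}$. I would invoke the minimality of $J$: a carry-over job whose execution trace is contained entirely in $(a_i^k,a_{\hat{t}^k}]$ would add no outstanding workload to the busy interval containing the deadline miss at $t_f$, and could therefore be removed from $J$ without destroying that miss, contradicting minimality. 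Hence at least one execution instant $t^{*}$ of the carry-over job must lie in $(a_{\hat{t}^k},\hat{t}^{k}]$.

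Second, I would apply EDF-VD's priority rule at $t^{*}$. Because $\tau_{\hat{t}^k}$ has not yet triggered its own mode switch (that only happens at $\hat{t}^{k}$), it is still scheduled with its virtual deadline $a_{\hat{t}^k}+x\cdot(d_{\hat{t}^k}-a_{\hat{t}^k})$; moreover it is pending at $t^{*}$, since it has not completed by $\hat{t}^{k}$. Because EDF-VD selects the \lo carry-over job of $\tau_i$ over $\tau_{\hat{t}^k}$ at $t^{*}$, its deadline must satisfy $d_i^k\le a_{\hat{t}^k}+x\cdot(d_{\hat{t}^k}-a_{\hat{t}^k})$. Finally, Proposition~\ref{prop:x} gives $d_{\hat{t}^k}\le t_f$, because $\tau_{\hat{t}^k}$ is executed in $[\hat{t}^{k},t_f]$, and substituting produces the claimed bound $d_i^k\le a_{\hat{t}^k}+x\cdot(t_f-a_{\hat{t}^k})$.

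The hardest step is clearly the first one: forcing an execution instant of the carry-over job to lie strictly after $a_{\hat{t}^k}$. Without this anchor the priority comparison has nothing to compare against $\tau_{\hat{t}^k}$, and no bound on $d_i^k$ follows from EDF-VD alone; the remaining steps reduce to routine bookkeeping once that instant has been isolated, and they mirror the standard EDF-VD transition argument of \cite{Baruah2012} lifted to the independent mode-switch setting of the FMC model.
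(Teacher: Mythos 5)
Your second and third steps (the EDF-VD priority comparison at an execution instant in $(a_{\hat{t}^k},\hat{t}^k]$, followed by replacing $d_{\hat{t}^k}$ with $t_f$ via \propref{prop:x}) are sound and coincide with the paper's argument for the case $a_i^k\ge a_{\hat{t}^k}$. The genuine gap is your ``crucial sub-step'': the claim that minimality of $J$ forces some execution instant of the carry-over job to lie strictly after $a_{\hat{t}^k}$. First, the justification is unsound: a job whose execution (before $\hat{t}^k$) lies entirely in $(a_i^k,a_{\hat{t}^k}]$ does \emph{not} ``add no outstanding workload'' to the interval ending at $t_f$ --- there is no idle instant in $[0,t_f]$, so that execution occupies processor time that other jobs in $J$ would otherwise receive, and the job may execute again after $\hat{t}^k$; hence you cannot conclude that deleting it preserves the miss. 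You are also using minimality in the wrong direction: minimality says that removing any job of $J$ destroys the miss, so to reach a contradiction you must exhibit a \emph{proper subset} of $J$ that still misses at $t_f$, which your argument never does. Second, the claim itself is too strong: nothing in the model prevents a carry-over job (one whose deadline already satisfies $d_i^k\le a_{\hat{t}^k}+x(t_f-a_{\hat{t}^k})$) from doing all of its pre-switch execution before $a_{\hat{t}^k}$, e.g.\ because after $a_{\hat{t}^k}$ it is continuously preempted until $\hat{t}^k$. So the anchor instant you need for the priority comparison simply may not exist, and your proof collapses exactly in the situation it was designed to handle.

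The paper instead splits on the release time. If $a_i^k\ge a_{\hat{t}^k}$, the hypothesis $\eta_i^k(a_i^k,\hat{t}^k)\neq 0$ already places an execution instant after $a_{\hat{t}^k}$, and your priority argument applies verbatim. If $a_i^k<a_{\hat{t}^k}$, the paper argues \emph{conditionally}: assume for contradiction that $d_i^k> a_{\hat{t}^k}+x(t_f-a_{\hat{t}^k})$ and that the job runs only before $a_{\hat{t}^k}$; at the latest such execution instant $t^*$, EDF implies no pending job has deadline $\le a_{\hat{t}^k}+x(t_f-a_{\hat{t}^k})$, so the jobs of $J$ released at or after $t^*$ already suffice to produce the miss at $t_f$, contradicting the minimality of $J$. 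That is, the correct use of minimality yields the deadline bound itself in this case, not the existence of a post-$a_{\hat{t}^k}$ execution instant. To repair your proposal you would need to replace your sub-step with this conditional contradiction (or an equivalent case split), rather than trying to force the carry-over job to execute after $a_{\hat{t}^k}$ unconditionally.
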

\begin{proof}
There are two cases to consider: $a_i^k \ge a_{\hat{t}^k}$ and $a_i^k < a_{\hat{t}^k}$.

\noindent\textbf{Case 1} ($a_i^k \ge a_{\hat{t}^k}$):
In this case, for the $k$-\textit{carry-over job} to be executed after $a_{\hat{t}^k}$, 
the $k$-\textit{carry-over job} should have a deadline no later than the virtual deadline $a_{\hat{t}^k} +x(d_{\hat{t}^k}-a_{\hat{t}^k})$ of the \hi job that triggered the $k^{\text{th}}$ mode switch. 
As a result, because $d_{\hat{t}^k} \le t_f$, we have $d_i^k\le(a_{\hat{t}^k}+x\cdot{(t_f-a_{\hat{t}^k})})$.

\noindent\textbf{Case 2} ($a_i^k < a_{\hat{t}^k}$): We prove the correctness of this case by contradiction. Suppose that the $k$-\textit{carry-over job} of \lo task $\tau_i$, with its deadline of $d_i^k> (a_{\hat{t}^k}+x\cdot{(t_f-a_{\hat{t}^k})})$, were to be executed before $a_{\hat{t}^k}$. Let $t^*$ denote the latest 
time instant at which this $k$-\textit{carry-over job} is executed before $a_{\hat{t}^k}$. 
At time instant $\hat{t}^k$, all previous $(k-1)$ mode switches are known to the system\footnote{At $\hat{t}^k$, all previous $k-1$ mode switches have already occurred.}. 
At $t^*$, we know that there should be no pending job with a deadline of $\le (a_{\hat{t}^k}+x\cdot{(t_f-a_{\hat{t}^k})})$. This means that jobs that are released at or after $t^*$ will also suffer a deadline miss at $t_f$, which contradicts the minimality of $J$. Therefore, $d_i^k\le({a_{\hat{t}^k}+x\cdot{(t_f-a_{\hat{t}^k})}})$.
\end{proof}

Using the propositions and notation presented above, we now derive an upper bound on the cumulative execution time $\eta_i^k(0,t_f)$ for \lo tasks (\lemref{lem:LOTask3}) and \hi tasks (\lemref{lem:hi0} and \lemref{lem:hi1}).

\subsubsection{\textbf{Bound for \lo tasks}}
As discussed above, it is difficult to derive an upper bound on the cumulative execution time of \lo tasks during the interval $[0,t_f]$ because of the large analysis space. In this section, we propose a novel derivation strategy to resolve this challenge. The overall derivation strategy is based on the specified derivation protocol (\ruleref{rule:1}-\ruleref{rule:4}) and mathematical induction. The purpose of the derivation protocol is to specify unified \textbf{intermediate} upper bounds for different execution scenarios. The advantage of introducing these \textbf{intermediate} upper bounds is that we can \textit{virtually} hide the influence of the previous $k-1$ mode switches. For instance, in \ruleref{rule:1} (see \eqnref{eq:dp1}), the influence of the previous $k-1$ mode switches is hidden in the term $\sups{\eta_i^k(0,d_i^{l})}$. In this way, the $k^{\textit{th}}$ mode switch and the previous $k-1$ mode switches are decorrelated.

Throughout the remainder of this section, we will use $\sup{\{\eta_i^k(t_1,t_2)\}}$ to denote the \textbf{intermediate} upper bounds on $\eta_i^k(t_1, t_2)$ for different execution scenarios, which represent the upper bounds under specific conditions. Let $\hat{t}^{k-j}$ ($j>0$) denote the last mode-switching point before $a_i^k$ (as shown in \figref{fig:prop}). $z_i^{k-j}$ denotes the updated service level at $\hat{t}^{k-j}$. $d_i^{l}$ denotes the absolute deadline for the last job\footnote{Here, the last job means the last job with a deadline of $\leq{t_f}$.} of $\tau_i$ during $[0,t_f]$. Now, we present the rules for deriving $\sups{\eta_i^k(0,t_f)}$ and $\sups{\eta_i^k(a_i^k,d_i^{k})}$, as summarized in \eqnref{eq:dp1} and \eqnref{eq:dp2}.

\begin{small}
\begin{align}
\label{eq:dp1}
&\sups{\eta_i^k(0,t_f)} \nonumber\\
=&
\begin{cases} 
\sups{\eta_i^k(0,d_i^{l})}+(t_f-\hat{t}^k)\cdot{}z_i^k\cdot{}u_i^{LO} &\text{$d_i^{l}<\hat{t}^k$ (\textbf{\ruleref{rule:1}})}\\
\sups{\eta_i^k(0,d_i^{k})}+(t_f-d_i^{k})\cdot{}z_i^k\cdot{}u_i^{LO} &\text{Otherwise (\textbf{\ruleref{rule:2}})} \\
\end{cases}\\ 
\label{eq:dp2}
&\sups{\eta_i^k(a_i^k,d_i^{k})} \nonumber\\
=&
\begin{cases} 
{(d_i^k-a_i^k)\cdot{z_i^{k-j}}\cdot{u_i^{LO}}} &\text{$\eta_i^k(a_i^k,\hat{t}^{k})\neq{0}$ (\textbf{\ruleref{rule:3}})}\\
{(d_i^k-a_i^k)\cdot{z_i^{k}}\cdot{u_i^{LO}}} &\text{Otherwise (\textbf{\ruleref{rule:4}})}
\end{cases}
\end{align}
\end{small}

The detailed description and proof are presented in Appendix \ref{appendix:II}. In \ruleref{rule:1}-\ruleref{rule:4}, one may notice that there are several different execution scenarios in which only one mode switch is considered. When $n$ mode switches are allowed, the combination space for all execution scenarios will increase exponentially with $n$. In general, it is very difficult to derive a bound on the cumulative execution time for \lo tasks because of this large combination space. To solve this problem, we analyze the difference between $\sups{\eta_i^{k}(0,t_f)}$ and $\sups{\eta_i^{k-1}(0,t_f)}$ and find that this difference can be uniformly bounded by a \textit{difference term} $\psi_i^{k}$ (see \lemref{lem:LOTask3}). This finding is formally proven in \lemref{lem:LOTask3} through mathematical induction. Before the proof, we first present a fact that will be useful for later interpretation.
\begin{fact}
\label{rule:5}
For the $k^{\textit{th}}$ mode-switching point $\hat{t}^{k}$, at time instant $t_0$ such that $t_0\le{\hat{t}^{k}}$, $\eta_i^k(0,t_0)=\eta_i^{k-1}(0,t_0)$.
\end{fact}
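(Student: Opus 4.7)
The plan is to establish Fact~\ref{rule:5} by a short causality argument on the FMC-EDF-VD scheduler. The statement says that the execution history of $\tau_i$ on the prefix $(0,t_0]$ coincides in the two execution scenarios that the superscripts $k$ and $k-1$ refer to, provided that $t_0$ is no later than the time $\hat{t}^k$ of the $k^{\textit{th}}$ mode switch in the $k$-scenario. Intuitively this holds because the $k^{\textit{th}}$ mode switch is a strictly future event from the perspective of $t_0$, and the scheduling decisions taken in $(0,t_0]$ cannot depend on events that have yet to occur.

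First I would unpack the notation. The superscript $k$ in $\eta_i^k(0,t_0)$ indexes the execution scenario in which the system undergoes exactly $k$ mode switches by $t_f$. For any $t_0 \le \hat{t}^k$, both the $k$-scenario and the $(k-1)$-scenario necessarily exhibit the \emph{same} sequence of at most $k-1$ mode switches on $(0,t_0]$, triggered by the same \hi jobs at the same time instants $\hat{t}^1,\ldots,\hat{t}^{k-1}$ (all strictly less than $\hat{t}^k$). Consequently, the sequence of service-level updates $z_i^1,\ldots,z_i^{k-1}$ applied to the \lo tasks, and the deadline resets applied to the overrun \hi tasks, are identical in the two scenarios throughout $(0,t_0]$.

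Next, I would invoke the causality of the scheduling rule. At every instant $t \le t_0$, the FMC-EDF-VD scheduler's output---the set of ready jobs with their current (possibly virtual) deadlines, and the job it selects---is a function solely of the job-release history, the observed overruns, and the mode-switch updates that have been triggered in $[0,t]$, as described in Section~\ref{sec:sem}. Since these inputs coincide between the two scenarios on $[0,t]$, a simple induction on the ordered sequence of event times (job releases, completions, overruns, switches) shows that the same job is executed at every instant of $(0,t_0]$. In particular, the cumulative execution awarded to $\tau_i$ on $(0,t_0]$ is the same, which is exactly the claim.

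The main obstacle, in my view, is essentially notational rather than mathematical: one has to make precise that the superscript $k$ labels a whole scenario rather than a point-wise state of the system, so that comparing $\eta_i^k$ and $\eta_i^{k-1}$ on a common prefix interval is well defined, and to observe that the $(k-1)$-scenario is itself a legitimate FMC execution trace. Once these points are pinned down, the argument requires no machinery beyond the execution semantics of Section~\ref{sec:sem} and serves as the base observation that underwrites the inductive comparisons made in \lemref{lem:LOTask3}.
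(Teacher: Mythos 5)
Your causality argument is exactly the paper's own reasoning: the $k^{\textit{th}}$ mode switch cannot affect execution before $\hat{t}^{k}$, so the execution history on $(0,t_0]$ with $t_0\le\hat{t}^{k}$ is identical under the $k$- and $(k-1)$-level accounting. You simply spell out in more detail (shared event history, causal scheduler, induction on event times) what the paper states in two sentences, so the proposal is correct and essentially the same proof.
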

\begin{proof}
The $k^{\textit{th}}$ mode switch can only begin to affect \lo task execution after the corresponding mode-switching point $\hat{t}^{k}$. Before $\hat{t}^{k}$, the $k^{\textit{th}}$ mode switch has no impact. Thus,  we have $\eta_i^k(0,t_0)=\eta_i^{k-1}(0,t_0)$.    
\end{proof}

\begin{lemma}
\label{lem:LOTask3}
For all $k\ge{1}$, the cumulative execution time $\eta_i^{k}(0,t_f)$ can be upper bounded by
\begin{small}
\begin{align}
t_f\cdot{u_i^{LO}}+\sum_{j=1}^{k}\psi_i^{j}
\label{eq:LOTask3:1}
\end{align}
\end{small}
where the \textit{difference term} $\psi_i^{j}$ is defined as $(t_f-a_{\hat{t}^{j}})(1-x)(z_i^{j}-z_i^{j-1})u_i^{LO}$.   
\end{lemma}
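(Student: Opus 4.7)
The plan is to prove \lemref{lem:LOTask3} by induction on $k$, using the stronger incremental form $\sups{\eta_i^k(0,t_f)} \le \sups{\eta_i^{k-1}(0,t_f)} + \psi_i^k$. With the convention $\sups{\eta_i^0(0,t_f)} = t_f\cdot u_i^{LO}$ (LO-mode execution at full service level $z_i^0 = 1$), this immediately yields the claimed closed form by telescoping. The motivation for isolating the increment is precisely the decoupling advertised above the lemma: by working with the intermediate bounds defined in \ruleref{rule:1}--\ruleref{rule:4}, the contribution of the $k$-th switch can be accounted for in isolation, without re-opening the case split for the earlier $k-1$ switches.

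For the base case $k = 1$, the pre-switch service level is $z_i^0 = 1$, so I would instantiate the derivation protocol directly. In the \ruleref{rule:1} branch ($d_i^l < \hat{t}^1$), I bound $\sups{\eta_i^1(0, d_i^l)}$ by $d_i^l\cdot u_i^{LO}$ since nothing has yet been throttled, and combine it with the post-switch contribution $(t_f-\hat{t}^1) z_i^1 u_i^{LO}$. In the \ruleref{rule:2} branch, I unfold $\sups{\eta_i^1(a_i^1,d_i^1)}$ via \ruleref{rule:3} or \ruleref{rule:4}. In every branch, \propref{prop:5} supplies $\hat{t}^1 \le a_{\hat{t}^1} + x(t_f-a_{\hat{t}^1})$, and \propref{prop:x1} supplies $d_i^1 \le a_{\hat{t}^1} + x(t_f-a_{\hat{t}^1})$ whenever the carry-over job has pre-switch execution. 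Substituting these bounds and collecting terms gives exactly $t_f u_i^{LO} + (t_f - a_{\hat{t}^1})(1-x)(z_i^1 - 1) u_i^{LO} = t_f u_i^{LO} + \psi_i^1$.

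For the inductive step, I assume the bound holds at level $k-1$ and analyze the expansion of $\sups{\eta_i^k(0,t_f)}$ via \ruleref{rule:1} or \ruleref{rule:2}. In either branch, \factref{rule:5} lets me replace $\eta_i^k(0,t_0)$ by $\eta_i^{k-1}(0,t_0)$ for any $t_0 \le \hat{t}^k$, so the execution accumulated strictly before the $k$-th switch is common to both sides and cancels. The residual comparison is between the post-$\hat{t}^k$ (or post-$d_i^k$) segment evaluated at $z_i^k$ and the same segment that would have been charged at $z_i^{k-1}$ inside $\sups{\eta_i^{k-1}(0,t_f)}$. Plugging in \propref{prop:5} (and \propref{prop:x1} when \ruleref{rule:3} is triggered by a carry-over job) converts the $\hat{t}^k$ or $d_i^k$ factor into $a_{\hat{t}^k} + x(t_f - a_{\hat{t}^k})$; since $z_i^k - z_i^{k-1} \le 0$ by the assumption in \eqnref{eq:thm4:m01}, the direction of the inequality is preserved and the residual simplifies to $(t_f - a_{\hat{t}^k})(1-x)(z_i^k - z_i^{k-1}) u_i^{LO} = \psi_i^k$, closing the induction.

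The main obstacle I anticipate is the book-keeping between the rule expansions at levels $k$ and $k-1$: the ``last deadline before the switch'' $d_i^l$ and the carry-over deadline $d_i^k$ generally differ across the two levels, so the two intermediate expressions do not line up term-by-term. I plan to sidestep this by establishing the incremental inequality on the pre-switch portion as a \emph{monotonicity} fact (a smaller post-switch service level cannot increase the intermediate bound on any interval ending at or before $\hat{t}^k$) and then computing the at-most-one carry-over job that straddles $\hat{t}^k$ explicitly. This monotonicity removes the need to synchronize the two expansions exactly and shrinks the analysis to the handful of rule combinations permitted by \ruleref{rule:1}--\ruleref{rule:4}.
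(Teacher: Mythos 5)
Your proposal is correct and follows essentially the same route as the paper: induction on $k$ over the intermediate bounds $\sups{\eta_i^{k}(0,t_f)}$ of \ruleref{rule:1}--\ruleref{rule:4}, establishing the incremental inequality $\sups{\eta_i^{k}(0,t_f)}\le\sups{\eta_i^{k-1}(0,t_f)}+\psi_i^{k}$ via \factref{rule:5}, \propref{prop:5}, \propref{prop:x1} and the sign of $z_i^{k}-z_i^{k-1}$, then telescoping. The ``monotonicity'' device you propose for aligning the level-$k$ and level-$(k-1)$ expansions is just a repackaging of the paper's explicit case split (no carry-over job, shared carry-over job, distinct carry-over jobs), so the two arguments coincide in substance.
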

\begin{proof}
Instead of proving the original statement, we will prove an alternative statement $P(k)$, which is defined as follows:

\textit{The \textbf{intermediate} upper bounds $\sups{\eta_i^{k}(0,t_f)}$ under different execution scenarios can be uniformly upper bounded by \eqnref{eq:LOTask3:1}.} 

Since $\eta_i^{k}(0,t_f)\le{}\sups{\eta_i^{k}(0,t_f)}$, the original statement will be proven correct if the statement $P(k)$ is proven to be correct. Now, we will prove that the statement $P(k)$ is correct for all possible integers $k$ based on mathematical induction. Recall that  $d_i^{l}$ is the absolute deadline for the last job of $\tau_i$ during $[0,t_f]$.  

\noindent\textbf{Step 1 (base case)}: We will prove that $P(1)$ is correct for $k=1$. 

\begin{proof}
We consider two cases, one in which a \textit{carry-over job} does not exist at the first mode-switching point $\hat{t}^1$ (i.e., $d_i^{l}<\hat{t}^1$) and one in which such a job does exist (i.e., $d_i^{l}\ge{}\hat{t}^1$).

\textbf{Case 1 ($\mathbf{d_i^{l}<\hat{t}^{1}}$)}: According to \ruleref{rule:1} and \propref{prop:5}, we have the following:
\begin{small}
\begin{align*}
&\sups{\eta_i^1(0,t_f)} \\
&=\sups{\eta_i^1(0,d_i^{l})}+(t_f-\hat{t}^1)\cdot{}z_i^1\cdot{}u_i^{LO} \\
&=d_i^{l}\cdot{}u_i^{LO}+(t_f-\hat{t}^1)\cdot{}z_i^1\cdot{}u_i^{LO} \\
&since\ \ d_i^{l}<\hat{t}^1\le{}a_{\hat{t}^1}+x\cdot{(t_f-a_{\hat{t}^1})} \\
&<{}t_f\cdot{}z_i^1\cdot{}u_i^{LO}+\hat{t}^1\cdot{}u_i^{LO}\cdot{}(1-z_i^1)\ \ \ \ \  (replace\ \ d_i^{l}\ \ with\ \ \hat{t}^1) \\
&\le{}t_f\cdot{}u_i^{LO}+\underbrace{(t_f-a_{\hat{t}^1})(1-x)(z_i^1-1)u_i^{LO}}_{difference\;term\;\psi_i^{1}} \ \ \  (replace\ \ \hat{t}^1) 
\end{align*} 
\end{small}

\textbf{Case 2 ($\mathbf{d_i^{l}\ge{}\hat{t}^{1}}$)}: In this case, we consider the two following execution scenarios.

\noindent\textbf{S1 ($\mathbf{\eta_i^1(a_i^{1},\hat{t}^{1})\neq{0}})$}:
According to \ruleref{rule:2},  \ruleref{rule:3}, and \propref{prop:x1}, we have the following:
\begin{small}
\begin{align*}
	&\sups{\eta_i^{1}(0,t_f)} \nonumber \\
=&\sups{\eta_i^{1}(0,a_i^{1})}+\sups{\eta_i^{1}(a_i^{1},d_i^{1})}+(t_f-d_i^{1}){z_i^{1}}{u_i^{LO}} \nonumber  \\
=&a_i^{1}{u_i^{LO}}+(d_i^{1}-a_i^{1}){u_i^{LO}}+(t_f-d_i^{1}){z_i^{1}}{u_i^{LO}}\nonumber \\
=&t_f\cdot{}u_i^{LO}+(t_f-d_i^{1})(z_i^1-1)u_i^{LO} \nonumber \\
&since\  d_i^{1}\le{a_{\hat{t}^1}+x\cdot{(t_f-a_{\hat{t}^1})}} \nonumber \\
\le{}& t_f\cdot{}u_i^{LO}+\underbrace{(t_f-a_{\hat{t}^1})(1-x)(z_i^1-1)u_i^{LO}}_{difference\;term\;\psi_i^{1}}  \ \ \  (replace\ \ d_i^{1}) 
\end{align*}
\end{small}

\noindent\textbf{S2 ($\mathbf{\eta_i^1(a_i^1,\hat{t}^{1})={0}}$)}: 
According to  \ruleref{rule:2},  \ruleref{rule:4}, and \propref{prop:5}, we have the following:
\begingroup
\allowdisplaybreaks
\begin{small}
\begin{align*}
	&\sups{\eta_i^{1}(0,t_f)} \nonumber \\
=&\sups{\eta_i^{1}(0,a_i^{1})}+\sups{\eta_i^{1}(a_i^{1},d_i^{1})}+(t_f-d_i^{1}){z_i^{1}}{u_i^{LO}} \nonumber  \\
=&a_i^{1}{u_i^{LO}}+(d_i^{1}-a_i^{1})z_i^1{u_i^{LO}}+(t_f-d_i^{1}){z_i^{1}}{u_i^{LO}}\nonumber \\
=&t_f\cdot{}u_i^{LO}+(t_f-a_{i}^1)(z_i^1-1)u_i^{LO} \nonumber \\
&since\  a_{i}^1<\hat{t}^1\le{a_{\hat{t}^1}+x\cdot{(t_f-a_{\hat{t}^1})}} \nonumber \\
\le{}& t_f\cdot{}u_i^{LO}+\underbrace{(t_f-a_{\hat{t}^1})(1-x)(z_i^1-1)u_i^{LO}}_{difference\;term\;\psi_i^{1}}   \ \ \  (replace\ \ a_{i}^1) 
\end{align*}
\end{small}
\endgroup
Therefore, $P(1)$ is correct for $k=1$. 
\end{proof}

\noindent\textbf{Step 2 (induction hypothesis)}: Assume that $P(k_0-1)$ is correct for some possible integers $k_0-1$.

\noindent\textbf{Step 3 (induction)}: We now prove that $P(k_0)$ is correct by the induction hypothesis.

\begin{proof}
Since $\hat{t}^{k_0-1}\le{}\hat{t}^{k_0}$, we need to consider the following three cases.

\textbf{Case 1 ($\mathbf{d_i^{l}<\hat{t}^{k_0-1}\le{}\hat{t}^{k_0}}$)}: In this case, neither a $(k_0-1)$-\textit{carry-over job} nor a $k_0$-\textit{carry-over job} exists.  
According to \ruleref{rule:1} and \factref{rule:5}, we have the following:
\begin{small}
\begin{align*}
\sups{\eta_i^{k_0-1}(0,t_f)}&=\sups{\eta_i^{k_0-1}(0,d_i^{l})}+(t_f-\hat{t}^{k_0-1})z_i^{k_0-1}u_i^{LO} \\
\sups{\eta_i^{k_0}(0,t_f)}&=\sups{\eta_i^{k_0}(0,d_i^{l})}+(t_f-\hat{t}^{k_0})z_i^{k_0}u_i^{LO} \\
\sups{\eta_i^{k_0-1}(0,d_i^{l})}&=\sups{\eta_i^{k_0}(0,d_i^{l})}
\end{align*} 
\end{small}

Since $\hat{t}^{k_0}\ge{}\hat{t}^{k_0-1}$ and $z_i^{k_0}\le{}z_i^{k_0-1}$, we have
\begin{small}
\begin{align}
\sups{\eta_i^{k_0}(0,t_f)}\le{}\sups{\eta_i^{k_0-1}(0,t_f)}+(t_f-\hat{t}^{k_0}){(z_i^{k_0}-z_i^{k_0-1})}{u_i^{LO}}
\label{eq:lem1:2}
\end{align} 
\end{small}
According to \propref{prop:5}, we can replace $\hat{t}^{k_0}$ with $a_{\hat{t}^{k_0}}+x{(t_f-a_{\hat{t}^{k_0}})}$ in \eqnref{eq:lem1:2}. Then, $\sups{\eta_i^{k_0}(0,t_f)}$ can be bounded by
\begin{small}
\begin{align*}
\sups{\eta_i^{k_0-1}(0,t_f)}+\underbrace{(t_f-a_{\hat{t}^{k_0}})\cdot(1-x)\cdot{(z_i^{k_0}-z_i^{k_0-1})}\cdot{u_i^{LO}}}_{difference\;term\;\psi_i^{k_0}}
\end{align*} 
\end{small} 

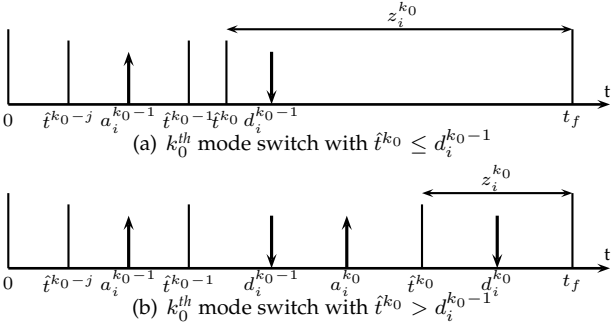
\begin{figure}
\centering
  \subfigure[$k_0^\textit{th}$ mode switch with $\hat{t}^{k_0}\le{d_i^{k_0-1}}$]
  {\label{fig:sec2} \scalebox{1}{\begin{pspicture}(0,0)(8,1.5) 
  \psset{linewidth=.8pt}
  \scriptsize
  \def \timer{\psline[linestyle=dashed,dash=0.8pt]{->}(0,0)(0,1)}
  \def\event{\psline[linewidth=1.2pt]{->}(0,0)(0,0.7)}
  \def\turnon{\psline[linewidth=1.2pt]{->}(0,0)(0,-1)}
	\def\pline{\psline[linewidth=0.8pt]{-}(0,0)(0,1)}
	\def\plinem{\psline[linewidth=0.8pt]{-}(0,0)(0,0.85)}
	\def\deadline{\psline[linewidth=1.2pt]{<-}(0,0)(0,0.7)}
	\def\rect1{\psframe*[linewidth=1.2pt,fillcolor=white](0,0)(0.4,0.4)}
	
   \rput(0,0.2){\pline}  \rput(0,0){$0$}
   \rput(1.6,0.2){\event}     \rput(1.6,0){$a_i^{k_0-1}$} 
	 \rput(0.8,0.2){\plinem}     \rput(0.8,0){$\hat{t}^{k_0-j}$}
	 \rput(2.4,0.2){\plinem}     \rput(2.4,0){$\hat{t}^{k_0-1}$}
	 \psline{<->}(2.9,1.2)(7.5,1.2) \rput(5.25,1.4){$z_i^{k_0}$}
   \rput(2.9,0.2){\plinem}     \rput(2.9,0){$\hat{t}^{k_0}$}   
	 \rput(3.5,0.2){\deadline}     \rput(3.5,0){$d_{i}^{k_0-1}$}
   \rput(7.5,0.2){\pline}     \rput(7.5,0){$t_f$}
   
  \psline[linewidth=1pt]{->}(0,0.2)(8,0.2) 
  \rput[rb](8,0.3){t}
  
  
  

\end{pspicture}}
  }
  \subfigure[$k_0^\textit{th}$ mode switch with $\hat{t}^{k_0}>{d_i^{k_0-1}}$]
  {\label{fig:sec3}   \scalebox{1}{\begin{pspicture}(0,0)(8,1.5) 
  \psset{linewidth=.8pt}
  \scriptsize
  \def \timer{\psline[linestyle=dashed,dash=0.8pt]{->}(0,0)(0,1)}
  \def\event{\psline[linewidth=1.2pt]{->}(0,0)(0,0.7)}
  \def\turnon{\psline[linewidth=1.2pt]{->}(0,0)(0,-1)}
	\def\pline{\psline[linewidth=0.8pt]{-}(0,0)(0,1)}
	\def\plinem{\psline[linewidth=0.8pt]{-}(0,0)(0,0.85)}
	\def\deadline{\psline[linewidth=1.2pt]{<-}(0,0)(0,0.7)}
	\def\rect1{\psframe*[linewidth=1.2pt,fillcolor=white](0,0)(0.4,0.4)}
	
   \rput(0,0.2){\pline}  \rput(0,0){$0$}
   \rput(1.6,0.2){\event}     \rput(1.6,0){$a_i^{k_0-1}$} 
	 \rput(0.8,0.2){\plinem}     \rput(0.8,0){$\hat{t}^{k_0-j}$}
	 \rput(2.4,0.2){\plinem}     \rput(2.4,0){$\hat{t}^{k_0-1}$}
	 \psline{<->}(5.5,1.2)(7.5,1.2) \rput(6.5,1.4){$z_i^{k_0}$}  
	 \rput(3.5,0.2){\deadline}     \rput(3.5,0){$d_{i}^{k_0-1}$}
	
	 \rput(4.5,0.2){\event}     \rput(4.5,0){$a_i^{k_0}$}
	 \rput(5.5,0.2){\plinem}     \rput(5.5,0){$\hat{t}^{k_0}$} 
	 \rput(6.5,0.2){\deadline}     \rput(6.5,0){$d_{i}^{k_0}$}
   \rput(7.5,0.2){\pline}     \rput(7.5,0){$t_f$}
   
  \psline[linewidth=1pt]{->}(0,0.2)(8,0.2) 
  \rput[rb](8,0.3){t}
  
  
  

\end{pspicture}}
  }
  \caption{Mode switch from $\hat{t}^{k_0-1}$ to $\hat{t}^{k_0}$.}
  \label{fig:timing}
	\vspace{-0.75cm}
\end{figure}

\textbf{Case 2 ($\mathbf{{\hat{t}^{k_0-1}\le{}\hat{t}^{k_0}\le{}d_i^{l}}}$)}: In this case, both a $(k_0-1)$-\textit{carry-over job} and a $k_0$-\textit{carry-over job} exist. Recall that $d_i^{k_0-1}$ is the absolute deadline for the $(k_0-1)$-\textit{carry-over job}. Two sub-cases, one with $\hat{t}^{k_0}\le{d_i^{k_0-1}}$ and one with $\hat{t}^{k_0}>{d_i^{k_0-1}}$, as shown in \figref{fig:sec2} and \figref{fig:sec3}, need to be considered.

According to  \factref{rule:5}, we have the following:
\begin{small}
\begin{align}
\label{eq:lema_diff1_2}
\eta_i^{k_0}(0,a_i^{k_0})=\eta_i^{k_0-1}(0,a_i^{k_0})
\end{align}
\end{small}

\noindent$\bullet$\textbf{Case 2-A ($\hat{t}^{k_0}\le{d_i^{k_0-1}}$)}: This execution scenario is illustrated in \figref{fig:sec2}. In this case, the $(k_0-1)$-\textit{carry-over job} and the $k_0$-\textit{carry-over job} are the same job. Therefore, we have $a_i^{k_0}=a_i^{k_0-1}$ and $d_i^{k_0}=d_i^{k_0-1}$. In the following, we use $a_i^{k_0-1}$ and $d_i^{k_0-1}$ in place of $a_i^{k_0}$ and $d_i^{k_0}$, respectively. In Case 2-A, the following two scenarios are considered: 

\noindent\textbf{S1 ($\mathbf{\eta_i^{k_0}(a_i^{{k_0}-1},\hat{t}^{k_0})\neq{0}})$}: 
According to  \ruleref{rule:3} and \ruleref{rule:4}, we have the following\footnote{According to the proof of \ruleref{rule:3} (see Appendix \ref{appendix:II}), we have a similar result: $\sups{\eta_i^{k_0}(a_i^{k_0-1},d_i^{k_0-1})}=(d_i^{k_0-1}-a_i^{k_0-1})\cdot{z_i^{k_0-1}}\cdot{u_i^{LO}}$ because $\eta_i^k(a_i^{k_0-1},\hat{t}^{k_0-1})={0}$.}:  
\begin{small}
\begin{align}
\label{eq:lema_diff1_1}
&\sups{\eta_i^{k_0}(a_i^{k_0-1},d_i^{k_0-1})}=\sups{\eta_i^{k_0-1}(a_i^{k_0-1},d_i^{k_0-1})} \nonumber \\
=& \begin{cases} 
   (d_i^{k_0-1}-a_i^{k_0-1})\cdot{z_i^{k_0-j}}\cdot{u_i^{LO}} & \eta_i^k(a_i^{k_0-1},\hat{t}^{k_0-1})\neq{0} \\
   (d_i^{k_0-1}-a_i^{k_0-1})\cdot{z_i^{k_0-1}}\cdot{u_i^{LO}} & otherwise
  \end{cases}
\end{align}
\end{small}

According to  \ruleref{rule:2}, \eqnref{eq:lema_diff1_2} and \eqnref{eq:lema_diff1_1},  we have the following:
\begin{small}
\begin{align*}
	&\sups{\eta_i^{k_0}(0,t_f)} \nonumber \\
=&\sups{\eta_i^{k_0}(0,a_i^{k_0-1})}+\sups{\eta_i^{k_0}(a_i^{k_0-1},d_i^{k_0-1})} \nonumber  \\
&+(t_f-d_i^{k_0-1}){z_i^{k_0}}{u_i^{LO}} \nonumber  \\
=&\underline{\sups{\eta_i^{k_0-1}(0,a_i^{k_0-1})}+\sups{\eta_i^{k_0-1}(a_i^{k_0-1},d_i^{k_0-1})}} \nonumber \\
 &\underline{+(t_f-d_i^{k_0-1}){z_i^{k_0-1}}{u_i^{LO}}}+(t_f-d_i^{k_0-1}){(z_i^{k_0}-z_i^{k_0-1})}{u_i^{LO}} \nonumber \\
=&\sups{\eta_i^{k_0-1}(0,t_f)}+(t_f-d_i^{k_0-1}){(z_i^{k_0}-z_i^{k_0-1})}{u_i^{LO}}
\end{align*}
\end{small}

According to \propref{prop:x1}, by replacing $d_i^{k_0-1}$ with $a_{\hat{t}^{k_0}}+x\cdot{(t_f-a_{\hat{t}^{k_0}})}$, $\sups{\eta_i^{k_0}(0,t_f)}$ can be bounded by
\begin{small}
\begin{align}
\sups{\eta_i^{k_0-1}(0,t_f)}+\underbrace{(t_f-a_{\hat{t}^{k_0}})\cdot(1-x)\cdot{(z_i^{k_0}-z_i^{k_0-1})}\cdot{u_i^{LO}}}_{difference\;term\;\psi_i^{k_0}}
\label{eq:lema_diff1_3}
\end{align} 
\end{small} 

\noindent\textbf{S2 ($\mathbf{\eta_i^{k_0}(a_i^{k_0-1},\hat{t}^{k_0})={0}}$)}:
According to  \ruleref{rule:2},  \ruleref{rule:4}, and \eqnref{eq:lema_diff1_2},  we have the following:
\begin{small}
\begin{align*}
	&\sups{\eta_i^{k_0}(0,t_f)} \nonumber \\
=&\sups{\eta_i^{k_0}(0,a_i^{k_0-1})}+\sups{\eta_i^{k_0}(a_i^{k_0-1},d_i^{k_0-1})} \nonumber  \\
&+(t_f-d_i^{k_0-1})z_i^{k_0}u_i^{LO} \nonumber  \\
=&\underline{\sups{\eta_i^{k_0-1}(0,a_i^{k_0-1})}+\sups{\eta_i^{k_0-1}(a_i^{k_0-1},d_i^{k_0-1})}} \nonumber \\
 &\underline{+(t_f-d_i^{k_0-1})z_i^{k_0-1}u_i^{LO}}+(t_f-a_i^{k_0-1}){(z_i^{k_0}-z_i^{k_0-1})}{u_i^{LO}} \nonumber \\
=&\sups{\eta_i^{k_0-1}(0,t_f)}+(t_f-a_i^{k_0-1}){(z_i^{k_0}-z_i^{k_0-1})}{u_i^{LO}}
\end{align*}
\end{small}

According to \propref{prop:5} and $a_i^{k_0-1}<\hat{t}^{k_0}$, $\sups{\eta_i^{k_0}(0,t_f)}$ can be bounded by
\begin{small}
\begin{align}
\sups{\eta_i^{k_0-1}(0,t_f)}+\underbrace{(t_f-a_{\hat{t}^{k_0}})(1-x)(z_i^{k_0}-z_i^{k_0-1})u_i^{LO}}_{difference\;term\;\psi_i^{k_0}}
\label{eq:lema_diff1_4}
\end{align}
\end{small} 

\noindent$\bullet$\textbf{Case 2-B: (${d_i^{k_0-1}}<\hat{t}^{k_0}$)}: This execution scenario is illustrated in \figref{fig:sec3}. In this case, the $(k_0-1)$-\textit{carry-over job} and the $k_0$-\textit{carry-over job} are different jobs. For this case, we will consider the following two scenarios:

\noindent\textbf{S1 ($\mathbf{\eta_i^{k_0}(a_i^{k_0},\hat{t}^{k_0})\neq{0}}$)}: 
According to  \ruleref{rule:2},  \ruleref{rule:3}, and \eqnref{eq:lema_diff1_2}, we have the following: 
\begin{footnotesize}
\begin{align*}
	&\sups{\eta_i^{k_0}(0,t_f)} \nonumber \\
=&\sups{\eta_i^{k_0}(0,a_i^{k_0})}+\sups{\eta_i^{k_0}(a_i^{k_0},d_i^{k_0})}+(t_f-d_i^{k_0})z_i^{k_0}u_i^{LO} \nonumber  \\
=&\underline{\sups{\eta_i^{k_0-1}(0,a_i^{k_0})}+(t_f-a_i^{k_0}){z_i^{k_0-1}}{u_i^{LO}}} \nonumber \\
&+(t_f-d_i^{k_0}){(z_i^{k_0}-z_i^{k_0-1})}{u_i^{LO}} \nonumber \\
=&\sups{\eta_i^{k_0-1}(0,t_f)}+(t_f-d_i^{k_0}){(z_i^{k_0}-z_i^{k_0-1})}{u_i^{LO}}
\end{align*}
\end{footnotesize}

Again, by replacing $d_i^{k_0}$ in accordance with \propref{prop:x1}, we obtain the following bound:
\begin{small}
\begin{align}
\sups{\eta_i^{k_0-1}(0,t_f)}+\underbrace{(t_f-a_{\hat{t}^{k_0}})(1-x)(z_i^{k_0}-z_i^{k_0-1}){u_i^{LO}}}_{difference\;term\;\psi_i^{k_0}}
\label{eq:lema_diff1_5}
\end{align}
\end{small}

\noindent\textbf{S2 ($\mathbf{\eta_i^{k_0}(a_i^{k_0},\hat{t}^{k_0})={0}}$)}:
According to  \ruleref{rule:2},  \ruleref{rule:4}, and \eqnref{eq:lema_diff1_2}, we have the following: 
\begin{footnotesize}
\begin{align*}
	&\sups{\eta_i^{k_0}(0,t_f)} \nonumber \\
=&\sups{\eta_i^{k_0}(0,a_i^{k_0})}+\sups{\eta_i^{k_0}(a_i^{k_0},d_i^{k_0})}+(t_f-d_i^{k_0})z_i^{k_0}u_i^{LO} \nonumber  \\
=&\underline{\sups{\eta_i^{k_0-1}(0,a_i^{k_0})}+(t_f-a_i^{k_0}){z_i^{k_0-1}}{u_i^{LO}}} \nonumber \\
&+(t_f-a_i^{k_0}){(z_i^{k_0}-z_i^{k_0-1})}{u_i^{LO}} \nonumber \\
=&\sups{\eta_i^{k_0-1}(0,t_f)}+(t_f-a_i^{k_0}){(z_i^{k_0}-z_i^{k_0-1})}{u_i^{LO}}
\end{align*}
\end{footnotesize}

Again, according to \proporef{prop:5} and $a_i^{k_0}<\hat{t}^{k_0}$, $\sups{\eta_i^{k}(0,t_f)}$ can be upper bounded by
\begin{small}
\begin{align}
\sups{\eta_i^{k_0-1}(0,t_f)}+\underbrace{(t_f-a_{\hat{t}^{k_0}})(1-x)(z_i^{k_0}-z_i^{k_0-1}){u_i^{LO}}}_{difference\;term\;\psi_i^{k_0}}
\label{eq:lema_diff1_6}
\end{align}
\end{small}

For case 2, we can conclude that $\sups{\eta_i^{k}(0,t_f)}$ can be upper bounded by $\sups{\eta_i^{k_0-1}(0,t_f)}+\psi_i^{k_0}$ according to \eqnsref{eq:lema_diff1_3}-(\ref{eq:lema_diff1_6}). 

\textbf{Case 3 (${\hat{t}^{k_0-1}\le{}d_i^{l}<\hat{t}^{k_0}}$)}: In this case, a $(k_0-1)$-\textit{carry-over job} exists but a $k_0$-\textit{carry-over job} does not. 
According to  \ruleref{rule:1}, we have the following:
\begin{small}
\begin{align*}
\sups{\eta_i^{k_0}(0,t_f)}&=\sups{\eta_i^{k_0}(0,d_i^{l})}+(t_f-\hat{t}^{k_0})\cdot{}z_i^{k_0}\cdot{}u_i^{LO}
\end{align*} 
\end{small}
Since $d_i^{l}<\hat{t}^{k_0}<t_f$, we can derive
\begin{small}
\begin{align*}
\sups{\eta_i^{k_0-1}(0,t_f)}=\sups{\eta_i^{k_0-1}(0,d_i^{l})}+(t_f-d_i^{l})\cdot{}z_i^{k_0-1}\cdot{}u_i^{LO}
\end{align*} 
\end{small}
According to  \factref{rule:5} and $d_i^{l}<\hat{t}^{k_0}$, we have
\begin{small}
\begin{align*}
\sups{\eta_i^{k_0}(0,t_f)}\le{}\sups{\eta_i^{k_0-1}(0,t_f)}+(t_f-\hat{t}^{k_0})(z_i^{k_0}-z_i^{k_0-1})u_i^{LO} 
\end{align*} 
\end{small}

Again, according to \proporef{prop:5}, $\sups{\eta_i^{k}(0,t_f)}$ can be upper bounded by
\begin{small}
\begin{align*}
\sups{\eta_i^{k_0-1}(0,t_f)}+\underbrace{(t_f-a_{\hat{t}^{k_0}})(1-x)(z_i^{k_0}-z_i^{k_0-1}){u_i^{LO}}}_{difference\;term\;\psi_i^{k_0}}
\end{align*}
\end{small}

For the three cases above, we can conclude that $\sups{\eta_i^{k}(0,t_f)}$ can be upper bounded by $\sups{\eta_i^{k_0-1}(0,t_f)}+\psi_i^{k_0}$. Thus, $P(k_0)$ is correct by the induction hypothesis. 
\end{proof}
Hence, through mathematical induction, $P(k)$ is proven correct for all possible $k$. Under different execution scenarios, the cumulative execution time $\eta_i^{k}(0,t_f)$ can be bounded by the \textbf{intermediate} upper bound $\sups{\eta_i^{k}(0,t_f)}$. Since $P(k)$ is correct, the original statement is correct.   
\end{proof}

\subsubsection{\textbf{Bound for \hi tasks}}
Recall that $\tau_{\hat{t}^k}$ is the \hi task that suffers an overrun at $\hat{t}^k$. Since the mode switches are independent, the \hi tasks can be divided into two sets, namely, the sets of tasks that have and have not already entered \hi mode at mode-switching point $\hat{t}^k$, which can be denoted by $\gamma_{HI}^{HI}(\hat{t}^k)$ and $\gamma_{HI}^{LO}(\hat{t}^k)$, respectively. Now, we derive the upper bounds on the cumulative execution time for both types of \hi tasks.         

\begin{lemma}
\label{lem:hi0}
For \hi task $\tau_{\hat{t}^j}$ in task set $\gamma_{HI}^{HI}(\hat{t}^k)$ ($j\le{k}$), the cumulative execution time $\eta_{\tau_{\hat{t}^j}}^k(0,t_f)$ can be bounded as follows:
\begin{small}
\begin{align}
\label{eq:hi0}
\sups{\eta_{\tau_{\hat{t}^j}}^k(0,t_f)} = a_{\hat{t}^j}\cdot{}u_{\hat{t}^j}^{LO}+(t_f-a_{\hat{t}^j})\cdot{}u_{\hat{t}^j}^{HI}
\end{align}
\end{small}
\end{lemma}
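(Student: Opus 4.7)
The plan is to split the interval $[0,t_f]$ at the release instant $a_{\hat{t}^j}$ of the job of $\tau_{\hat{t}^j}$ that triggers its own mode switch at $\hat{t}^j$, and then bound the cumulative execution of $\tau_{\hat{t}^j}$ separately on each subinterval. The underlying intuition is that before $a_{\hat{t}^j}$ the task has not yet overrun, so its jobs consume at most $C_{\hat{t}^j}^{LO}$ time each, whereas from $a_{\hat{t}^j}$ onwards every released job is allowed the full HI budget $C_{\hat{t}^j}^{HI}$.

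For the first subinterval $[0,a_{\hat{t}^j}]$, I would argue that every job of $\tau_{\hat{t}^j}$ released strictly before $a_{\hat{t}^j}$ has an implicit deadline at most $a_{\hat{t}^j}$ (since the minimum inter-arrival separation is $T_{\hat{t}^j}$) and hence completes by $a_{\hat{t}^j}$. By the very definition of $\hat{t}^j$ as the first overrun of this task, all such jobs used at most $C_{\hat{t}^j}^{LO}$ units of processor time. A standard counting step then yields
\begin{small}
\begin{align*}
\eta_{\tau_{\hat{t}^j}}^k(0,a_{\hat{t}^j}) \le \bfloor{a_{\hat{t}^j}/T_{\hat{t}^j}}\cdot C_{\hat{t}^j}^{LO}\le a_{\hat{t}^j}\cdot u_{\hat{t}^j}^{LO}.
\end{align*}
\end{small}

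For the second subinterval $[a_{\hat{t}^j},t_f]$, the triggering job released at $a_{\hat{t}^j}$ together with the subsequent sporadic releases of $\tau_{\hat{t}^j}$ are all entitled to execute with $C_{\hat{t}^j}^{HI}$ under EDF-VD after $\hat{t}^j$. Using Proposition~\ref{prop:x}, only jobs whose deadlines are at most $t_f$ may contribute execution time in the window $[\hat{t}^k,t_f]$, so the number of relevant releases with deadline $\le t_f$ in $[a_{\hat{t}^j},t_f]$ is at most $\bfloor{(t_f-a_{\hat{t}^j})/T_{\hat{t}^j}}$. Hence the cumulative execution is bounded by $(t_f-a_{\hat{t}^j})\cdot u_{\hat{t}^j}^{HI}$. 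Summing the two contributions yields exactly the claimed bound \eqref{eq:hi0}.

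The main obstacle is not algebraic but bookkeeping: I must carefully justify why $a_{\hat{t}^j}$ (and not $\hat{t}^j$) is the correct split point, observing that the LO-mode execution of the triggering job performed during $[a_{\hat{t}^j},\hat{t}^j)$ is already absorbed into its overall HI budget of $C_{\hat{t}^j}^{HI}$ used on $[a_{\hat{t}^j},t_f]$, so no job is double-counted across the boundary. I also need to verify that a job of $\tau_{\hat{t}^j}$ released after $a_{\hat{t}^j}$ with deadline strictly exceeding $t_f$ cannot execute in $[\hat{t}^k,t_f]$, which is precisely what Proposition~\ref{prop:x} provides; together with the minimum inter-arrival $T_{\hat{t}^j}$ this gives the floor bound that, after dropping the floor, collapses to the stated utilization expression.
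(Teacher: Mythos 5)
Your proof is correct and follows essentially the same route as the paper: the paper simply defers to case 2 of fact 3 in the cited EDF-VD analysis of Baruah et al., and that argument is precisely your decomposition at the release instant $a_{\hat{t}^j}$ of the overrunning (carry-over) job, charging at most $C_{\hat{t}^j}^{LO}$ per job before that point and at most $C_{\hat{t}^j}^{HI}$ per job with deadline at most $t_f$ afterwards, with the floor-counting collapsed to the utilization terms. The bookkeeping points you flag (no double counting across the split, and excluding jobs with deadline beyond $t_f$ via the minimality of $J$ together with \propref{prop:x}) are exactly the ingredients used in the cited proof, so no gap remains.
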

\begin{proof}
For the proof, refer to case 2 of fact 3 in \cite{Baruah2012}. 
\end{proof}

\begin{lemma}
\label{lem:hi1}
For \hi task $\tau_i$ in task set $\gamma_{HI}^{LO}(\hat{t}^k)$, the cumulative execution time $\eta_{i}^k(0,t_f)$ can be bounded as follows:
\begin{small}
\begin{align}
\label{eq:hi1}
\sups{\eta_{i}^k(0,t_f)} = (\frac{a_{\hat{t}^k}}{x}+(t_f-a_{\hat{t}^k}))u_{i}^{LO}
\end{align}
\end{small}
\end{lemma}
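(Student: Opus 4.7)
The plan is to follow the same template as \lemref{lem:hi0}, splitting the window $[0,t_f]$ at the release time $a_{\hat{t}^k}$ of the overrun job and bounding each piece separately. Throughout, recall that a task $\tau_i\in\gamma_{HI}^{LO}(\hat{t}^k)$ has not triggered its own mode switch by $\hat{t}^k$, so for the purposes of this lemma its jobs are still scheduled with low WCET $C_i^{LO}$ and with the tightened virtual deadlines of length $xT_i$. (If $\tau_i$ were to overrun strictly between $\hat{t}^k$ and $t_f$, that would induce a $(k{+}m)$-th mode switch analyzed separately by \thmref{thm:3}.)

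First, I would bound $\eta_i^k(0,a_{\hat{t}^k})$. In this sub-interval the system behaves exactly as EDF-VD in \lo mode for $\tau_i$: every job released before $a_{\hat{t}^k}$ has its virtual deadline $r+xT_i$ used by the scheduler. A standard EDF workload argument, driven by the fact that the effective density of $\tau_i$ under virtual deadlines is $u_i^{LO}/x$, yields
\begin{equation*}
\eta_i^k(0,a_{\hat{t}^k})\;\le\;\frac{a_{\hat{t}^k}}{x}\,u_i^{LO}.
\end{equation*}
This is exactly the piece used in the proof of case 3 of Fact 3 in \cite{Baruah2012}, and one can simply cite it.

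Second, I would bound $\eta_i^k(a_{\hat{t}^k},t_f)$. By \propref{prop:x}, every job executing in this sub-interval has an absolute deadline $\le t_f$. Because $\tau_i$ is treated as remaining in \lo mode over the window of interest, each such job carries at most $C_i^{LO}$ units of work, and the number of deadlines of $\tau_i$ lying in $(a_{\hat{t}^k},t_f]$ is at most $(t_f-a_{\hat{t}^k})/T_i$. This gives
\begin{equation*}
\eta_i^k(a_{\hat{t}^k},t_f)\;\le\;(t_f-a_{\hat{t}^k})\,u_i^{LO}.
\end{equation*}
Adding the two pieces produces the claimed bound \eqnref{eq:hi1}.

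The main technical obstacle is carrying a job across the split at $a_{\hat{t}^k}$ without double-counting its work: a job of $\tau_i$ may begin execution in $[0,a_{\hat{t}^k}]$ and finish after $a_{\hat{t}^k}$. I would handle this by working with actual execution (not demand) on each side of the split, so that the portion of the carry-over job executed in $[0,a_{\hat{t}^k}]$ is absorbed into the virtual-deadline workload bound, while only the remaining portion falls into the post-$a_{\hat{t}^k}$ term, which in turn is absorbed by the deadline-counting bound since that job's deadline satisfies $\le t_f$. The rest is bookkeeping, and the argument is essentially identical to the \lo-mode half of the standard EDF-VD proof.
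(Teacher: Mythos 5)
Your decomposition at $a_{\hat{t}^k}$ captures the right intuition (density $u_i^{LO}/x$ under virtual deadlines, $u_i^{LO}$ afterwards), and the paper itself disposes of this lemma by citing Fact~3 of \cite{Baruah2012}; however, your argument has a genuine gap in both pieces. The first claim, $\eta_i^k(0,a_{\hat{t}^k})\le \frac{a_{\hat{t}^k}}{x}u_i^{LO}$, is not a valid step: a density (demand-bound) argument limits the work of jobs whose \emph{virtual deadlines} fall in $[0,a_{\hat{t}^k}]$, not the execution $\tau_i$ actually receives in that time interval, because under EDF a job may execute long before its scheduling deadline. Concretely, take $T_i=10$, $C_i^{LO}=2$, $x=1/2$, $a_{\hat{t}^k}=1$: a job of $\tau_i$ released at time $0$ has virtual deadline $5$ and can occupy all of $[0,1]$ (no idle time is required to be absent, only jobs with later deadlines), so $\eta_i^k(0,a_{\hat{t}^k})=1$ while your bound gives $0.4$. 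Your ``absorption'' remark does not repair this, since a job released shortly before $a_{\hat{t}^k}$ whose virtual deadline lies beyond $a_{\hat{t}^k}$ is simply not counted by the prefix bound. The second piece leaks for the same reason: $a_{\hat{t}^k}$ is the release time of a job of a \emph{different} task, so the deadlines of $\tau_i$ need not align with it, and the number of deadlines of $\tau_i$ in $(a_{\hat{t}^k},t_f]$ can be $\lfloor (t_f-a_{\hat{t}^k})/T_i\rfloor+1$, not $(t_f-a_{\hat{t}^k})/T_i$; the extra job is exactly the carry-over you tried to push into the first piece. This is the crucial structural difference from \lemref{lem:hi0}, where the split point is a release time of the very task being bounded, so all of its earlier jobs provably have deadlines at most the split point.

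A correct argument (the route behind Fact~3 of \cite{Baruah2012}, which the paper invokes) classifies \emph{jobs}, not time intervals, and charges each executing job at most $C_i^{LO}$: by the minimality of $J$ (as in Case~2 of \propref{prop:x1}) no job of $\tau_i$ with virtual deadline exceeding $a_{\hat{t}^k}+x(t_f-a_{\hat{t}^k})$ can execute before $a_{\hat{t}^k}$; throughout $[a_{\hat{t}^k},\hat{t}^k]$ the overrunning job of $\tau_{\hat{t}^k}$ is pending with virtual deadline at most $a_{\hat{t}^k}+x(d_{\hat{t}^k}-a_{\hat{t}^k})\le a_{\hat{t}^k}+x(t_f-a_{\hat{t}^k})$, so EDF-VD blocks any job of $\tau_i$ with a later virtual deadline; and any job executing after $\hat{t}^k$ has actual deadline at most $t_f$ by \propref{prop:x}, hence release at most $t_f-T_i$. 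Counting the admissible jobs over the whole window $[0,t_f]$, using the minimum release separation $T_i$ and the virtual relative deadline $xT_i$, then yields \eqnref{eq:hi1}. (Note also that for its use in \thmref{thm:1} the bound is immediately relaxed to $\frac{t_f}{x}u_i^{LO}$, which follows from the simpler observation that every executing job has virtual deadline at most $t_f$; but as written your two sub-interval bounds do not establish the stated inequality.)
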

\begin{proof}
For the proof, refer to the fact 3 in \cite{Baruah2012}. 
\end{proof}

\subsubsection{\textbf{Putting it all together}}
\label{sec:pia}
Now, we are ready to establish the schedulability test condition. To prove \thmref{thm:3}, we first introduce two auxiliary theorems, \thmref{thm:1} and \thmref{thm:2}. In \thmref{thm:1}, the schedulability test condition is derived based on \lemref{lem:LOTask3}, \lemref{lem:hi0}, and \lemref{lem:hi1}. This test condition should rely on the previous mode switches. \thmref{thm:2} demonstrates the consistency of the test condition, by which the dependences among mode switches can be removed.     

\begin{theorem}
\label{thm:1}
At the $k$-th mode-switching point $\hat{t}^{k}$, $k$ ($k\ge{}1$) \hi tasks $\tau_{\hat{t}^{1}},\tau_{\hat{t}^{2}},\cdots,\tau_{\hat{t}^{k}}$ have switched into \hi mode. 
The system is schedulable if the service level $z_i^j$ at $\hat{t}^{j}$ satisfies the following conditions for \textbf{all} $j$ such that $1\le{j}\le{k}$. 
\begin{footnotesize}  
\begin{align}
\label{eq:thm3:m3}
&z_i^{j}\le{z_i^{j-1}}\\
\label{eq:thm3:m2}
&u_{\hat{t}^j}^{HI}+(1-x)(u_{LO}^{j}-u_{LO}^{j-1})+\frac{u_{\hat{t}^j}^{LO}}{u_{HI}^{LO}}(u_{LO}^{LO}-1)\le{0}
\end{align} 
\end{footnotesize}
\end{theorem}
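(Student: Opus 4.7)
The plan is to prove Theorem \ref{thm:1} by contradiction, reusing the scenario set up at the start of Section \ref{sec:corr}: suppose $J$ is the minimal job set that suffers a deadline miss at $t_f$, and $k$ mode switches at $\hat{t}^1<\cdots<\hat{t}^k$ have occurred in $[0,t_f]$. By the standard EDF argument, the processor is busy throughout $[0,t_f]$ executing only jobs whose deadlines are $\le t_f$ (by minimality of $J$), so $\sum_{\tau\in\gamma}\eta_\tau^k(0,t_f)>t_f$. I would then split $\gamma$ into $\gamma_{LO}$, the promoted set $\gamma_{HI}^{HI}(\hat{t}^k)=\{\tau_{\hat{t}^1},\dots,\tau_{\hat{t}^k}\}$, and $\gamma_{HI}^{LO}(\hat{t}^k)$, and aggregate the three upper bounds just proved.

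Summing Lemma \ref{lem:LOTask3} over $\tau_i\in\gamma_{LO}$, the LO-task contribution is
\begin{align*}
t_f\,u_{LO}^{LO}+(1-x)\sum_{j=1}^{k}(t_f-a_{\hat{t}^j})(u_{LO}^{j}-u_{LO}^{j-1}).
\end{align*}
Summing Lemma \ref{lem:hi0} over the promoted set gives $\sum_{j=1}^{k}[a_{\hat{t}^j}u_{\hat{t}^j}^{LO}+(t_f-a_{\hat{t}^j})u_{\hat{t}^j}^{HI}]$, and summing Lemma \ref{lem:hi1} over $\gamma_{HI}^{LO}(\hat{t}^k)$ (whose total LO-utilization is $u_{HI}^{LO}-\sum_{j=1}^{k}u_{\hat{t}^j}^{LO}$) gives $[a_{\hat{t}^k}/x+(t_f-a_{\hat{t}^k})](u_{HI}^{LO}-\sum_{j=1}^{k}u_{\hat{t}^j}^{LO})$. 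The next step is to multiply the hypothesis \eqref{eq:thm3:m2} by the positive weight $(t_f-a_{\hat{t}^j})$ and sum over $j$; this replaces the combined $(1-x)\sum_j(t_f-a_{\hat{t}^j})(u_{LO}^{j}-u_{LO}^{j-1})+\sum_j(t_f-a_{\hat{t}^j})u_{\hat{t}^j}^{HI}$ piece by $\tfrac{1-u_{LO}^{LO}}{u_{HI}^{LO}}\sum_j(t_f-a_{\hat{t}^j})u_{\hat{t}^j}^{LO}$. Plugging this back and using $a_{\hat{t}^j}=t_f-(t_f-a_{\hat{t}^j})$ to regroup, the surviving terms should collapse into an expression dominated by $t_f\,(u_{LO}^{LO}+u_{HI}^{LO}/x)$, which is $\le t_f$ by Theorem \ref{theorem:lowcrit}, contradicting the assumption.

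The main obstacle I anticipate is the bookkeeping, because Lemma \ref{lem:hi1} anchors its bound at the single instant $a_{\hat{t}^k}$ while \eqref{eq:thm3:m2} is indexed per mode switch $\hat{t}^j$. To line them up I would rewrite $u_{HI}^{LO}-\sum_j u_{\hat{t}^j}^{LO}$ telescopically, or pair each Lemma \ref{lem:hi0} ``LO-portion'' $a_{\hat{t}^j}u_{\hat{t}^j}^{LO}$ with the matching $u_{\hat{t}^j}^{LO}$-share of the Lemma \ref{lem:hi1} term, so that a single invocation of \eqref{eq:thm3:m2} per $j$ controls both. The monotonicity hypothesis \eqref{eq:thm3:m3} is consumed implicitly through Lemma \ref{lem:LOTask3}, whose derivation rules presume $z_i^j\le z_i^{j-1}$; no additional use of it should be required in the final assembly.
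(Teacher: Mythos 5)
Your proposal is essentially the paper's own proof: the paper likewise lower-bounds the demand over the busy interval ending at the first deadline miss by $t_f$, sums \lemref{lem:LOTask3}, \lemref{lem:hi0} and \lemref{lem:hi1} over $\gamma_{LO}$, $\gamma_{HI}^{HI}(\hat{t}^k)$ and $\gamma_{HI}^{LO}(\hat{t}^k)$, and regroups (Appendix \ref{appendix:I}) so that condition \eqnref{eq:thm3:m2}, weighted by the positive factors $(t_f-a_{\hat{t}^j})$, caps the total at $t_f$, with $u_{LO}^{LO}+u_{HI}^{LO}/x\le 1$ from \thmref{theorem:lowcrit} used to close the books exactly as you indicate. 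One small directional slip: after regrouping, the surviving expression (of the form $t_f\bigl(u_{LO}^{LO}+\tfrac{(1-u_{LO}^{LO})\sum_j u_{\hat{t}^j}^{LO}}{u_{HI}^{LO}}+\tfrac{1}{x}\sum_{\tau_i\in\gamma_{HI}^{LO}(\hat{t}^k)}u_i^{LO}\bigr)$) is not dominated by $t_f(u_{LO}^{LO}+u_{HI}^{LO}/x)$ but rather dominates it; nonetheless it is still $\le t_f$ by precisely the same condition of \thmref{theorem:lowcrit}, so the contradiction goes through as planned.
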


\begin{proof}
The condition $z_i^{j}\le{z_i^{j-1}}$ is a basic assumption of our model, which guarantees the satisfaction of \lemref{lem:LOTask3}, \ruleref{rule:3}, and \ruleref{rule:4}. Therefore, $z_i^{j}\le{z_i^{j-1}}$ needs to be satisfied.

Let $N^{k}_{\gamma}$ denote the cumulative execution time of task set $\gamma$ during the interval $[0,t_f]$ when the $k^{\textit{th}}$ mode switch occurs. To calculate $N^{k}_{\gamma}$, let us sum the the cumulative execution time of all tasks over $[0,t_f]$. 

For the \lo task set $\gamma_{LO}$, we can bound $N^{k}_{\gamma_{LO}}$ according to \lemref{lem:LOTask3}.
\begin{small}
\begin{align}
N^{k}_{\gamma_{LO}}\le{}\sum_{\tau_i\in\gamma_{LO}}(t_fu_i^{LO}+\sum_{j=1}^{k}{\psi_i^{j}})
\label{eq:1}
\end{align}
\end{small}

For the \hi task set $\gamma_{HI}^{HI}(\hat{t}^{k})$, which contains $k$ \hi tasks (i.e., $\|\gamma_{HI}^{HI}(\hat{t}^{k})\|=k$), we can derive the cumulative execution time according to \lemref{lem:hi0}.
\begin{small}
\begin{align}
N^{k}_{\gamma_{HI}^{HI}(\hat{t}^{k})}\le{}\sum_{j=1}^{k}{(a_{\hat{t}^j}\cdot{}u_{\hat{t}^j}^{LO}+(t_f-a_{\hat{t}^j})\cdot{}u_{\hat{t}^j}^{HI})}
\label{eq:2}
\end{align}
\end{small}

For the \hi tasks in $\gamma_{HI}^{LO}(\hat{t}^{k})$, which have not entered \hi mode at $\hat{t}^{k}$, we can derive the cumulative execution time according to \lemref{lem:hi1}.  
\begin{small}
\begin{align}
N^{k}_{\gamma_{HI}^{LO}(\hat{t}^{k})}&\le{}\sum_{\tau_i\in\gamma_{HI}^{LO}(\hat{t}^{k})}{(\frac{a_{\hat{t}^k}}{x}+(t_f-a_{\hat{t}^k}))u_{i}^{LO}}  \nonumber \\
&\ (since \ x<1\ and\ a_{\hat{t}^k}\le{t_f}) \nonumber\\
&\le{}\sum_{\tau_i\in\gamma_{HI}^{LO}(\hat{t}^{k})}{\frac{t_f}{x}u_{i}^{LO}} 
\label{eq:3}
\end{align}
\end{small}

Based on \eqnref{eq:1}, \eqnref{eq:2} and \eqnref{eq:3}, $N^{k}_{\gamma}$ can be bounded as shown in \eqnref{eq:main:0}. The complete derivation is given in Appendix \ref{appendix:I} because of space limitations. 
\begingroup
\allowdisplaybreaks
\begin{scriptsize}
\begin{align}
&N^{k}_{\gamma}=N^{k}_{\gamma_{LO}}+N^{k}_{\gamma_{HI}^{HI}(\hat{t}^{k})}+N^{k}_{\gamma_{HI}^{LO}(\hat{t}^{k})} \nonumber \\
\le&t_f+\sum_{j=1}^{k}(t_f-a_{\hat{t}^j})\bigg(u_{\hat{t}^j}^{HI}+(1-x)(u_{LO}^{j}-u_{LO}^{j-1})+\frac{u_{\hat{t}^j}^{LO}}{u_{HI}^{LO}}(u_{LO}^{LO}-1)\bigg) 
\label{eq:main:0}
\end{align}
\end{scriptsize}
\endgroup
Since the first deadline miss occurs at time instant $t_f$, the following holds\footnote{Note that there is no idle instant within the interval $[0,t_f]$. 
Otherwise, jobs from set $J$ with release times at or after the latest idle instant could form a smaller job set causing a deadline miss at $t_f$, which would contradict the minimality of $J$.}:
\begin{small}
\begin{align*}
N^{k}_{\gamma}>{t_f}
\end{align*}
\end{small}

Therefore, 
\begin{footnotesize}
\begin{align*}
&\sum_{j=1}^{k}(t_f-a_{\hat{t}^j})\bigg(u_{\hat{t}^j}^{HI}+(1-x)(u_{LO}^{j}-u_{LO}^{j-1})+\frac{u_{\hat{t}^j}^{LO}}{u_{HI}^{LO}}(u_{LO}^{LO}-1)\bigg)>{0}
\end{align*}
\end{footnotesize}

Taking the contrapositive, we obtain
\begin{footnotesize}
\begin{align}
&\sum_{j=1}^{k}(t_f-a_{\hat{t}^j})\underline{\bigg(u_{\hat{t}^j}^{HI}+(1-x)(u_{LO}^{j}-u_{LO}^{j-1})+\frac{u_{\hat{t}^j}^{LO}}{u_{HI}^{LO}}(u_{LO}^{LO}-1)\bigg)}\le{0}
\label{eq:4}
\end{align}
\end{footnotesize}

Since $t_f-a_{\hat{t}^j}>0$, to guarantee the system schedulability of task set $\gamma$ at the $k^{\textit{th}}$ mode switch, it is sufficient to ensure that the term indicated in \eqnref{eq:4} is less than 0 for all $j$ such that $1\le{j}\le{k}$.
\begin{footnotesize}
\begin{align}
&\forall j\ such\ that\ 1\le{j}\le{k}: \nonumber \\
&u_{\hat{t}^j}^{HI}+(1-x)(u_{LO}^{j}-u_{LO}^{j-1})+\frac{u_{\hat{t}^j}^{LO}}{u_{HI}^{LO}}(u_{LO}^{LO}-1)\le{0}
\label{eq:5}
\end{align}
\end{footnotesize}    
\end{proof}

In \thmref{thm:1}, at the $k^{\textit{th}}$ mode-switching point, additional conditions are imposed on the previous $k-1$ mode switches. Therefore, to remove this dependence, we require that these imposed conditions should be consistent with the decision-making at the previous mode-switching points $\hat{t}^j$ ($j<k$). We demonstrate this consistency in \thmref{thm:2}.   

\begin{theorem}
\label{thm:2}   
The new conditions imposed on $u_{LO}^{1},u_{LO}^{2},\cdots,u_{LO}^{k-1}$ by the $k^{\textit{th}}$ mode switch are consistent with the decisions that have been made at the previous mode-switching points.      
\end{theorem}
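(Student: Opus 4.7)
The goal is to show that the collection of inequalities produced by Theorem~\ref{thm:1} at the $k$-th mode switch does not retroactively tighten the constraints that were already resolved at the earlier mode-switching points $\hat{t}^{1},\dots,\hat{t}^{k-1}$. If this consistency holds, then the service-level decisions made in the past can stand unchanged, and the schedulability test effectively reduces to verifying only the new inequality indexed by $j=k$ at each mode-switching point, which is exactly the per-step condition \eqnref{eq:thm4:m0} of \thmref{thm:3}.

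The plan is to inspect the structure of \eqnref{eq:thm3:m2} index by index. For each $j$, the inequality
\begin{equation*}
u_{\hat{t}^j}^{HI}+(1-x)(u_{LO}^{j}-u_{LO}^{j-1})+\frac{u_{\hat{t}^j}^{LO}}{u_{HI}^{LO}}(u_{LO}^{LO}-1)\le 0
\end{equation*}
involves only static system parameters ($x$, $u_{HI}^{LO}$, $u_{LO}^{LO}$), identifiers of the task that triggered the $j$-th switch ($u_{\hat{t}^j}^{LO}$, $u_{\hat{t}^j}^{HI}$), and the two consecutive \lo service utilizations $u_{LO}^{j-1}$ and $u_{LO}^j$. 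Crucially, no variable with an index strictly greater than $j$ appears. I would make this observation explicit and use it to justify that the $j$-th inequality can be evaluated as soon as $\hat{t}^j$ occurs.

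Given this locality, I would proceed by induction on $k$. For $k=1$ there is nothing to check, since there are no previous mode switches. For the inductive step, the full system of inequalities appearing in \thmref{thm:1} at $\hat{t}^k$ splits into (i) the inequalities for $j=1,\dots,k-1$, which are identical (term by term) to those already required at $\hat{t}^{k-1}$ and therefore, by the induction hypothesis, are consistent with the service-level choices committed at $\hat{t}^{1},\dots,\hat{t}^{k-1}$, and (ii) the fresh inequality at $j=k$, which constrains only the current choice $u_{LO}^k$. Thus the $k$-th mode switch leaves the previous $k-1$ decisions untouched, which is precisely the consistency claim.

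The main obstacle, and the only point that needs real care, is the \emph{locality} observation itself: I must rule out any implicit backward coupling, such as the possibility that a later overrun changes the effective value of $u_{LO}^{j-1}$ through the dynamic service-tuning rule. This follows because \defref{def:6} fixes $z_i^j$ as a purely forward-looking quantity determined at $\hat{t}^j$, and assumption (1) ($z_i^k\le z_i^{k-1}$) together with \eqnref{eq:thm4:m01} forbids any subsequent adjustment of $z_i^{j-1}$ or $z_i^{j}$ once $j$ has passed. Once this monotonicity-and-locality argument is laid out, the induction closes immediately, and the dependences of \thmref{thm:1} on the history collapse to the single per-step inequality stated in \thmref{thm:3}.
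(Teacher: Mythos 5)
Your proposal is correct and rests on the same key observation as the paper's own (one-line) proof: the inequalities of \thmref{thm:1} for indices $j\le k-1$ at the $k^{\textit{th}}$ mode switch are term-by-term identical to the conditions already imposed and satisfied at the earlier switching points, so no previous decision is retroactively tightened. Your induction wrapper and the explicit locality/no-backward-coupling check are just a more detailed packaging of that same argument, so nothing further is needed.
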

\begin{proof}
The conditions given in \thmref{thm:1} for decisions that have been made at the previous $k-1$ mode-switching points $\hat{t}^{j}$ ($1\le{}j\le{k-1}$) are exactly the same as the new conditions imposed on $u_{LO}^{1},u_{LO}^{2},\cdots,u_{LO}^{k-1}$ with the $k^{\textit{th}}$ mode switch. Therefore, their consistency is guaranteed.     
\end{proof}

\noindent\textbf{\FMC schedulability}: Now, we are ready to prove \thmref{thm:3} using \thmref{thm:1} and \thmref{thm:2}.
 
\begin{proof}
According to \thmref{thm:2}, the constraints in \thmref{thm:1} that are imposed on  $u_{LO}^{1},u_{LO}^{2},\cdots,u_{LO}^{k-1}$ with the $k^{\textit{th}}$ mode switch have already been covered by the previous $k-1$ mode switches. Therefore, we need to check only two conditions: \eqnref{eq:thm3:m3} and \eqnref{eq:thm3:m2} with $j=k$.
\end{proof}


\subsection{Feasibility of Algorithm}
\label{sec:fea}
In this section, we investigate the region of $x$ values that can guarantee the feasibility of the run-time algorithm. The selection of any $x$ from this region during the off-line phase can guarantee that a feasible solution as determined by \thmref{thm:3} can always be found during run time. To derive this region, we first introduce several definitions and properties that will be useful for the later proof of feasibility. 

According to \eqnref{eq:thm4:m0} in \thmref{thm:3}, when $\frac{u_{\hat{t}^k}^{LO}}{u_{HI}^{LO}}(1-u_{LO}^{LO})-u_{\hat{t}^k}^{HI}>{0}$, we do not need to reduce the utilization of \lo tasks. The overrun of the \hi task at this mode-switching point is covered by the system resource margin. Only when $\frac{u_{\hat{t}^k}^{LO}}{u_{HI}^{LO}}(1-u_{LO}^{LO})-u_{\hat{t}^k}^{HI}\le{0}$, $u_{LO}^{k}$ should be decreased to compensate for the overrun of the \hi task. For simplicity, we define a discriminant function $\phi(\tau_i)$ for each \hi task $\tau_i$ to indicate whether the overrun of $\tau_i$ can be covered by the system resource margin.    
\begin{definition}
\label{def:3}
 $\phi(\tau_i)=\frac{u_{i}^{LO}}{u_{HI}^{LO}}(1-u_{LO}^{LO})-u_{i}^{HI}\ \ \ (\tau_i\in\gamma_{HI})$
\end{definition}  

\begin{definition}
\label{def:4}
A \hi task $\tau_i$ is called margin \hi task if $\phi(\tau_i)>0$. Otherwise, $\tau_i$ is called compensation  \hi task.    
\end{definition}

\begin{definition}
\label{def:5}
The margin \hi task set and the compensation \hi task set are defined as $\gamma_{HI}^{\circ}=\{\tau_i\in\gamma_{HI}|\phi(\tau_i)>{0}\}$ and $\gamma_{HI}^{\ast}=\{\tau_i\in\gamma_{HI}|\phi(\tau_i)\le{0}\}$, respectively. $\gamma_{HI}= \gamma_{HI}^{\circ} \cup \gamma_{HI}^{\ast}$.    
\end{definition}   

With the definitions given above, we can now perform the feasibility analysis for $x$.  
\begin{theorem}
\label{thm:5}
Given the mandatory utilization $u_{LO}^{man}$, any $x$ that satisfies the following condition can guarantee that a feasible solution as determined by \thmref{thm:3} can always be found during run time.  
\begin{footnotesize}  
\begin{align}
\label{eq:thm5:m0}
(1-x)(u_{LO}^{LO}-u_{LO}^{man})+\sum\limits_{\tau_{i}\in{\gamma_{HI}^{\ast}}}\phi(\tau_i)\ge{}0
\end{align}
\end{footnotesize} 
\end{theorem}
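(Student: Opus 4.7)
The plan is to identify the \emph{worst-case} sequence of mode switches and show that the stated inequality is exactly what is needed for the tightest possible bound from \thmref{thm:3} to still accommodate the mandatory utilization $u_{LO}^{man}$. Intuitively, mode switches of margin \hi tasks (those in $\gamma_{HI}^{\circ}$) only relax the bound in \eqnref{eq:thm4:m0} because their $\phi(\tau_i)>0$, whereas mode switches of compensation \hi tasks (those in $\gamma_{HI}^{\ast}$) force a strict reduction of $u_{LO}^{k}$ since $\phi(\tau_i)\le 0$. Hence feasibility is governed entirely by what happens when all of $\gamma_{HI}^{\ast}$ has switched, and the order of the switches is irrelevant because the bound is additive in $\phi$.

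I would proceed as follows. First, unfold the constraint \eqnref{eq:thm4:m0} of \thmref{thm:3} for the $k^{\textit{th}}$ mode switch into the equivalent form $u_{LO}^{k}\le u_{LO}^{k-1}+\frac{\phi(\tau_{\hat{t}^k})}{1-x}$. Second, pick the run-time tuning policy that always saturates this bound (so $u_{LO}^{k}=u_{LO}^{k-1}+\frac{\phi(\tau_{\hat{t}^k})}{1-x}$ whenever $\phi(\tau_{\hat{t}^k})\le 0$, and $u_{LO}^{k}=u_{LO}^{k-1}$ whenever $\phi(\tau_{\hat{t}^k})>0$); this choice produces the largest feasible $u_{LO}^{k}$ at each step while still respecting the monotonicity \eqnref{eq:thm4:m01}. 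Third, telescope along any sequence of mode switches: after an arbitrary prefix of switches, $u_{LO}^{k}\ge u_{LO}^{LO}+\sum_{\tau_i\in S^{\ast}}\frac{\phi(\tau_i)}{1-x}$, where $S^{\ast}\subseteq\gamma_{HI}^{\ast}$ is the subset of compensation tasks that have switched so far. Since each summand is non-positive, the tightest value over all possible prefixes is attained when $S^{\ast}=\gamma_{HI}^{\ast}$, giving the worst-case bound
\begin{equation*}
u_{LO}^{k}\ \ge\ u_{LO}^{LO}+\sum_{\tau_i\in\gamma_{HI}^{\ast}}\frac{\phi(\tau_i)}{1-x}.
\end{equation*}

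Fourth, I would require this worst-case value to be no smaller than $u_{LO}^{man}$ so that the mandatory service level is always deliverable; multiplying through by $(1-x)>0$ and rearranging yields exactly \eqnref{eq:thm5:m0}. Finally, for the converse direction of the ``any $x$'' claim, I would note that the condition is also \emph{sufficient}: under \eqnref{eq:thm5:m0}, the above saturating policy always produces an admissible $u_{LO}^{k}\ge u_{LO}^{man}$ after every mode switch, so a valid service-level assignment (for instance, the uniform tuning $z_i^{k}=u_{LO}^{k}/u_{LO}^{LO}$ or any other policy realising $u_{LO}^{k}$ while respecting \eqnref{eq:thm4:m01}) can always be produced at run time.

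The main obstacle will be the careful argument that the order of mode switches does not matter and that the \emph{infimum} of achievable $u_{LO}^{k}$ values across all interleavings is attained precisely when every compensation task has switched. This requires verifying that (i) margin-task switches can always be absorbed without reducing $u_{LO}^{k}$ thanks to $\phi(\tau_i)>0$, and (ii) the monotonicity constraint $z_i^{k}\le z_i^{k-1}$ does not obstruct saturation of \eqnref{eq:thm4:m0}, which is guaranteed as long as the chosen service-level update is itself monotone non-increasing in each component (easy to arrange, e.g., by scaling all $z_i^{k}$ uniformly by the same ratio $u_{LO}^{k}/u_{LO}^{k-1}\le 1$). Once these two observations are in hand, the algebraic rearrangement to \eqnref{eq:thm5:m0} is immediate.
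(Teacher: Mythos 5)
Your proposal is correct and follows essentially the same route as the paper: it telescopes the per-switch bound of \thmref{thm:3} (with margin tasks contributing nothing and compensation tasks contributing $\phi(\tau_i)/(1-x)\le 0$), identifies the worst case as all of $\gamma_{HI}^{\ast}$ having switched, requires the resulting bound to stay above $u_{LO}^{man}$, and rearranges using $(1-x)>0$ to obtain \eqnref{eq:thm5:m0}. Your explicit saturating/uniform-scaling policy is simply a more detailed justification of what the paper compresses into ``by iterating the conditions in \thmref{thm:3}.''
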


\begin{proof}
Recall that $\gamma_{HI}^{HI}(\hat{t}^{k})$ is the set of \hi tasks that have entered \hi mode at $\hat{t}^{k}$. By iterating the conditions in \thmref{thm:3}, a direct solution for $u_{LO}^{k}$ can be obtained as follows: 
\begin{footnotesize}  
\begin{align}
u_{LO}^{k}\le{u_{LO}^{LO}}+\frac{\sum\limits_{\tau_{i}\in{\gamma_{HI}^{\ast}\cap{\gamma_{HI}^{HI}(\hat{t}^{k})}}}\phi(\tau_i)}{(1-x)}
\label{eq:5:1}
\end{align}
\end{footnotesize}

To guarantee the execution of the mandatory portions of \lo tasks, the following condition should be satisfied for all $k$:
\begin{footnotesize}  
\begin{align}
u_{LO}^{man}\le{}u_{LO}^{k}\le{u_{LO}^{LO}}+\frac{\sum\limits_{\tau_{i}\in{\gamma_{HI}^{\ast}\cap{\gamma_{HI}^{HI}(\hat{t}^{k})}}}\phi(\tau_i)}{(1-x)}
\label{eq:5:0}
\end{align}
\end{footnotesize}   
Since the right-hand side of \eqnref{eq:5:0} is non-increasing with respect to the number of overrun \hi tasks (i.e., $k$), the worst-case scenario is that all \hi tasks in $\gamma_{HI}$ enter \hi mode. If mandatory service can be guaranteed in this worst-case scenario, then the feasibility of the proposed algorithm is ensured. Therefore, condition \eqnref{eq:5:0} can be rewritten as \eqnref{eq:thm5:15}.      
\begin{footnotesize}  
\begin{align}
&u_{LO}^{LO}+\frac{\sum\limits_{\tau_{i}\in{\gamma_{HI}^{\ast}}}\phi(\tau_i)}{(1-x)}\ge{}u_{LO}^{man} \nonumber \\
\Rightarrow\ \ \ &(1-x)(u_{LO}^{LO}-u_{LO}^{man})+\sum\limits_{\tau_{i}\in{\gamma_{HI}^{\ast}}}\phi(\tau_i)\ge{}0 
\label{eq:thm5:15}
\end{align}
\end{footnotesize}
{
Note that $u_{LO}^{man}$ is the mandatory utilization defined as $u_{LO}^{man}=\sum_{\tau_i\in{\gamma_{LO}}}z_{i}^{man}\cdot{}u_i^{LO}$, where the item $z_{i}^{man}\cdot{}u_i^{LO}$ can be considered as a mandatory part which affects the correctness of the result in imprecise computation model~\cite{LinIMC}.
}
\vskip -1em   
\end{proof}
{
Now, we use the following example to illustrate how to test the feasibility of FMC-EDF-VD. }
\begin{example}
{
Considering the task system in Example \ref{example1}, we can derive $x=\frac{u_{HI}^{LO}}{1-u_{LO}^{LO}}=\frac{1}{2}$ according to \thmref{theorem:lowcrit}. For \hi tasks, one can compute discriminant functions $\phi(\tau_1)=\phi(\tau_2)=\phi(\tau_3)=\phi(\tau_4)=-\frac{1}{20}$ in accordance with \defref{def:3}. The feasibility of $x$ is validated by checking condition \eqnref{eq:thm5:m0} in \thmref{thm:5}.
\begin{small}
\begin{align}
&(1-x)(u_{LO}^{LO}-u_{LO}^{man})+\sum\limits_{\tau_{i}\in{\gamma_{HI}^{\ast}}}\phi(\tau_i) \nonumber\\
=&(1-\frac{1}{2})\times{}\frac{2}{5}-4\times{}\frac{1}{20}=0 \nonumber
\end{align}
\end{small}
Thus, we know $x=\frac{1}{2}$ that is feasible for scheduling using FMC-EDF-VD.}
\end{example}
\vskip -3em
\section{Service level tuning strategy}
\label{sec:slts}
\thmref{thm:3} provides an important criterion for run-time service level tuning. By checking the conditions in \thmref{thm:3}, one can determine how much utilization can be reserved for \lo task execution to compensate for the overruns. In general, various tuning strategies can be specified by the user as long as the condition in \thmref{thm:3} is satisfied during run time. In this paper, we present a uniform tuning strategy and a dropping-off strategy to demonstrate the performance of FMC.     
\subsection{Dropping-off strategy}
To compensate for overruns, the dropping-off strategy partially drops \lo tasks by assigning $z_i^k=0$ for dropped tasks. To maximize the utilization of \lo tasks, the tasks to be dropped can be selected according to their utilization. At each mode-switching point $\hat{t}^k$, tasks with less utilization are given higher priority for dropping. To implement this selection strategy, we can create a task table $TA_{LO}$ during the off-line phase by sorting the \lo tasks in ascending order of their utilization. During run time, 
the utilization reduction $U_{R}^k$ that is required to compensate for the $k^\textit{th}$ mode switch is determined according to \thmref{thm:3}. Based on $TA_{LO}$, the set $\gamma_{LO}^k$ of tasks that are dropped at the $k^\textit{th}$ mode-switching point is determined via binary search. Note that other selection criteria, such as job completion percentage, can also be applied to select the \lo tasks to be dropped.

\subsection{Uniform tuning strategy}
In this section, we present a uniform tuning strategy in which $z_i^k=z^k$ holds for all \lo tasks. The service levels $z_i^k$ of all \lo tasks $\tau_i$ are uniformly set to $z^k$ at the $k^{\textit{th}}$ mode-switching point. By applying $z_i^k=z^k$ in the conditions given in \thmref{thm:3}, the uniform service level $z^k$ can be directly computed using \eqnref{eq:thm4:m011} in \thmref{thm:aa}. 

\begin{theorem}
\label{thm:aa}
The system is schedulable at the $k-1^{\textit{th}}$ mode-switching point with a uniform $z^{k-1}$. At the $k^{\textit{th}}$ mode-switching point $\hat{t}^k$, the system is still schedulable if $z^k$ is determined as follows:
\begin{footnotesize}  
\begin{align}
\label{eq:thm4:m011}
0\le{}z^{k}\le{z^{k-1}}+\min{}\Big(0,\frac{\frac{u_{\hat{t}^k}^{LO}}{u_{HI}^{LO}}(1-u_{LO}^{LO})-u_{\hat{t}^k}^{HI}}{(1-x)u_{LO}^{LO}}\Big)
\end{align}
\end{footnotesize}
where $u_{\hat{t}^k}^{LO}$ and $u_{\hat{t}^k}^{HI}$ denote low and high utilization, respectively, of the \hi task $\tau_{\hat{t}^k}$ that suffers an overrun at $\hat{t}^k$.    
\end{theorem}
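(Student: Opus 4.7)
The plan is to derive this uniform-tuning result as a direct specialization of Theorem~\ref{thm:3}, so the proof is essentially an algebraic substitution rather than a new inductive argument. I would begin by substituting $z_i^k = z^k$ (for every $\tau_i \in \gamma_{LO}$) into the definition of $u_{LO}^{k}$ given in Section~\ref{sec:model}, which yields $u_{LO}^k = z^k \sum_{\tau_i \in \gamma_{LO}} u_i^{LO} = z^k \cdot u_{LO}^{LO}$, and similarly $u_{LO}^{k-1} = z^{k-1} \cdot u_{LO}^{LO}$. This reduces the system-wide utilization accounting to a single scalar in $z^k$.

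Next I would plug these expressions into condition \eqref{eq:thm4:m0} of Theorem~\ref{thm:3}, obtaining
\begin{footnotesize}
\begin{align*}
z^k \cdot u_{LO}^{LO} \;\le\; z^{k-1} \cdot u_{LO}^{LO} + \frac{\frac{u_{\hat{t}^k}^{LO}}{u_{HI}^{LO}}(1 - u_{LO}^{LO}) - u_{\hat{t}^k}^{HI}}{1-x}.
\end{align*}
\end{footnotesize}
Dividing both sides by $u_{LO}^{LO}$ (which is strictly positive by the standing assumption $u_{LO}^{LO} + u_{HI}^{HI} > 1$ combined with the presence of LO tasks) gives the upper bound $z^k \le z^{k-1} + \frac{\frac{u_{\hat{t}^k}^{LO}}{u_{HI}^{LO}}(1-u_{LO}^{LO}) - u_{\hat{t}^k}^{HI}}{(1-x)\, u_{LO}^{LO}}$.

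Then I would handle condition \eqref{eq:thm4:m01}, which under the uniform substitution collapses to $z^k \le z^{k-1}$ (it must hold for every task, but all tasks now share the same service level). Combining the two upper bounds as a minimum yields exactly the right-hand side in \eqref{eq:thm4:m011}, namely $z^{k-1} + \min(0, \cdot)$, since whenever the margin term $\frac{u_{\hat{t}^k}^{LO}}{u_{HI}^{LO}}(1-u_{LO}^{LO}) - u_{\hat{t}^k}^{HI}$ is positive we are dominated by $z^{k-1}$, and otherwise by the tightened bound. The extra requirement $z^k \ge 0$ is simply the physical constraint that service levels are non-negative. Any $z^k$ satisfying \eqref{eq:thm4:m011} therefore satisfies both hypotheses of Theorem~\ref{thm:3}, which delivers schedulability at $\hat{t}^k$.

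There is no real obstacle here, since the heavy lifting (the induction on carry-over jobs and the bounding of cumulative execution times) has already been done in Theorem~\ref{thm:3}. The only points that require a sentence of justification are (i) that $u_{LO}^{LO} > 0$ so the division is legitimate (otherwise there are no LO tasks to tune and the statement is vacuous), and (ii) that the inductive hypothesis of schedulability at $\hat{t}^{k-1}$ with the uniform $z^{k-1}$ is precisely what allows Theorem~\ref{thm:3} to be invoked at step $k$, because the consistency result of Theorem~\ref{thm:2} guarantees that no new constraints are imposed on earlier mode-switching points.
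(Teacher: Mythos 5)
Your proposal is correct and follows essentially the same route as the paper's own proof: substitute the uniform level so that $u_{LO}^{k}=z^{k}\cdot u_{LO}^{LO}$, plug this into conditions \eqref{eq:thm4:m0} and \eqref{eq:thm4:m01} of Theorem~\ref{thm:3}, and combine the two resulting bounds into the $\min$ expression of \eqref{eq:thm4:m011}. The paper states this in one line, while you additionally spell out the division by $u_{LO}^{LO}>0$ and the role of Theorem~\ref{thm:2}, which are harmless elaborations of the same argument.
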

\begin{proof}
In the uniform tuning strategy, $z_i^k=z^k$ holds for any \lo task $\tau_i$. Recall that $u_{LO}^{k}=\sum_{\tau_i\in{\gamma_{LO}}}z_i^k\cdot{}u_i^{LO}$. Thus, we can obtain \eqnref{eq:thm4:m011} by combining the two conditions expressed in \eqnref{eq:thm4:m0} and \eqnref{eq:thm4:m01} in \thmref{thm:3}. 
\vskip -1em   
\end{proof}
\vskip -3em

\subsection{Case study}
\label{sec:cs}
{
In this case study, we firstly use the task system in Example \ref{example1} to illustrate how uniform tuning strategy and dropping-off strategy work in FMC. Then, we implement the uniform tuning strategy in our simulation framework (presented in Appendix \ref{appendix:III}) to demonstrate the graceful \lo service degradation of FMC.
  
First of all, we consider the generalized conditions presented in \thmref{thm:3}, which determine how much utilization can be reserved for \lo task execution to compensate for the overruns. By applying the task system presented in Example \ref{example1} to \thmref{thm:3}, we can the following utilization conditions：
\begin{align}
\label{example:gen:1}
u_{LO}^{k}-u_{LO}^{k-1}&\le{\frac{\frac{u_{\hat{t}^k}^{LO}}{u_{HI}^{LO}}(1-u_{LO}^{LO})-u_{\hat{t}^k}^{HI}}{(1-x)}}=-\frac{1}{10} \\
\label{example:gen:2}
z_i^k&\le{}z_i^{k-1}\ \ (\forall \tau_i \in \gamma_{LO})
\end{align}
Since the \hi tasks are identical, each overrun will result in identical utilization reduction of $\frac{1}{10}$, as shown in \eqnref{example:gen:1}. Now, we illustrate how uniform tuning strategy and dropping-off strategy work based on these generalized conditions \eqnref{example:gen:1} and \eqnref{example:gen:2}. 
\begin{itemize}
\item For dropping-off strategy by assigning $z_i^k=0$ for dropped tasks, the system is required to drop off a portion of \lo tasks to compensate for the overruns of one \hi task. For example, when one \hi task overruns its $C_i^L$, \lo task $\tau_5$ may decrease its execution budget from 30 to 10, while \lo task $\tau_6$ is executed without degradation. By this way, the service degradation of $\tau_5$ results in utilization reduction of $\frac{1}{10}$ to accommodate one \hi overrun. The dropping-off process is summarized in \tabref{ex:e4}.       
%
\item For uniform tuning strategy by restricting $z_i^k=z^k$ for all \lo tasks, each overrun will result in an identical reduction of 0.25 in $z_k$, such that the condition \eqnref{example:gen:1} is satisfied. Therefore, the service level $z_k$ for operation in $k$-level \hi mode can be expressed as $z_k=1-0.25\cdot{}k$.  
\item By contrast, if one were to apply IMC~\cite{DiLiu} to this task set, the guaranteed service level would be 0. This means that any overrun would result in the dropping off of all \lo tasks.   
\end{itemize}

\begin{table}[h]
\centering
\caption{Low-criticality service levels}
\label{ex:e4}
\begin{tabular}{|c|c|c|c|c|}
\hline
Number of Overrun $k$    & $1$   & $2$    & $3$    & $4$  \\ \hline
Utilization $u_{LO}^{k}$ & 0.3   & 0.2    & 0.1    & 0  \\ \hline
Execution Budget of $\tau_5$      & 10   & 0      & 0      & 0 \\ \hline
Execution Budget of $\tau_6$      & 75   & 60      & 30      & 0 \\ \hline
\end{tabular}
\end{table}
}

Next, we evaluate the implementation of the FMC-EDF-VD run-time system in our simulation framework to demonstrate the graceful \lo service degradation of FMC. In this case study, the uniform tuning strategy is applied for demonstration. We ran the simulation for $2\times{}10^6$ time units, which contains $5\times{}10^4$ \hi jobs. We set the \hi job behavior probability to 0.1. The simulation process is detailed in Appendix \ref{appendix:III}.

\figref{fig:gd} shows the run-time service levels for both FMC and IMC~\cite{DiLiu}.
The lower bounds on the service levels with different numbers of mode switches, as discussed above, for FMC and IMC are represented by red and black lines, respectively, in \figref{fig:gd}. The dashed green line represents service level $z^{k-1}$ for operation in $(k-1)$-level \hi mode for FMC. The collected run-time service levels as scheduled by FMC are represented in the form of box-whisker plots with blue dots.      
\begin{figure}
\centering
\includegraphics[width=0.8\columnwidth,height=0.42\columnwidth]{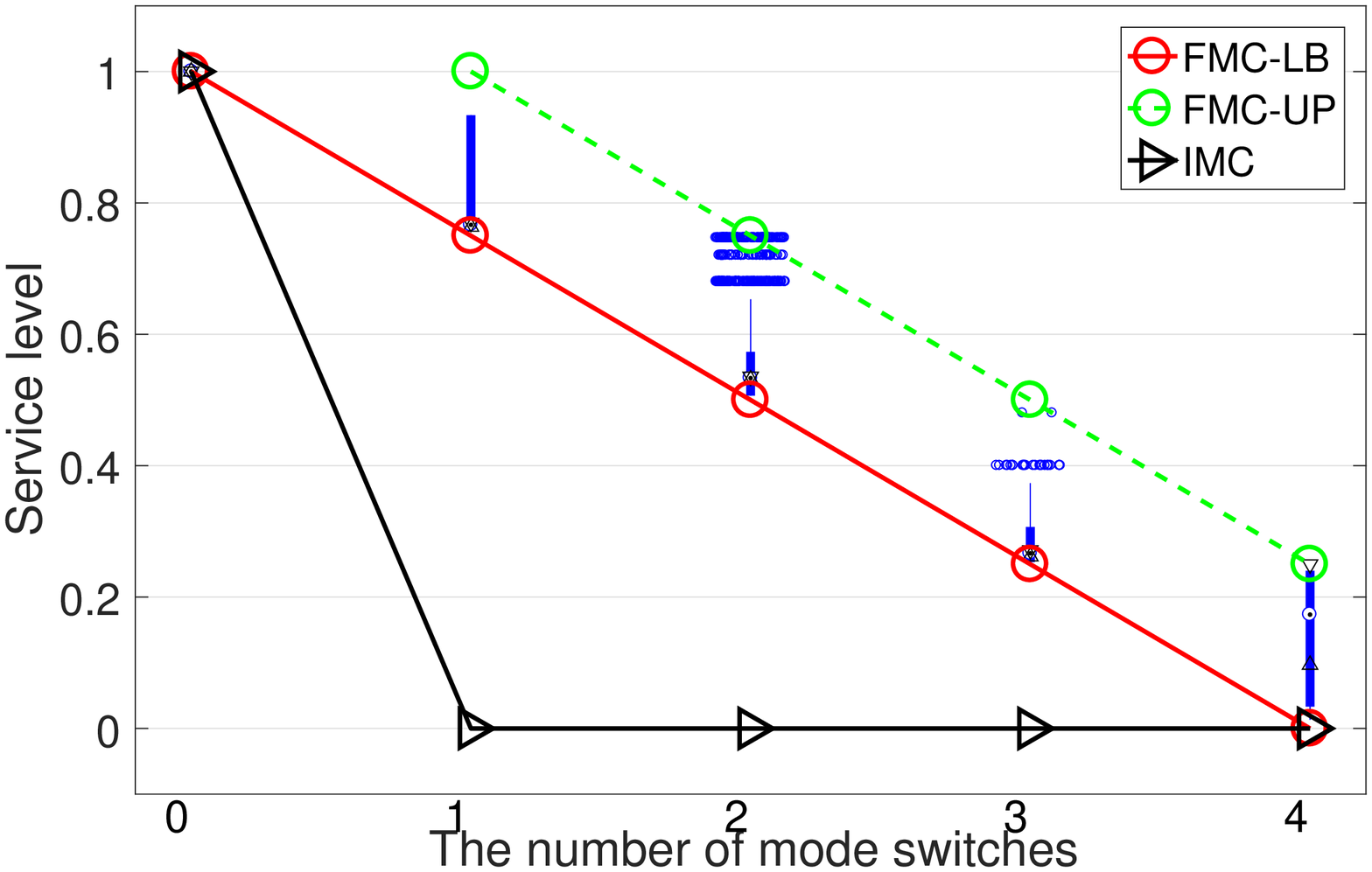}
 \caption{Service level of \lo tasks under the different number of mode switches.}
  \label{fig:gd}
\end{figure}

As shown in \figref{fig:gd}, FMC can gracefully degrade the \lo service level as the number of mode switches increases. By contrast, IMC fails to respond to the variability in the workload. As long as not all \hi tasks overrun during run time, the execution budget determined by FMC always outperforms that of IMC. 

Another interesting observation is that the collected run-time service levels are bounded by the red and green lines. This observation matches the FMC execution semantics presented in \secref{sec:sem}. In the $k^{\textit{th}}$ transition phase, the execution budget for \lo jobs will be reduced from $z^{k-1}\cdot{C_i^{LO}}$ to $z^{k}\cdot{C_i^{LO}}$. According to the FMC execution semantics, two cases can be considered:
\begin{itemize}
\item \textbf{Case 1:} Low-criticality jobs that have already exhausted their execution budget of $z^{k}\cdot{C_i^{LO}}$ at the transition point. Such jobs will be suspended immediately. In addition, these suspended jobs should have an execution time of less than $z^{k-1}\cdot{C_i^{LO}}$ by the $k$-th transition point. Otherwise, these jobs would have already been suspended when the system entered $(k-1)-level$ \hi mode. Therefore, the execution time of these jobs will be bounded in $[z^{k}\cdot{C_i^{LO}},z^{k-1}\cdot{C_i^{LO}})$. 
\item \textbf{Case 2:} Low-criticality jobs that have not yet exhausted their execution budget $z^{k}\cdot{C_i^{LO}}$. Such jobs will continue to run until their remaining time budget is used up. Therefore, these jobs will execute up to $z^{k}\cdot{C_i^{LO}}$.
\end{itemize}
From the above two cases, we can conclude that the execution time of these jobs in $k$-level \hi mode is bounded in $[z^{k}{C_i^{LO}},z^{k-1}{C_i^{LO}})$, as clearly shown in \figref{fig:gd}.

\subsection{Run-time complexity}
According to~\cite{Baruah2012}, for a task set containing $n$ tasks, the classic EDF-VD algorithm has a run-time complexity of $O(log \,n)$ per event for job arrival, job completion, and mode switching. Compared with EDF-VD~\cite{Baruah2012}, FMC-EDF-VD needs to implement only one additional operation during mode switching, that is, tuning the service levels for \lo tasks according to the specified strategy. For the uniform tuning strategy, the uniform service level $z_k$ can be directly computed with a complexity of $O(1)$ according to \thmref{thm:aa}. For the dropping-off strategy, the dropping-off task can be determined via binary search with a complexity of $O(log\,n)$. Therefore, FMC-EDF-VD still has a run-time complexity of $O(log\,n)$ per event. 


\vskip -2em
\section{Evaluation}
\label{sec:eva}
In this section, simulation experiments are presented to evaluate the performance of FMC. Our experiments are based on randomly generated MC tasks. We randomly generate task sets using the same approach as in \cite{Baruah2012,gu}. The various parameters are set as follows:
\begin{itemize}
\item The period $T_i$ of each task is an integer drawn uniformly at random from $[20,150]$.
\item For each task $\tau_i$, \lo utilization $u_i^{LO}$ is a real number drawn at random from $[0.05, 0.15]$.
\item $R_i$ denotes the ratio of $u_i^{HI}/u_i^{LO}$, which is a real number drawn uniformly at random from $[2,3]$.
\item pCri denotes the probability that a task $\tau_i$ is a \hi task, and we set this probability to $0.5$. If $\tau_i$ is a \lo task, then we set $C_i^{LO}=\floor{u_i^{LO}\cdot{T_i}}$. Otherwise, we set $C_i^{LO}=\floor{u_i^{LO}\cdot{T_i}}$ and $C_i^{HI}=\floor{u_i^{LO}\cdot{}R_i\cdot{}{T_i}}$. 
\end{itemize}
Given the utilization bound $u_B$, we generate one task at a time until the following conditions are both satisfied: (1) $u_B -0.05 \le \max\{u_{LO}^{LO} + u_{HI}^{LO},u_{HI}^{HI}\} \le u_B$. (2) At least 3 \hi tasks have been generated.

The generated task set is evaluated for both off-line schedulability and on-line performance in terms of support for \lo task execution under six different schemes. These schemes include FMC with dropping off strategy proposed in this paper('FMC'), Pfair-based  scheme with using task grouping~\cite{ren}('PF'), component-based scheme~\cite{gu}('COM'), advanced EDF-VD scheduling of IMC systems~\cite{DiLiu}('IMC'), service adaption strategy that decreases the dispatch frequency of \lo tasks based on EDF-VD scheduling~\cite{HGST14a}('SA'), classic EDF-VD scheduling~\cite{Baruah2012}('EDF-VD'). 

The on-line \lo performance is measured as the percentage of finished LC jobs (denoted by PFJ), which is the same quantitative parameter used in~\cite{gu}. PFJ is defined as the percentage of \lo jobs that are successfully finished by their deadlines. Each simulation is run for $10^6$ time units. The execution distribution presented in \cite{Back} is used to compute the probability that a \hi task $\tau_i$ will be executed beyond its \lo WCET. {Due to schedulability performance differences among the compared schemes, the PFJ is obtained only when the taskset is schedulable for all compared schemes. The simulation process is detailed in Appendix \ref{appendix:III}.} 

\vskip -1em
\begin{figure}
\centering
\includegraphics[width=0.8\columnwidth,height=0.4\columnwidth]{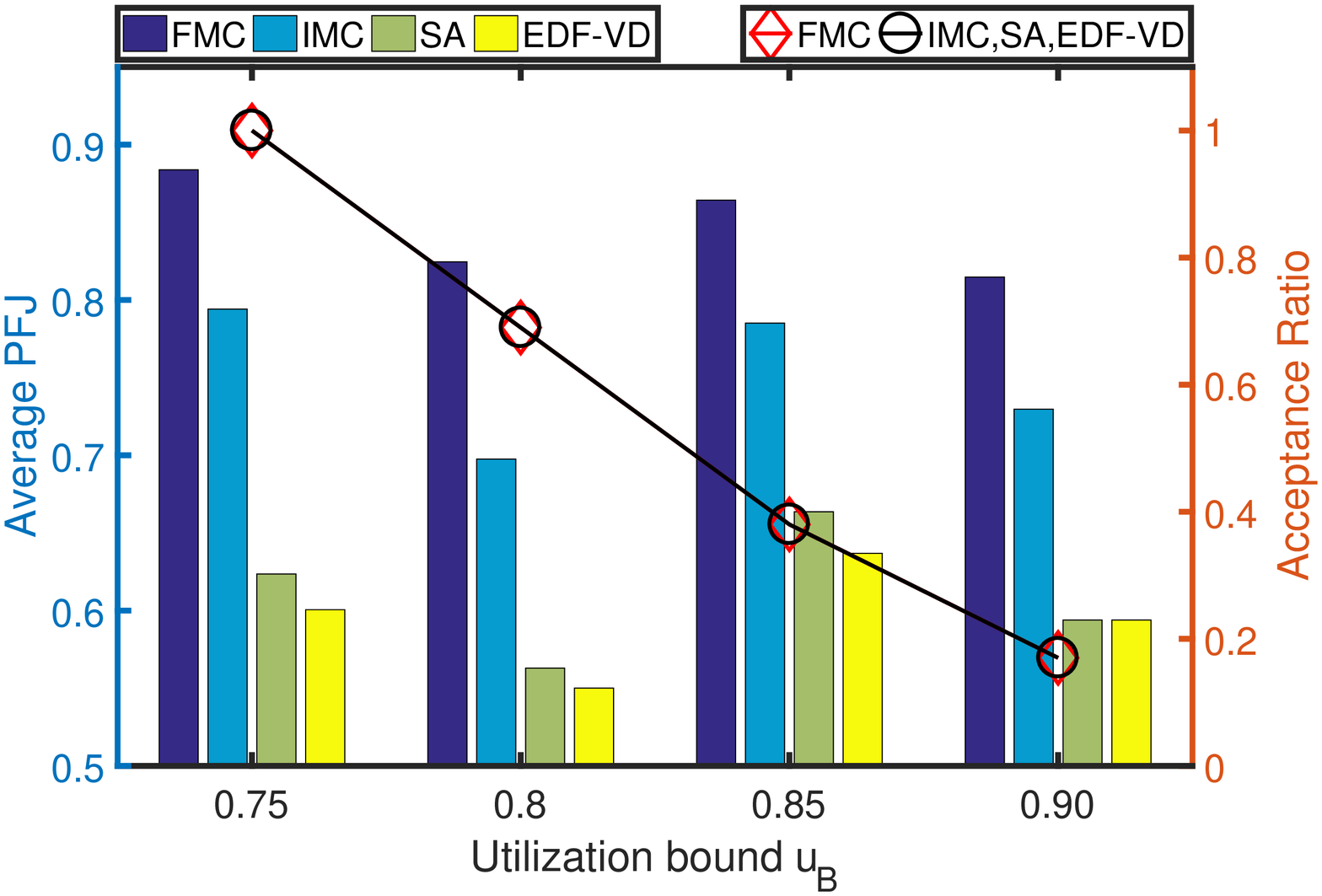}
 \caption{Comparison between FMC and schemes based on the global triggering.}
  \label{fig:res1}
\end{figure}

\subsection{Comparison with schemes based on the global triggering strategy}
First, we demonstrate the effectiveness of FMC compared with the IMC, SA, and classic EDF-VD schemes, which use the global triggering strategy. In these three schemes, \textit{any} overrun will trigger \lo tasks to statically reserve a constant degraded service level. For IMC and FMC, we consider the mandatory utilization $U_{LO}^{man}=0$ for the schedulability test. The schedulability test for the SA scheme~\cite{HGST14a} is a utilization-based test. Therefore, the IMC, SA, and classic EDF-VD schemes have the same schedulability. However, for some schedulable task sets, the SA scheme~\cite{HGST14a} cannot derive a suitable factor $y$ to increase the period of \lo tasks. For this case, we consider $y$ to be infinity, which means that all \lo jobs will be dropped when an overrun occurs.     

For various utilization bounds $u_B\in\{0.75,0.8,0.85,0.9\}$, the average PFJ and system schedulability are compared. The results are shown in \figref{fig:res1}. The left axis shows the PFJ values achieved for \lo tasks, represented by the bar graphs, and the right axis shows the acceptance ratios, represented by the line graphs. From \figref{fig:res1}, we can observe the following trends: (1) FMC consistently outperforms the three other schemes in terms of support for \lo task execution. This is expected because schemes that use the global triggering strategy always consider the worst-case overrun workload, resulting in waste of unnecessary resources. By contrast, FMC can allocate resources based on the true overrun demands. (2) Compared with these three schemes based on the global triggering strategy, FMC achieves almost the same acceptance ratio. This means that FMC can achieve higher on-line \lo performance with negligibly reduced schedulability performance. 
\vskip -2em

\subsection{Comparison with the Pfair- and component-based schemes}
Next, we will experimentally compare our approach to  Pfair- and component-based schemes: PF~\cite{ren} and COM~\cite{gu}. For the component-based scheme COM~\cite{gu}, we use the same experiment setting as~\cite{gu} and consider a two-component system with a \hi component $\mathbb{C_{H}}$ and a \lo component $\mathbb{C_{L}}$. All the \hi tasks are allocated to $\mathbb{C_{H}}$. Each \lo task can be allocated to either $\mathbb{C_{H}}$ or $\mathbb{C_{L}}$\footnote{Since the work presented in~\cite{gu} does not specify the settings for \lo tasks, we specify one probability to determine if a \lo task is allocated to $\mathbb{C_{H}}$. Here, we chosen a relatively low value for probability and set it as 0.25.}. Since the performance of the scheme presented in~\cite{gu} depends on a tolerance limit $TL$, we generate the result of component-based scheme ~\cite{gu} for various values of the tolerance limit $TL=\{0,\floor{0.25|H|},\floor{0.5|H|},\floor{0.75|H|},|H|\}$, where $|H|$ denotes the number of \hi tasks. For the Pfair-based scheme~\cite{ren}, a two-phased scheduling strategy\footnote{the version used for evaluation in~\cite{ren}.} is implemented for comparison. 
\begin{figure}
    \centering
    \subfigure[$u_B=0.75$]{
        \includegraphics[width=0.8\columnwidth,height=0.41\columnwidth]{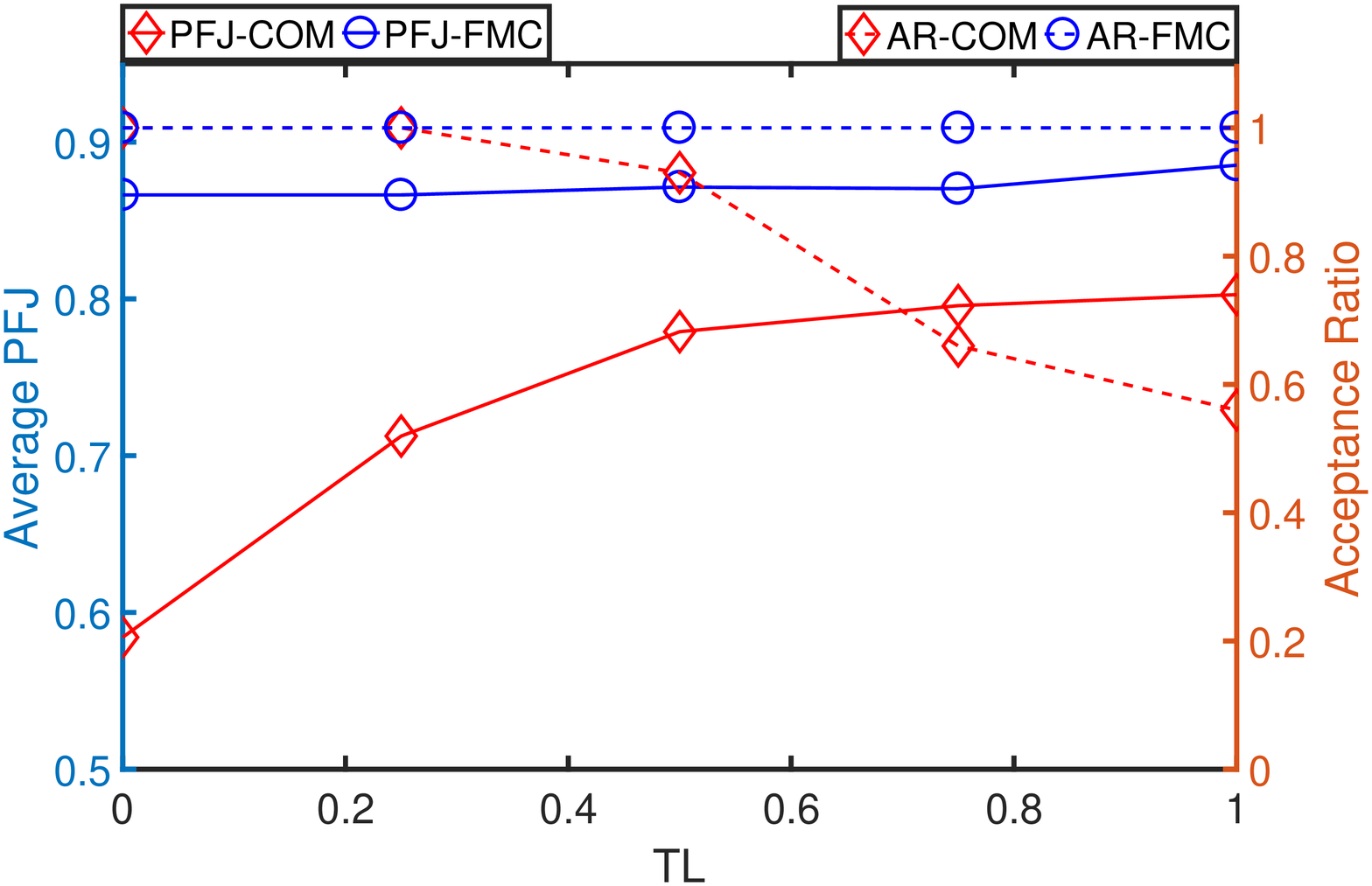}
        \label{fig:08}
    }
    \subfigure[$u_B=0.85$]{
        \includegraphics[width=0.8\columnwidth,height=0.41\columnwidth]{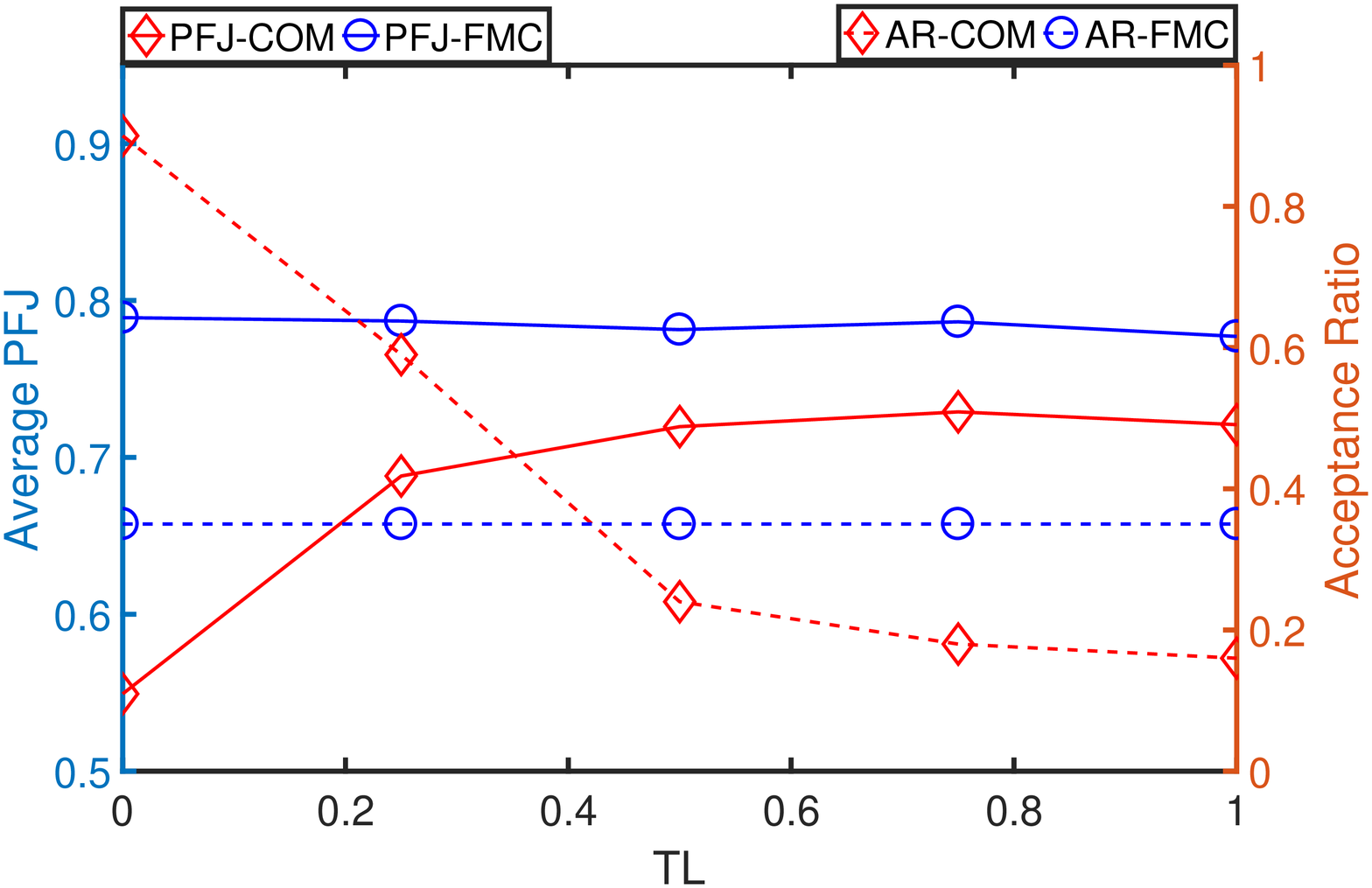}
        \label{fig:09}
    }
		\vskip -0.5em
    \caption{Comparison between component-based scheme and FMC.}
		\label{fig:component}
		\vskip -1em
\end{figure}
 
\noindent\textbf{Comparison with component-based scheme COM~\cite{gu}}: The performance results of COM and FMC are presented in \figref{fig:08} and \figref{fig:09} with different settings on $u_B$. In these figures, x-axis denotes the varying value of $TL$, whereas the left and right y-axis present the average PFJ and acceptance ratio, respectively. As shown in \figref{fig:component}, FMC consistently outperforms COM in terms of support for \lo execution. This performance gain is achieved by the fact that COM adopts pessimistic dropping-off strategies in internal and external mode-switch levels. In COM, all \lo tasks in $\mathbb{C_{H}}$ will be abandoned once \textit{any} overrun occurs and thus results in resource under-utilization. As a comparison, FMC drops off \lo tasks as the demand and therefore can achieve better execution support for \lo tasks. Besides, we can observe that there is a performance trade-off between PFJ and acceptance ratio in COM. The reason for this trend is that a higher $TL$ in COM requires additional resources to support \lo executions but generally implies lower schedulability, and the converse also holds. When we consider the TL configuration on which the same scheduability performance can be achieved by FMC and COM, FMC can support more than 25\% and 15\% \lo tasks to finish the deadline compared to COM under different $u_B$ settings, respectively. 

\begin{figure}
\centering
\includegraphics[width=0.8\columnwidth,height=0.4\columnwidth]{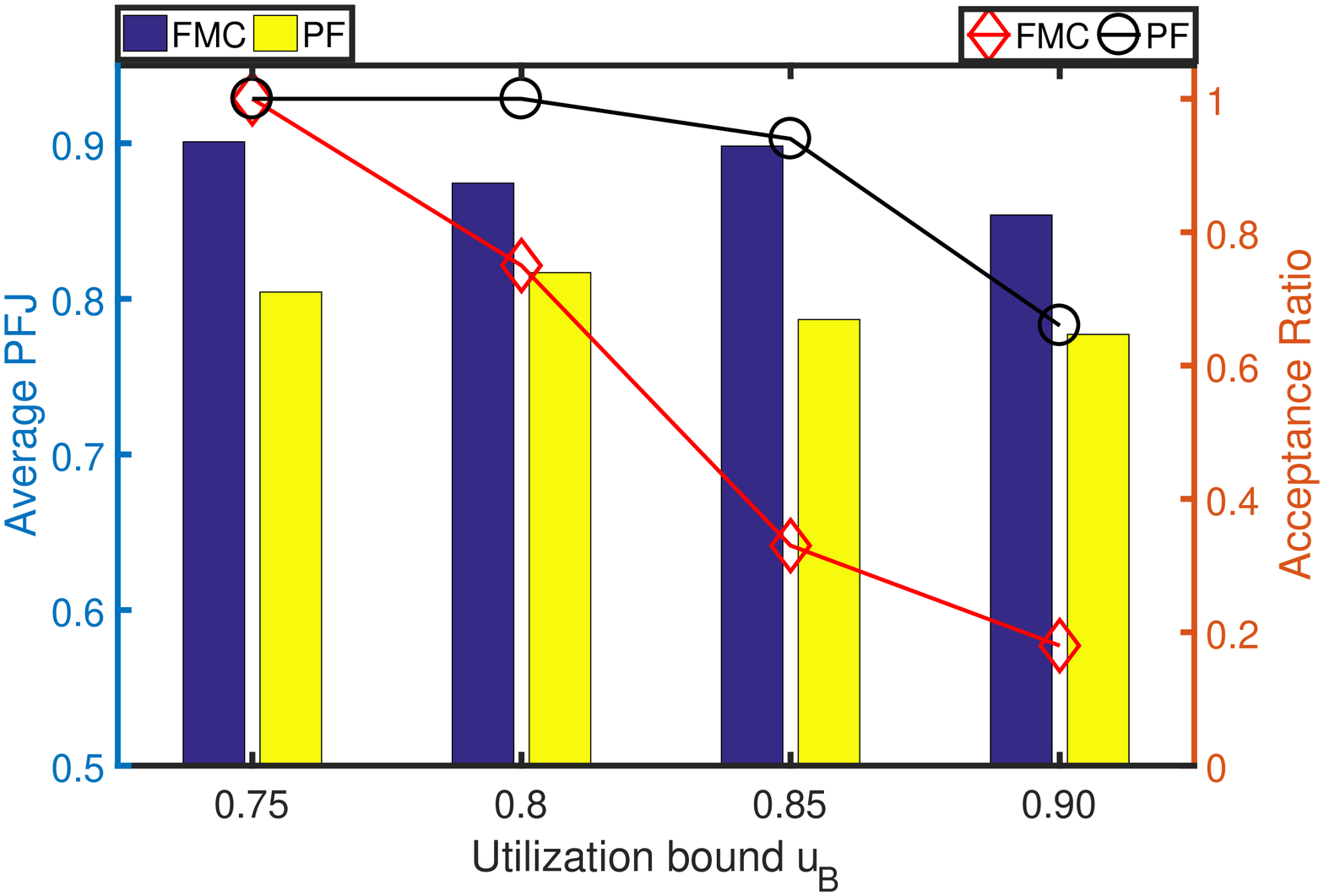}
 \vskip -0.5em
 \caption{Comparison between Pfair-based scheme and FMC.}
  \label{fig:TG}
	\vskip -1em
\end{figure}   

\noindent\textbf{Comparison with Pfair-based scheme PF~\cite{ren}}: \figref{fig:TG} shows the compared results for FMC and PF. Compared with PF, FMC can achieve a better execution support for \lo tasks but with inferior schedulability, as shown in \figref{fig:TG}. The reason for the gain in \lo task execution support is that the Pfair scheduling tends to evenly distribute the quanta of tasks over time, resulting in more unfinished jobs at mode-switching points. 

\begin{figure}
\centering
\includegraphics[width=0.8\columnwidth,height=0.41\columnwidth]{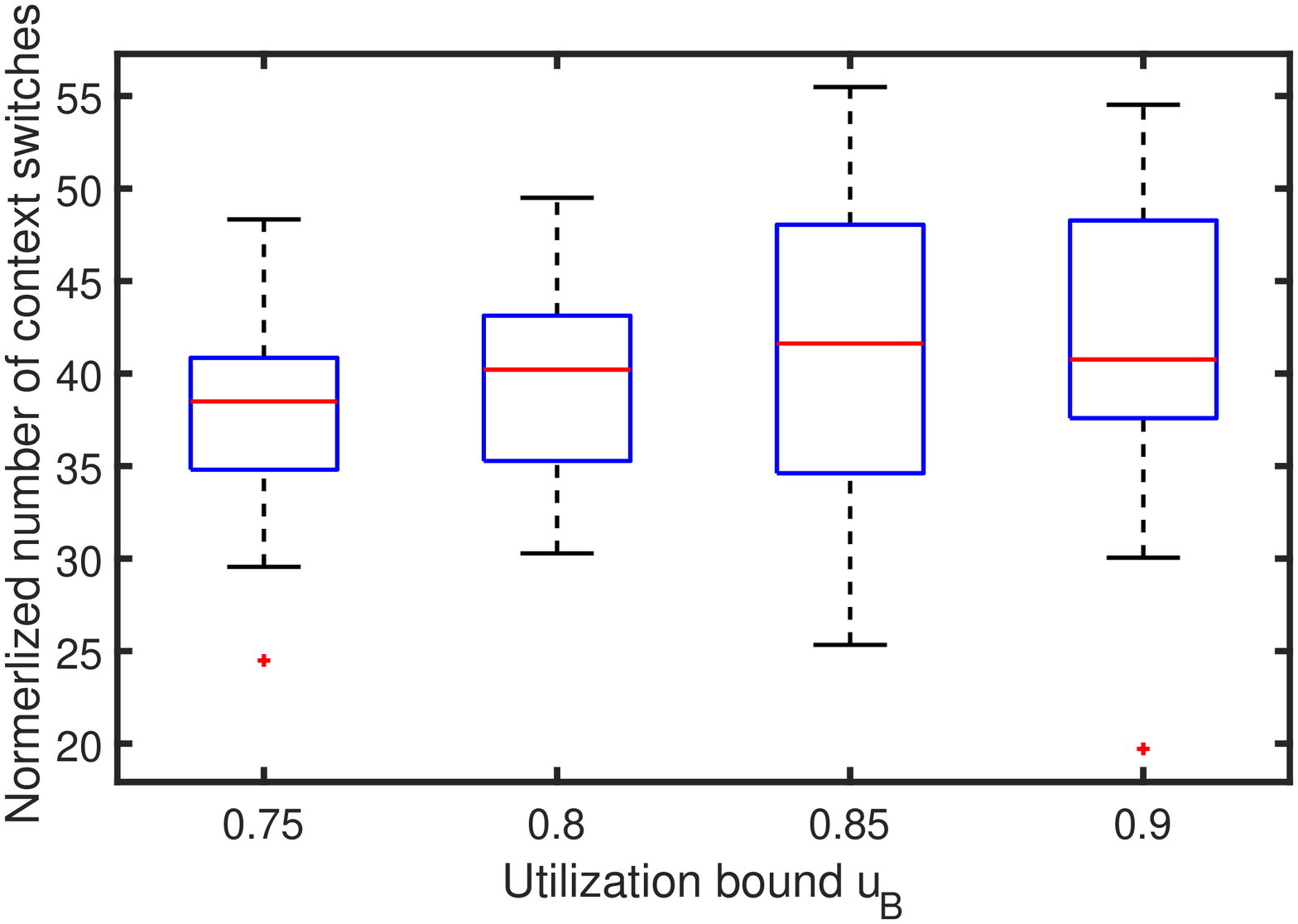}
\vskip -0.5em
 \caption{Context switches for FMC and Pfair-based scheme.}
  \label{fig:switch}
	\vskip -1em
\end{figure}        

{Regarding schedulability inferiority, we mainly attribute this expected inferiority to the theoretical optimality of Pfair scheduling in terms of schedulability performance~\cite{Pfairs}. In fact, this optimality is achieved at the cost of a high scheduling overhead by quantum-length sub-tasks partitioning and the enforcement of proportional progress. In fact, this schedulability deficit of FMC can be compensated by significantly reduced context-switching overheads compared with PF. Here, we present simulation results to show the compared context-switch numbers. 
\figref{fig:switch} presents the number of context switches for the Pfair-based scheme, which is normalized with respect to the number for FMC. The results confirm
the significant reduction of context switches by FMC. The Pfair-based scheme requires 38.0 to 41.3 times the number of context switches required in FMC for different utilization settings.}
\vskip -2em

\subsection{Graceful service level degradation}
In the case study presented above, we have demonstrated the gradual service level degradation property of FMC, that is, the degradation in the service levels for \lo tasks as the number of mode switch increases. Now, we validate this trend in a generic simulation. The uniform tuning strategy is applied to randomly generated task sets. We use the task generator introduced above to generate 100 task sets in which $u_B$ is randomly selected from $[0.75,0.9]$. A generated task set is accepted for simulation when the following two additional conditions are satisfied: (1) the task set can be scheduled by FMC, and (2) the task set contains 5 \hi tasks. The degraded \lo jobs under various task sets are classified according to the number of mode switches.    
\begin{figure}
\centering
\includegraphics[width=0.8\columnwidth,height=0.41\columnwidth]{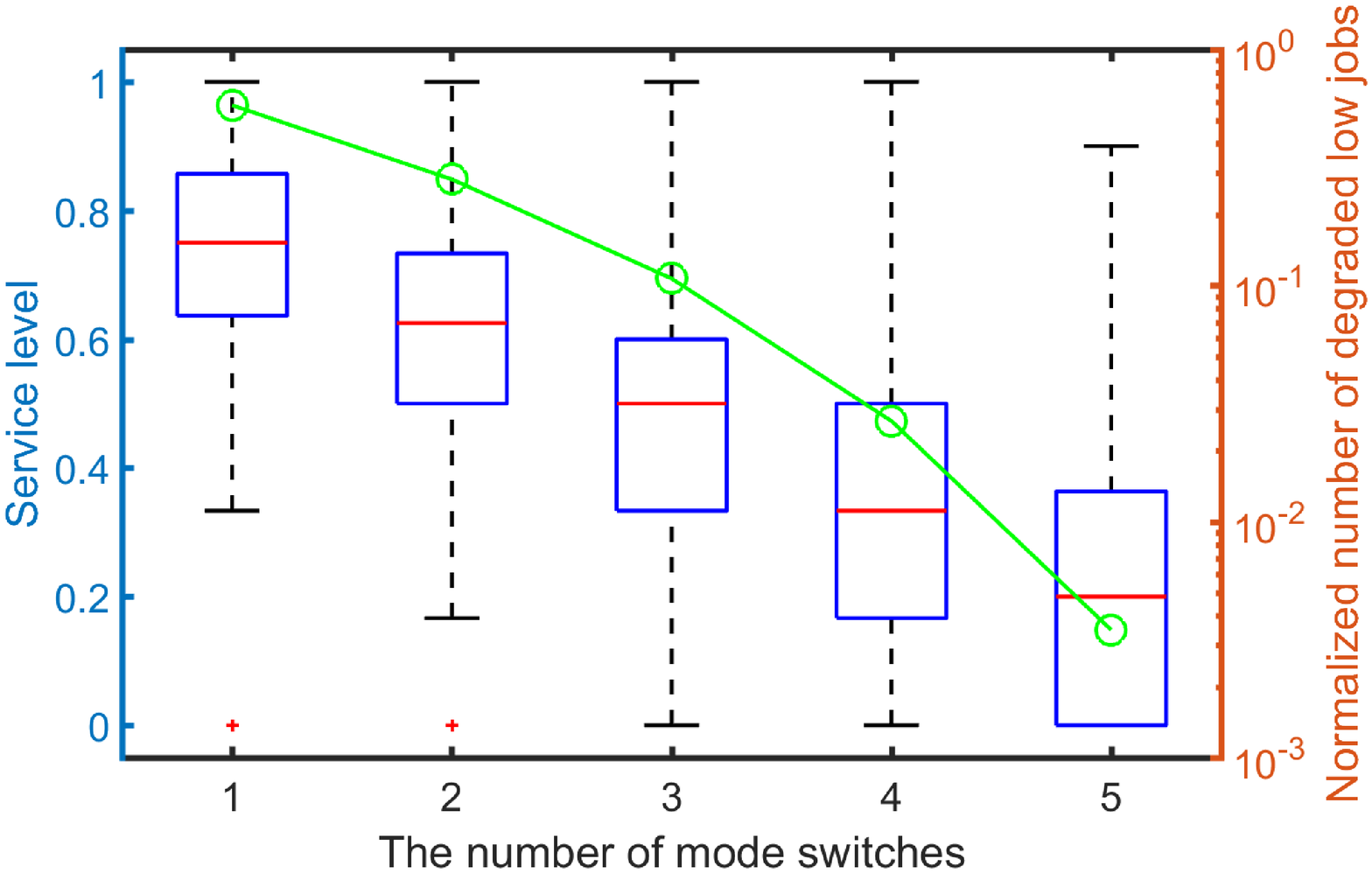}
\vskip -0.5em
 \caption{Service degradation in generic simulation.}
  \label{fig:gs}
	\vskip -1em
\end{figure}

The simulation results are shown in \figref{fig:gs}. The left and right y-axis present the service level and the normalized number of service-degraded \lo jobs, respectively. To reveal the distribution of service levels for service-degraded \lo jobs, the service levels are represented in the form of box-whisker plots. The results shown in \figref{fig:gs} confirm the observations made in \secref{sec:cs}. For almost all \lo task jobs, the graceful degradation property is clearly demonstrated except for a few corner cases. Furthermore, the results in terms of the percentages of service-degraded \lo jobs also confirm that the likelihood of all high-criticality tasks exhibiting the high-criticality behavior is very low. Only 0.35\% of service-degraded \lo jobs are affected by this worst-case overrun scenario. By contrast, 96.9\% of service-degraded \lo jobs are impacted by mode-switch scenarios with mode switches$\leq{}3$. For this vast majority of cases, FMC needs to allocate additional resources to only a subset of the high-criticality tasks based on their demands and therefore can provide better and more graceful service degradation. 
\vskip -2em

\section{Conclusion and future work}
\label{sec:conclusion}
Most previous theoretical work on scheduling in mixed-criticality systems has adopted impractical assumptions: once any \hi task overruns, all \lo tasks are suspended and all other \hi tasks are required to exhibit \hi behaviors. 
In this paper, we propose a more flexible MC model
(FMC) with EDF-VD scheduling, in which the above issues are addressed. 
In this model, 
the transitions of all \hi tasks are independent and the service levels of \lo tasks can be adaptively tuned in accordance with the true overruns of the \hi tasks. A utilization-based schedulability test condition is successfully derived for the FMC systems.  Numerical results are presented to illustrate the improved service levels for \lo tasks during run time.

{For the next step, we are interested in implementing the proposed approach on real-time operating system and evaluating its performance. Furthermore, another interesting future work includes investigations on: (1) integrating of FMC and fault tolerance techniques to develop optimal resource allocation strategies for assurances against different types of faults; (2) integrating the slack reclamation schemes into FMC for further performance improvement.}
\vskip -2em
 
{
\footnotesize
\bibliographystyle{abbrv}
\bibliography{mc_1}
}

\newpage
\vskip -6em
\begin{appendices}
\section{Derivation protocol}
\label{appendix:II}
This section presents the detailed proofs for the derivation protocol presented in \eqnref{eq:dp1} and \eqnref{eq:dp2}. The rules for deriving \textbf{intermediate} upper bounds for different execution scenarios are specified as follows: 
\begin{rule0}
\label{rule:1}
If no $k$-\textit{carry-over job} of \lo task $\tau_i$ exists at the $k^{\textit{th}}$ mode-switching point $\hat{t}^k$, then the \textbf{intermediate} upper bound $\sups{\eta_i^k(0,t_f)}$ is computed as 
\begin{align}
\sups{\eta_i^k(0,t_f)}=\sups{\eta_i^k(0,d_i^{l})}+(t_f-\hat{t}^k)\cdot{}z_i^k\cdot{}u_i^{LO}
\end{align}
where $d_i^{l}$ denotes the absolute deadline of the last job of $\tau_i$ during $[0,t_f]$.  
\end{rule0}
\begin{proof}
Since no $k$-\textit{carry-over job} of $\tau_i$ exists at $\hat{t}^k$, we know that $d_i^{l}<\hat{t}^k$. Therefore, $\sups{\eta_i^k(0,d_i^{l})}$ is sufficient to bound $\eta_i^k(0,t_f)$. Since $(t_f-\hat{t}^k)\cdot{}z_i^k\cdot{}u_i^{LO}>0$, we know that $\sups{\eta_i^k(0,d_i^{l})}+(t_f-\hat{t}^k)\cdot{}z_i^k\cdot{}u_i^{LO}$ is also an upper bound.                 
\end{proof}
\vskip -2em
\begin{rule0}
\label{rule:2}
If a $k$-\textit{carry-over job} of \lo task $\tau_i$ exists at the $k^{\textit{th}}$ mode-switching point $\hat{t}^k$, then the \textbf{intermediate} upper bound $\sups{\eta_i^k(0,t_f)}$ is computed as
\begin{align}
\sups{\eta_i^k(0,t_f)}=\sups{\eta_i^k(0,d_i^{k})}+(t_f-d_i^{k})\cdot{}z_i^k\cdot{}u_i^{LO}
\end{align}  
\end{rule0}
\begin{proof}
$\sups{\eta_i^k(0,t_f)}$ can be calculated as $\sups{\eta_i^k(0,d_i^{k})}+\sups{\eta_i^k(d_i^{k},t_f)}$. Since $(t_f-d_i^{k})\cdot{}z_i^k\cdot{}u_i^{LO}$ is sufficient to bound $\eta_i^k(d_i^{k},t_f)$ according to \propref{prop:x}, we know that $\sups{\eta_i^k(0,d_i^{k})}+(t_f-d_i^{k})\cdot{}z_i^k\cdot{}u_i^{LO}$ can be used as an upper bound.               
\end{proof}
\vskip -2em     
\begin{rule0}
\label{rule:3}
For a $k$-\textit{carry-over job} of \lo task $\tau_i$, if $\eta_i^k(a_i^k,\hat{t}^{k})\neq{0}$, then the \textbf{intermediate} upper bound $\sups{\eta_i^k(a_i^k,d_i^{k})}$ will be computed as follows: 
\begin{align}
\sups{\eta_i^k(a_i^k,d_i^{k})}={(d_i^k-a_i^k)\cdot{z_i^{k-j}}\cdot{u_i^{LO}}}
\end{align}
where $z_i^{k-j}$ is the service level after the last mode switch occurring before $a_i^k$.  
\end{rule0}

\begin{proof}
According to \figref{fig:prop}, $j$ mode switches may occur during the interval $(a_i^k, \hat{t}^k)$. Recall the assumption $z_i^k\le{z_i^{k-1}}$ ($\forall k$) made in \secref{sec:model}; based on this assumption, we have $z_i^{k-j}\ge{}z_i^{k-(j-1)}\ge{}\cdots\ge{}z_i^{k}$. When the system switches modes at $\hat{t}^{k_0}$ ($k-(j-1)\le{k_0}\le{k}$), the execution budget will be reduced from $z_i^{k_0-1}\cdot{c_i^{LO}}$ to $z_i^{k_0}\cdot{c_i^{LO}}$. Therefore, the maximum cumulative execution is achieved when the $k$-\textit{carry-over job} has completed its $z_i^{k-j}\cdot C_i^{LO}$ execution before time instant $\hat{t}^{k-(j-1)}$, which is the time of the first mode switch occurring after $a_i^k$. Therefore, in this case, the $k$-\textit{carry-over job} can be bounded by $(d_i^k-a_i^k)\cdot{z_i^{k-j}}\cdot{u_i^{LO}}$. 
\end{proof}
\vskip -4em
\begin{rule0}
\label{rule:4}
For a $k$-\textit{carry-over job} of \lo task $\tau_i$, 
if $\eta_i^k(a_i^k,\hat{t}^{k})={0}$, the \textbf{intermediate} upper bound $\sups{\eta_i^k(a_i^k,d_i^{k})}$ will be computed as follows: 
\begin{align}
\sups{\eta_i^k(a_i^k,d_i^{k})}={(d_i^k-a_i^k)\cdot{z_i^{k}}\cdot{u_i^{LO}}}
\end{align}
\end{rule0}
\begin{proof}
Since the $k$-\textit{carry-over job} has not been executed before $\hat{t}^{k}$ ($\eta_i^k(a_i^k,\hat{t}^{k})={0}$), it must exhaust its execution budget $z_i^k\cdot{}C_i^{LO}$ after the mode-switching time instant $\hat{t}^k$. Therefore, the maximum cumulative execution at the current time can be found to be $(d_i^k-a_i^k)\cdot{z_i^{k}}\cdot{u_i^{LO}}$.
\vskip -8em
\end{proof}
\vskip -8em
\section{The derivation of $N^{k}_{\gamma}$ in \eqnref{eq:main:0}}
\label{appendix:I}
The following is the detailed derivation of the upper bound on the total cumulative execution time $N^{k}_{\gamma}$.
\begingroup
\allowdisplaybreaks
\begin{scriptsize}
\begin{align}
&N^{k}_{\gamma}=N^{k}_{\gamma_{LO}}+N^{k}_{\gamma_{HI}^{HI}(\hat{t}^{k})}+N^{k}_{\gamma_{HI}^{LO}(\hat{t}^{k})} \nonumber \\
\le{}&\underbrace{\sum_{\tau_i\in\gamma_{LO}}\bigg(t_fu_i^{LO}+\sum_{j=1}^{k}{((t_f-a_{\hat{t}^{j}})(1-x)(z_i^{j}-z_i^{j-1})u_i^{LO})}\bigg)}_{N^{k}_{\gamma_{LO}}}  \nonumber \\
&+\underbrace{\sum_{j=1}^{k}{(a_{\hat{t}^j}\cdot{}u_{\hat{t}^j}^{LO}+(t_f-a_{\hat{t}^j})\cdot{}u_{\hat{t}^j}^{HI})}}_{N^{k}_{\gamma_{HI}^{HI}(\hat{t}^{k})}} 
+\underbrace{\sum_{\tau_i\in\gamma_{HI}^{LO}}{\frac{t_f}{x}\cdot{}u_{i}^{LO}}}_{N^{k}_{\gamma_{HI}^{LO}(\hat{t}^{k})}} \nonumber \\
=&\underline{t_fu_{LO}^{LO}}+\sum_{j=1}^{k}{(t_f-a_{\hat{t}^{j}})(1-x)(u_{LO}^{j}-u_{LO}^{j-1})}\nonumber \\
&+\sum_{j=1}^{k}{(a_{\hat{t}^j}\cdot{}u_{\hat{t}^j}^{LO}+(t_f-a_{\hat{t}^j})\cdot{}u_{\hat{t}^j}^{HI})} 
+\sum_{\tau_i\in\gamma_{HI}^{LO}}{\frac{t_f}{x}\cdot{}u_{i}^{LO}} \nonumber \\
=&\ (Consider\ a_0=\frac{\sum_{j=1}^{k}{a_{\hat{t}^{j}}u_{\hat{t}^j}^{LO}}}{\sum_{j=1}^{k}{u_{\hat{t}^j}^{LO}}})   \nonumber \\
&\underline{(t_f-a_0)u_{LO}^{LO}+a_0u_{LO}^{LO}}+\sum_{j=1}^{k}{(t_f-a_{\hat{t}^{j}})(1-x)(u_{LO}^{j}-u_{LO}^{j-1})}\nonumber \\
&+\sum_{j=1}^{k}{(a_{\hat{t}^j}\cdot{}u_{\hat{t}^j}^{LO}+(t_f-a_{\hat{t}^j})\cdot{}u_{\hat{t}^j}^{HI})} 
+\sum_{\tau_i\in\gamma_{HI}^{LO}}{\frac{t_f}{x}\cdot{}u_{i}^{LO}} \nonumber \\
\le&\ (Since\ u_{LO}^{LO}+\frac{u_{HI}^{LO}}{x}\le{1}\ and\ x<1) \nonumber \\
&(t_f-a_0)u_{LO}^{LO}+\underline{a_0(1-\frac{u_{HI}^{LO}}{x})}+\sum_{j=1}^{k}{(t_f-a_{\hat{t}^{j}})(1-x)(u_{LO}^{j}-u_{LO}^{j-1})}\nonumber \\
&+\sum_{j=1}^{k}{(\underline{\frac{a_{\hat{t}^j}}{x}u_{\hat{t}^j}^{LO}}+(t_f-a_{\hat{t}^j})u_{\hat{t}^j}^{HI})} 
+\sum_{\tau_i\in\gamma_{HI}^{LO}}{\frac{t_f}{x}u_{i}^{LO}} \nonumber \\
=&\ (Since\ u_{HI}^{LO}=\sum_{j=1}^{k}{u_{\hat{t}^j}^{LO}}+\sum_{\tau_i\in\gamma_{HI}^{LO}}{u_{i}^{LO}}) \nonumber \\
&t_f+(t_f-a_0)(u_{LO}^{LO}-1)+\sum_{j=1}^{k}{(t_f-a_{\hat{t}^{j}})(1-x)(u_{LO}^{j}-u_{LO}^{j-1})}\nonumber \\
+&\underline{\sum_{j=1}^{k}{\frac{a_{\hat{t}^j}u_{\hat{t}^j}^{LO}-a_0u_{\hat{t}^j}^{LO}}{x}}}+\sum_{j=1}^{k}{(t_f-a_{\hat{t}^j})u_{\hat{t}^j}^{HI}} 
+\underline{\sum_{\tau_i\in\gamma_{HI}^{LO}}{\frac{t_f-a_0}{x}u_{i}^{LO}}} \nonumber \\
=&\ (Since\ \sum_{j=1}^{k}{(a_{\hat{t}^j}u_{\hat{t}^j}^{LO}-a_0u_{\hat{t}^j}^{LO})}=0)\nonumber \\
&t_f+(t_f-a_0)(u_{LO}^{LO}-1)+\sum_{j=1}^{k}{(t_f-a_{\hat{t}^{j}})(1-x)(u_{LO}^{j}-u_{LO}^{j-1})}\nonumber \\
&+\sum_{j=1}^{k}{(t_f-a_{\hat{t}^j})u_{\hat{t}^j}^{HI}} +\sum_{\tau_i\in\gamma_{HI}^{LO}}{\frac{t_f-a_0}{x}u_{i}^{LO}} \nonumber \\
=&\ (Since\ t_f-a_0=\frac{\sum_{j=1}^{k}(t_f-a_{\hat{t}^j})u_{\hat{t}^j}^{LO}}{\sum_{j_0=1}^{k}u_{\hat{t}^{j_0}}^{LO}})\nonumber \\
&t_f+\sum_{j=1}^{k}(t_f-a_{\hat{t}^j})\bigg(\frac{u_{\hat{t}^j}^{LO}}{\sum_{j_0=1}^{k}u_{\hat{t}^{j_0}}^{LO}}(u_{LO}^{LO}-1)+(1-x)(u_{LO}^{j}-u_{LO}^{j-1}) \nonumber \\
&+u_{\hat{t}^j}^{HI}+\frac{u_{\hat{t}^j}^{LO}}{\sum_{j_0=1}^{k}u_{\hat{t}^{j_0}}^{LO}}\sum_{\tau_i\in\gamma_{HI}^{LO}}\frac{u_{i}^{LO}}{x}\bigg) \nonumber\\
=&\ (Since\ u_{HI}^{LO}=\sum_{j=1}^{k}{u_{\hat{t}^j}^{LO}}+\sum_{\tau_i\in\gamma_{HI}^{LO}}{u_{i}^{LO}})\nonumber \\
&t_f+\sum_{j=1}^{k}(t_f-a_{\hat{t}^j})\bigg(u_{\hat{t}^j}^{HI}-\frac{u_{\hat{t}^{j}}^{LO}}{x}+(1-x)(u_{LO}^{j}-u_{LO}^{j-1}) \nonumber \\
&+\frac{u_{\hat{t}^j}^{LO}}{\sum_{j_0=1}^{k}u_{\hat{t}^{j_0}}^{LO}}(u_{LO}^{LO}-1+\frac{u_{HI}^{LO}}{x})\bigg) \nonumber\\
\le&\ (Since\ u_{LO}^{LO}-1+\frac{u_{HI}^{LO}}{x}\le{0} \ and \ \sum_{j_0=1}^{k}u_{\hat{t}^{j_0}}^{LO}\le{}u_{HI}^{LO})\nonumber \\
&t_f+\sum_{j=1}^{k}(t_f-a_{\hat{t}^j})\bigg(u_{\hat{t}^j}^{HI}-\frac{u_{\hat{t}^{j}}^{LO}}{x}+(1-x)(u_{LO}^{j}-u_{LO}^{j-1}) \nonumber \\
&+\frac{u_{\hat{t}^j}^{LO}}{u_{HI}^{LO}}(u_{LO}^{LO}-1+\frac{u_{HI}^{LO}}{x})\bigg) \nonumber \\
=&t_f+\sum_{j=1}^{k}(t_f-a_{\hat{t}^j})\bigg(u_{\hat{t}^j}^{HI}+(1-x)(u_{LO}^{j}-u_{LO}^{j-1})+\frac{u_{\hat{t}^j}^{LO}}{u_{HI}^{LO}}(u_{LO}^{LO}-1)\bigg) \nonumber
\end{align}
\vskip -2em
\end{scriptsize}
\endgroup
\vskip -2em
\section{Simulation Framework}
\label{appendix:III}
{
In this section, we will present the simulation framework which faithfully emulate real time execution behaviors of mixed criticality tasks. This simulation framework enables us to evaluate various advanced mixed criticality scheduling algorithms and to obtain performance statistics for algorithms evaluation. 
In the simulation, the workload trace of jobs are firstly generated. All the compared approaches are tested by the same workload trace in the simulation, such that the fairness can be guaranteed. During the simulation, the statistics entity will collect data from the simulation to calculate different metrics for the compared algorithms. In our simulation, as long as \lo job do not finish its $C_i^{LO}$, this \lo job will be considered as the job without completion. The main aspects of the simulation process will be described as follows.

\noindent\textbf{Workload Trace Generation}: For the sake of fairness, we generate the workload trace of jobs according to task specification and use this workload trace to uniformly test the compared approaches. In this process, the arrival time, execution time, and overruns of jobs are generated before the simulation. To validate the runtime scheduability in the worst-case scenario, all jobs are released with minimum job-arrival interval and are executed with the worst-case execution time for stress testing. The overruns of each \hi jobs are randomly generated according to the overrun probability, which is computed according to execution distribution presented in \cite{Back}. 

\noindent\textbf{Run-time Simulation}: The system tick is the time unit that causes the scheduler to run and process events. In our simulation, we emulate real time execution behaviors of mixed criticality tasks based on system tick. The possible events are processed according to the following rules:

\begin{itemize}
\item \textbf{Newly-arrived job}: If the system is in \lo mode, the newly-arrived job will be inserted into the queue $Q$ with modified deadline and unmodified execution budgets. Otherwise, \hi jobs with be inserted with unmodified deadline and execution budgets $C_i^H$, while \lo jobs will be inserted with degraded execution budgets $z_i^k\cdot{}C_i^L$. 
\item \textbf{Job completion}: If the job reaches its execution budget, delete the job from the queue $Q$. Record the performance data of the jobs.   
\item \textbf{Preemption and consume time}: Find the job with minimum deadline in $Q$ to execute and consume the current time slot. If the current job is different with the one executed in last time unit, the preemption is identified.    
\item \textbf{Overrun and mode switches}: If \hi tasks overrun its execution budgets $C_i^{LO}$, transit system to \hi mode and update the service level $z_i^k$ for each \lo task.
\item \textbf{Switch back}: If $Q$ is empty, the idle interval is detected. The system will be transit back to \lo mode by resetting $z_i^k=1$ for each \lo task.    
\end{itemize}
Similarly, this simulation framework can be easily extended for other compared approaches by replacing the degradation strategy in the simulation.
\vskip -1em
\section{The screen-shot of submission}
The submission history is shown in \figref{fig:sh}.
\begin{figure}
\centering
\includegraphics[width=0.9\columnwidth,height=0.5\columnwidth]{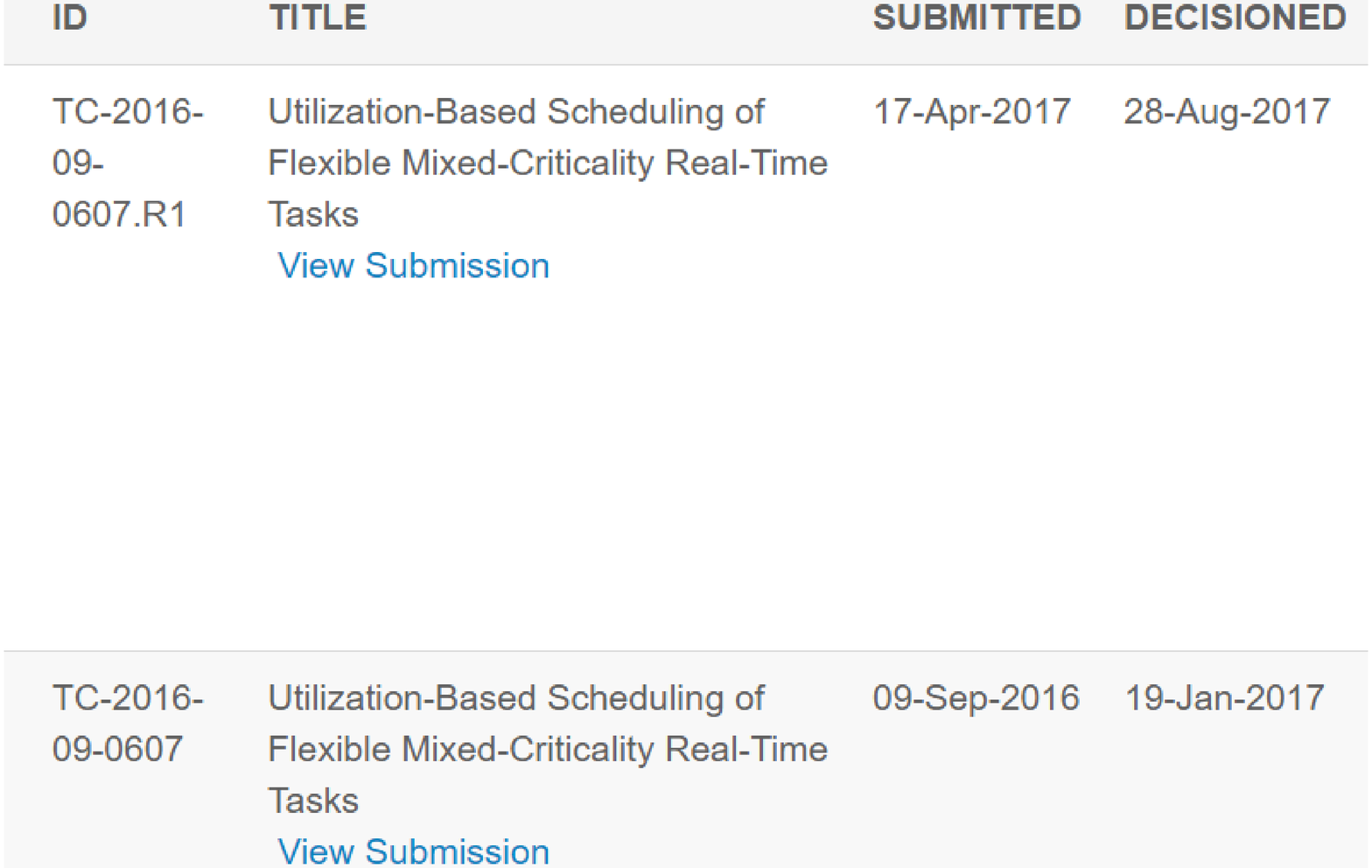}
\vskip -0.5em
 \caption{Submission Information on TC.}
  \label{fig:sh}
	\vskip -1em
\end{figure}

}
\end{appendices}

\end{document}